\def\ps@headings{
\def\@oddhead{\mbox{}\scriptsize\rightmark \hfil \thepage}
\def\@evenhead{\scriptsize\thepage \hfil \leftmark\mbox{}}
\def\@oddfoot{}
\def\@evenfoot{}}
\newtheorem{theorem}{Theorem}[section]
\newtheorem{lemma}[theorem]{Lemma}
\newtheorem{proposition}[theorem]{Proposition}
\newtheorem{remark}[theorem]{Remark}
\newtheorem{result}[theorem]{Lemma}
\newtheorem{problem}{Problem}
\newtheorem{definition}{Definition}
\newenvironment{proof}[1][Proof]{\begin{trivlist}
\item[\hskip \labelsep {\bfseries #1}]}{\end{trivlist}}
\newenvironment{example}[1][Example]{\begin{trivlist}
\item[\hskip \labelsep {\bfseries #1}]}{\end{trivlist}}
\newcommand{\BigO}[1]{\ensuremath{\operatorname{O}\left(#1\right)}}
\begin{document}
\title{Optimally Approximating the Coverage Lifetime of Wireless Sensor Networks}
\author{
\IEEEauthorblockN{Vivek Kumar Bagaria, Ashwin Pananjady}
\IEEEauthorblockA{Department of Electrical Engineering\\
Indian Institute of Technology Madras\\
Email: \{ee10b047, ee10b025\}@ee.iitm.ac.in}
\and
\IEEEauthorblockN{Rahul Vaze}
\IEEEauthorblockA{School of Technology and Computer Science\\
Tata Institute of Fundamental Research\\
Email: vaze@tcs.tifr.res.in}
}
\maketitle

\begin{abstract}
We consider the problem of maximizing the lifetime of coverage (MLCP) of targets in a wireless sensor network with battery-limited sensors. We first show that the MLCP cannot be approximated within a factor less than $\ln n$ by any polynomial time algorithm, where $n$ is the number of targets. This provides closure to the long-standing open problem of showing optimality of previously known $\ln n$ approximation algorithms.
We also derive a new $\ln n$ approximation to the MLCP by showing a $\ln n$ approximation to the maximum disjoint set cover problem (DSCP), which has many advantages over previous MLCP algorithms, including an easy extension to the $k$-coverage problem. We then present an improvement (in certain cases) to the $\ln n$ algorithm in terms of a newly defined quantity ``expansiveness" of the network.
For the special one-dimensional case, where each sensor can monitor a contiguous region of possibly different lengths, we show that the MLCP solution is equal to the DSCP solution, and can be found in polynomial time. Finally, for the special two-dimensional case, where each sensor can monitor a circular area with a given radius around itself, we combine existing results to derive a $1+\epsilon$ approximation algorithm for solving MLCP for any $\epsilon >0$.
\end{abstract}
\section{Introduction}
\label{intro}
Wireless sensor networks are deployed for a variety of applications - military, data collection, and health-care, to name a few - and most of these entail monitoring or covering a specific geographic area. Therefore, maximizing the lifetime of coverage in a wireless sensor network with battery-limited sensors is a fundamental and classical problem, well studied in literature \cite{cardei2005energy, cardei2006energy, berman2004power, ding2012constant}. 
Typically, a large number of sensors is deployed in a given area, and consequently, many sub-collections of these sensors can cover/monitor all the intended targets. Each such sub-collection of sensors is called a {\it set cover}. 
To maximize the coverage lifetime with the practical constraint of limited battery capacity, we need to find an activity schedule for each sensor (signifying when it must be turned {\it on} or {\it off}) that ensures that all intended targets are covered/monitored for the longest time possible.

Concisely, the maximum coverage lifetime problem (MLCP) is as follows. Given a set of sensors and a set of targets, find an activity schedule for these sensors such that (i) the total time of the schedule is maximized, (ii) all targets are constantly monitored (i.e. at any point of time, at least one of the set covers is active), and (iii) no sensor is used for longer than what its battery allows.

In literature, the MLCP has been approached using two methods. The first method involves solving the maximum disjoint set cover cover problem (DSCP) \cite{bollobas2013cover}. The DSCP finds the maximum number of set covers such that any two set covers are pairwise disjoint. Clearly, sequentially turning on each of the disjoint set covers found by the DSCP provides a feasible solution to MLCP. In \cite{cardei2005improving}, the MLCP was approached using the DSCP. It was proved that the DSCP is NP-complete and shown that the approximation ratio of any polynomial time algorithm to the DSCP has a loose lower bound of 2 (the approximation ratio defines how far from optimum the algorithm's solution is in the worst case). A heuristic algorithm for the DSCP was also provided using an integer program (IP) formulation. In \cite{ahn2011new}, the number of variables and constraints in the IP formulation of \cite{cardei2005improving} was reduced, but the algorithm proposed was still heuristic. Several other heuristic algorithms without provable guarantees have been proposed in literature to solve the MLCP through the DSCP \cite{slijepcevic2001power, lai2007effective, ashouri2012new}. Quite recently, a $\sqrt{n}$-approximation algorithm for the DSCP was proposed in \cite{henna2013approximating} to solve the MLCP, where $n$ is the number of targets to be monitored. However, as shown in \cite{ding2012constant}, the MLCP solution is not always equal to the DSCP solution.

The second method to solve the MLCP uses non-disjoint set covers, i.e. set covers that are not constrained to be disjoint. The optimal solution obtained using this method will exactly match the optimal solution of the MLCP, unlike the DSCP approach. This approach has been used by several papers. Cardei et. al. \cite{cardei2005energy} proved the NP-completeness of the MLCP and formulated it as a linear program (LP). They also provided a few heuristic solutions to the MLCP using non-disjoint set covers. Berman et. al. \cite{berman2004power} also approached the MLCP using non-disjoint set covers, and provided a $1+\ln n$ approximation algorithm to the MLCP by combining the Garg-Koenemann algorithm \cite{garg2007faster} with the $\ln n$ approximation to the minimum weight set cover problem \cite{vazirani2001approximation}. The MLCP has also been approached with additional constraints.
Kasbekar et. al \cite{kasbekar2011lifetime}, considered a variant of the MLCP in which each sensor has information only about its neighbours. They provided a distributed algorithm in which each sensor stays active/inactive depending on the state of its neighbouring sensors, with an $\BigO{\ln n\cdot \ln (nB)}$ approximation ratio, where $B$ is the maximum battery capacity of any sensor.
Zhao et. al. \cite{zhao2008lifetime} solved the MLCP with the additional constraint of connectivity among the sensors and approached the problem using non-disjoint set covers. They posed the modified problem as an LP and obtained a $\ln n$ approximation algorithm.
In \cite{pyun2009power}, the MLCP was modified by taking the energy consumed by data transmission into account. An IP formulation with exponential complexity was proposed, and heuristically solved.

In summary, existing literature on solving the classical MLCP either by the method of disjoint or non-disjoint set covers mainly uses heuristic algorithms \cite{cardei2005energy, slijepcevic2001power, lai2007effective, ahn2011new, cardei2005improving}, and to the best of our knowledge, only \cite{berman2004power} provides provable guarantees on performance for the MLCP without additional constraints.

For the special {\it geometric} case of the MLCP, when each sensor can monitor a circular area around itself with a given radius, a $4+\epsilon$ approximation algorithm was derived in \cite{ding2012constant}, for any $\epsilon > 0$. But in practice, not all coverage problems are geometric, since there could be obstacles or other practical limitations that make the coverage area of any sensor non-convex. In addition, geometric coverage problems need not have circular coverage regions. Therefore, in this paper, we study the general MLCP, as studied by \cite{berman2004power} and \cite{cardei2005energy}.

Even though some approximation algorithms are known for the MLCP, one question that has remained open is how far they are from being optimal. Typically, finding inapproximability results for problems, (i.e. lower bounds on possible approximation ratios) is difficult, and such a result for the MLCP does not exist in prior work. In this paper, for the first time, we show that $\ln n$ is the best possible approximation ratio for the MLCP.

We also derive a new approximation algorithm for the MLCP in this paper by approaching it using the DSCP, and show that it has the best achievable approximation ratio. We also show that it has many advantages over existing algorithms with comparable approximation ratios. 

We state the results of this paper using the following notation and some simple facts. Let $F_{min}$ be the minimum number of sensors that any target is covered by. A target having $F_{min}$ sensors covering it is called a {\it bottleneck} target. With unit battery capacity, $F_{min}$ is an upper bound on the MLCP, since at least one of the sensors containing a bottleneck target must be present in all set covers, and all those sensors can together be used only for time $F_{min}$. Our contributions are as follows.

\begin{enumerate}
\item We show that the MLCP cannot be approximated within a factor less than $\ln n$ in polynomial time unless \linebreak \hbox{$NP \subseteq DTIME(n^{\BigO{\ln\ln n}})$}, which, as with $P=NP$, is widely believed to be false in the theoretical computer science community \cite{gary1979computers}. Thus, we show that the $1+\ln n$ approximation proposed in \cite{berman2004power} is optimal for large $n$. This is the main result of this paper.

\item We propose a polynomial-time approximation algorithm for the DSCP through a suitably defined hypergraph colouring, which returns at least $F_{min}/ \ln n$ disjoint set covers, giving us a $\ln n$ approximation to the DSCP. More importantly, this also gives us a new $\ln n$ approximation algorithm for the MLCP, since with unit battery capacity, operating each of the disjoint set covers for one time unit gives a network lifetime of $F_{min}/\ln n$.
\item We propose a polynomial-time approximation algorithm for the DSCP that returns at least $F_{min}/ \BigO{\ln\Delta_{\tau}}$ disjoint set covers, where $\Delta_{\tau}$ is the expansiveness of any target - the number of other targets with which it is monitored among all sensors - maximized over all targets. This gives us an $\BigO{\ln \Delta_{\tau}}$ approximation algorithm for the MLCP. In certain cases, it is possible that this approximation ratio is better than the worst case ratio of the $\ln n$ obtainable by the hypergraph colouring algorithm of bullet 2.
\item We show that for the one-dimensional case, where each sensor can monitor a contiguous region of possibly different lengths, the MLCP solution is equal to the DSCP solution, and that the MLCP can be formulated as a maximum flow problem on a suitable directed graph, whose solution can be found in polynomial time. This proves a tighter result for the 1-D case than the conjecture in \cite{berman2004power}, which stated that the ratio of optimal solutions of the MLCP and DSCP when sensor coverage areas are convex is upper bounded by $1.5$.

\item For the 2-dimensional geometric case, where each sensor can monitor a circular area around itself with a given radius, we show that a $1+\epsilon$ approximation algorithm exists for any $\epsilon > 0$, using the approach of \cite{berman2004power} together with the $1+\epsilon$ approximation algorithm for finding the minimum weight geometric set cover \cite{mustafa2010improved}.
\end{enumerate}
\section{Preliminaries}
We define a universe of targets $\mathcal{U}=\{1,2,3,\ldots, n\}$, where $n$ is the number of targets. We will hereafter refer to each target as an \emph{element}. Each sensor $i$ can cover a subset of targets $S_i\subseteq \mathcal{U}$, and so the sensors are defined by the multiset $\mathcal{S} = \{S_{1}, S_{2} ,\ldots\}$. Hereon, we use $S_i$ to denote a sensor and call each $S_{i}$ a \emph{subset}. Let each sensor $S_i$ have a battery capacity $b_i$. Since we are interested in monitoring all targets, we define a \emph{set cover} $C \subseteq{\mathcal{S}}$ to be a collection of sensors such that sensors in $C$ cover the universe, i.e., $\bigcup_{S_i\in C}S_i=\mathcal{U}$. The problem is to switch on set covers sequentially so as to prolong the time for which all elements can be monitored (which we call the network lifetime), while ensuring that each sensor is used only for as long as its battery will allow. The formal definition of the MLCP is as follows:
\begin{problem}[MLCP]
Let $\mathcal{C} = \{\,C_1,\ldots,C_m \}$ be the collection of all set covers from $\mathcal{S}$, and $t_i$ be the time for which set cover $C_i$ is switched on. Then the MLCP is to
\begin{eqnarray*}
&&\text{Maximize : } \qquad \sum_{j=1}^m \: t_j \\
&&\text{Subject to } \qquad \sum_{j=1}^{m}C_{ij} t_j \leq b_i, \qquad \forall \; i, \\
&&C_{ij} = 
	\begin{cases}
		\quad 0\text{ if sensor $S_i$ is not in set cover $C_j$}, \\
		\quad 1\text{ if sensor $S_i$ is in set cover $C_j$},
	\end{cases}
\\&&t_j\geq 0 \qquad \forall \; j.
\end{eqnarray*}
\label{def_MLCP}
\end{problem}

We will consider the case where the set $\{b_i\}$ has identical entries $B$, i.e. all battery capacities are equal, which is reasonable. More specifically, if the problem is solved assuming that all battery capacities are equal to 1, it is trivial to see that multiplying all the resulting $t_j$s by $B$ provides the required solution to the MLCP. The case of $b_i\neq b_j$ for some $i\neq j$ will be briefly addressed later, in Remark \ref{LCMbattery}.
\begin{remark}
From here onwards, we consider all sensors to have battery capacity $1$, i.e. $b_i=1$, $\forall$ $i$. \label{battery1}
\end{remark}

\begin{problem}[DSCP]
Given a universe $\mathcal{U}$ and a set of subsets $\mathcal{S}$ as defined above, find as many set covers $C$ as possible such that all set covers are pairwise disjoint (i.e. $C_i\cap C_j=\phi$ $\forall$ $i\neq j$). \label{def_DSCP}
\end{problem}

The DSCP necessitates that any subset can be present in a maximum of one set cover. Note that DSCP=MLCP if $t_j\in \{0,1\}$ $\forall$ $j$ in Problem \ref{def_MLCP} (when $b_i=1$ $\forall$ $i$).

If the number of disjoint set covers is $k$, then using each of the $k$ disjoint set covers for one time unit, clearly, we have an MLCP solution of $k$ with $b_i=1$, $\forall$ $i$. Thus, solving the DSCP provides a feasible solution to the MLCP.
 
However, the optimal solution of the MLCP differs from that of the DSCP as shown in \cite{ding2012constant}, because the optimal solution to the MLCP may not always involve disjoint set covers. For example, let $\mathcal{U}=\{1,2,3\}$ and $\mathcal{S}=\{S_1,S_2,S_3\}$, where $S_{1}=\hspace{-1mm}\{1,2\}$, $S_{2}=\{2,3\}$ and $S_{3}=\{3,1\}$. Clearly, the maximum number of disjoint set covers (and therefore network lifetime) is $1$, while if we operate the sensors as follows: $\{S_{1},S_{2}\}$ for $0.5$ units of time, $\{S_{2},S_{3}\}$ for $0.5$ units of time, and $\{S_{1},S_{3}\}$ for $0.5$ units of time, the lifetime is $1.5$ time units. 

Ironically, however, we show through the proofs of Theorems \ref{lnnharddone} and \ref{MLCPapprox} that in the worst case, the highest network lifetimes obtainable in polynomial time to both the MLCP and DSCP are in fact the same.


Both the DSCP and the MLCP have been shown to be NP-complete, in \cite{cardei2005improving} and \cite{cardei2005energy}, respectively. However, it is possible to \emph{approximate} these problems in polynomial time. A polynomial time approximation algorithm to solve a maximization problem is said to have an approximation ratio $\rho>1$ if it \emph{always} returns a solution greater than $1/\rho$ times the optimal solution. It is said to be a $\rho-$approximation algorithm, or a $\rho$ algorithm.

The main result of this paper is to show that the MLCP cannot have a polynomial time algorithm with an approximation ratio of less than $\ln n$. This solves a long-standing open problem of finding the hardness of the MLCP. In addition, we propose a new algorithm to solve the MLCP by finding an approximation algorithm for the DSCP with optimal approximation ratio. While other algorithms exist which also have a $\ln n$ approximation ratio, they do not use disjoint set covers, which we show to have many advantages.

\subsection{Terminology}
\label{terminology}
\textbf{(i)} \textbf{$n :$} Number of elements in the universe $|\,\mathcal{U}|$.
\textbf{(ii)} \textbf{$|\mathcal{S}| :$} Number of subsets. Note how $\mathcal{S}$ has been defined as a multiset. This is because subset repetitions are possible, since multiple sensors may cover the same targets. $|\mathcal{S}|$ is therefore the \emph{total} number of subsets, not the number of distinct subsets.
\textbf{(iii)} \textbf{$R :$} Maximum size of a subset $S_i$ = $\max\limits_i|\,S_i|$
\textbf{(iv)} \textbf{$F_i :$} The \emph{frequency} $F_i$ of any element $i\in \mathcal{U}$ is defined as the number of subsets $S_j\in|\mathcal{S}|$ that it appears in. $F_i=\#\{S_j:i\in S_j\}$.
\textbf{(v)} \textbf{$F_{min} :$} $\min\limits_i\, F_i$.
\textbf{(vi)} \textbf{$F_{max} :$} $\max\limits_i\, F_i$.
\textbf{(vii)} \textbf{$\Delta_{\tau} :$} \emph{Expansiveness} $\tau_i$ of an element $i\in \mathcal{U}$ is defined as the number of other elements $i$ is present with in all subsets in $\mathcal{S}$. An element $a$ is said to be present with an element $b$ if $\exists$ $S_j\in \mathcal{S}$ such that $a,b\in S_j$. $\Delta_{\tau}=\max\limits_i\,\tau_i$.
\subsection{Some Other Useful Definitions}
We now describe the definitions of a few problems that will occur repeatedly in the paper.
\begin{definition}[Dominating Set]
Given a graph $G = (V,E)$, $V' \subseteq V$ is a dominating set of graph $G$ if for any vertex $v \in V$, either \textbf{(i)} $v \in V'$ or \textbf{(ii)} $v$ is connected to a vertex in $V'$ by an edge. \label{domset}
\end{definition}
\begin{definition}[Domatic Partition]
A domatic partition of graph $G$ partitions the graph into sets \linebreak $\{V_{1}, V_{2} ,\ldots, V_{k}\}$, such that each $V_{i}$ is a dominating set of $G$, and $V_{i}, V_{j}$ are disjoint for $i\neq j$. For the sake of convenience, we call each such dominating set $V_{i}$ obtained as a result of a domatic partition a \emph{domatic set}. \label{dompartition}
\end{definition}
\begin{definition}[Domatic Number]
The domatic number of a graph $G$ is the maximum number of domatic sets obtainable through a domatic partition of $G$. \label{domnumber}
\end{definition} 

In \cite{feige2002approximating}, Feige et. al. explored the domatic number problem on general graphs. They presented an approximation algorithm, which we state here as the following Lemma:
\begin{result}[\cite{feige2002approximating}]
There exists a polynomial time algorithm that returns $\big(\delta + 1\big)/\ln |V|$ domatic sets for any graph $G(V,E)$, where $\delta$ is the minimum degree of $G$. \label{lnndomatic}
\end{result}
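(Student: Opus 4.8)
The plan is to prove this by the probabilistic method, analysing a simple random colouring of $V$ and then converting the existence argument into an algorithm. The natural benchmark to keep in mind is that the domatic number is at most $\delta+1$: a vertex $v$ of minimum degree has a closed neighbourhood $N[v]$ of size exactly $\delta+1$, every domatic set must intersect $N[v]$ in order to dominate $v$, and disjointness of the domatic sets then forces there to be at most $\delta+1$ of them. The target $(\delta+1)/\ln|V|$ is therefore within a $\ln|V|$ factor of this bound, and the task is to exhibit that many pairwise-disjoint dominating sets in polynomial time.

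First I would colour each vertex of $G$ independently and uniformly at random with one of $t$ colours, where $t=(1-\epsilon)(\delta+1)/\ln|V|$ for a small fixed $\epsilon>0$, and declare each colour class a candidate domatic set. These classes are automatically pairwise disjoint, so the only thing that can go wrong is that some class fails to dominate some vertex. For a fixed colour $c$ and vertex $v$, colour $c$ is absent from $N[v]$ with probability $(1-1/t)^{|N[v]|}\le (1-1/t)^{\delta+1}\le e^{-(\delta+1)/t}$. Substituting the chosen $t$ gives $e^{-(\delta+1)/t}=|V|^{-1/(1-\epsilon)}$, so a union bound over the $|V|$ vertices shows that a fixed colour class fails to be a dominating set with probability at most $|V|^{-\epsilon/(1-\epsilon)}=o(1)$.

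Next I would pass from one colour to all of them by linearity of expectation rather than by a second union bound, since the latter is precisely what would cost an extra constant factor. Letting $X$ be the number of colour classes that are genuine dominating sets, the previous bound gives $E[\,t-X\,]\le t\,|V|^{-\epsilon/(1-\epsilon)}$, and hence $E[X]\ge (1-o(1))\,t=(1-o(1))(1-\epsilon)(\delta+1)/\ln|V|$. To make this algorithmic I would apply Markov's inequality to the nonnegative quantity $t-X$: with probability $1-o(1)$ at most an $\epsilon$ fraction of the colours fail, so a single random colouring yields at least $(1-\epsilon)t$ disjoint dominating sets with high probability. Since a colour class can be tested for the dominating property in polynomial time, this already gives a Las Vegas algorithm (colour, check, and repeat until success), and the randomness can be removed by the method of conditional expectations, steering each successive vertex colour by a pessimistic estimator for the expected number of non-dominating classes.

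The main obstacle, and the reason the cited theorem is nontrivial, is controlling the constant in front of $(\delta+1)/\ln|V|$. Insisting that \emph{every} colour class dominate and union-bounding over colours as well as vertices forces $t\lesssim (\delta+1)/(2\ln|V|)$, a factor-two loss; the remedy above is to count the good colours in expectation and simply discard the few bad ones, which recovers the constant up to a $(1-o(1))$ factor. A second delicate point is the small-degree regime: when $\delta+1$ is not comfortably larger than $\ln|V|$ the quantity $(\delta+1)/\ln|V|$ drops toward $1$, the tail estimates weaken, and one must either treat this range separately (a single dominating set always exists) or restrict attention to $\delta$ large, which is exactly the regime relevant to the $F_{min}$ bound exploited later in the paper.
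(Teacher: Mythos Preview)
The paper does not prove this lemma; it is quoted from \cite{feige2002approximating} without proof. Your argument is correct and is precisely the random-colouring-plus-discard approach of Feige et al., including the key observation that one should count good colour classes in expectation rather than union-bound over all colours (which would lose a factor of two). The paper in fact reproduces this same argument itself in Section~\ref{lnnapprox} (Theorem~\ref{Theorem1} and its derandomisation) for the dual-hypergraph formulation of the DSCP, with $\delta+1$ replaced by $F_{min}$ and $|V|$ by $n$; your proof sketch lines up with that exposition essentially step for step.
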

Since $\delta+1$ is an upper bound on the domatic number \cite{feige2002approximating}, Lemma \ref{lnndomatic} provides a $\ln |V|$ approximation to the domatic partition problem.

We have laid the basic framework for the rest of the paper. We now prove the hardness of the MLCP.
\section{Hardness of Approximating the MLCP}
\label{MLCPishard}
In this section, we prove one of the main results of this paper - that the MLCP cannot have a polynomial time algorithm which approximates it better than $\ln n$. Before proceeding with the proof itself, we present an overview of approximation and hardness. Note that we will use the term \emph{solution} to mean a feasible solution, and specifically refer to an \emph{optimal solution} when we need to.

\subsection{A note on hardness of approximation}
A problem is said to be \emph{hard to approximate} within a factor $f$, or $f-hard$ to approximate, if the existence of a polynomial time approximation algorithm with approximation ratio $f'<f$ implies certain results in complexity theory that are widely believed to be false (say $P=NP$\;). 

A problem $A$ is said to be reducible to problem $B$ if any instance of problem $A$ can be converted to a particular instance of problem $B$ in polynomial time. The reducibility of an $f-hard$ problem $A$ to a problem $B$ implies that problem $B$ is $f-hard$ to approximate as well.

In the following sections, we use the term \emph{hardness} to mean hardness of approximation. All the hardness results in this section are for problems reduced from the domatic partition problem (Definition \ref{dompartition}). It was shown in \cite{feige2002approximating} that the domatic partition problem is \emph{hard} to approximate within a factor $\ln n$ unless \hbox{$NP \subseteq DTIME\big(n^{\BigO{\ln \ln n}}\big)$}. 

Before getting to the proof of hardness, we define a few terms here for further use in this paper, for the sake of brevity.

\begin{definition}
A problem is said to be $\ln n$ hard if the existence of a $(1-\epsilon)\ln n$ approximation implies that \linebreak \hbox{$NP \subseteq DTIME\big(n^{\BigO{\ln \ln n}}\big)$.}
\end{definition}
\begin{definition}
A problem is said to be \emph{easy}, or \emph{easier} than $\ln n$, if it admits a polynomial-time algorithm with an approximation ratio of $(1-\epsilon)\ln n$ for an $\epsilon>0$.
\end{definition}
\begin{definition}
Problem A is said to be easier than an $f-hard$ problem B if A admits a polynomial time algorithm with an approximation ratio less than $f$.
\end{definition}
Terms such as ``at least as easy" or ``at least as hard" are used as extensions of the above definitions.

The main result of Section \ref{MLCPishard} is summarized in the following Theorem.
\begin{theorem}
The MLCP is $\ln n$ hard to approximate. \label{lnnharddone}
\end{theorem}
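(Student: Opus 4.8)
The plan is to establish the theorem by a polynomial-time reduction \emph{from} the domatic partition problem (Definition \ref{dompartition}), which is already known to be $\ln n$ hard \cite{feige2002approximating}, and then to invoke the reducibility principle stated above: since domatic partition is $f$-hard with $f=\ln n$, exhibiting such a reduction to the MLCP forces the MLCP to be $\ln n$ hard as well.

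First I would describe the reduction. Given an arbitrary instance of the domatic partition problem, namely a graph $G=(V,E)$, I construct an MLCP instance whose universe of targets is $\mathcal{U}=V$ and which has, for each vertex $v\in V$, one sensor $S_v$ equal to the closed neighbourhood of $v$ (that is, $v$ together with all of its neighbours). This construction is clearly polynomial time. Under it, a collection of sensors covers $\mathcal{U}$ precisely when the corresponding set of vertices is a dominating set of $G$, so that set covers of the instance correspond bijectively to dominating sets of $G$, and families of pairwise-disjoint set covers correspond to domatic partitions. Consequently the DSCP optimum of the constructed instance equals the domatic number of $G$. I would also record two quantities that drive the argument: the frequency of a target $v$ is $1+\deg(v)$, so $F_{min}=\delta+1$ where $\delta$ is the minimum degree of $G$, matching the upper bound $\delta+1$ on the domatic number used in Lemma \ref{lnndomatic}; and the MLCP optimum, which allows non-disjoint covers run for fractional times, equals the \emph{fractional} domatic number of $G$.

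Next I would transfer the hardness gap. Feige et al.'s reduction produces, for a parameter near $F_{min}=\delta+1$, a promise gap in which \emph{yes} instances have domatic number close to $\delta+1$ while \emph{no} instances have domatic number smaller by a factor of $(1-\epsilon)\ln n$. On a \emph{yes} instance the chain $\text{domatic number}\le\text{DSCP}\le\text{MLCP}\le F_{min}=\delta+1$ squeezes the MLCP optimum up to $(1-o(1))(\delta+1)$, since its two ends nearly coincide. The difficulty, and the step I expect to be the main obstacle, lies on the \emph{no} side: because the MLCP admits non-disjoint, fractional schedules, its optimum can in principle exceed the integral domatic number, and a priori nothing prevents the fractional domatic number from climbing back up toward $F_{min}$ even when the integral domatic number is small. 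I must therefore show that the $\ln n$ gap survives the passage to the fractional relaxation, i.e. that the fractional domatic number of the \emph{no} instances remains bounded above by roughly $(\delta+1)/((1-\epsilon)\ln n)$; equivalently, that a good fractional schedule would itself contradict the underlying set-cover hardness on which Feige et al.'s construction rests.

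Granting this bound, the reduction is complete: a hypothetical polynomial-time $(1-\epsilon')\ln n$ approximation to the MLCP would, on the constructed instance, return a value exceeding the \emph{no}-case bound on \emph{yes} instances and falling below it on \emph{no} instances, thereby distinguishing the two cases and approximating the domatic number within a factor strictly smaller than $(1-\epsilon)\ln n$. By \cite{feige2002approximating} this implies $NP\subseteq DTIME(n^{\BigO{\ln\ln n}})$, so no such MLCP approximation can exist unless that inclusion holds, which is exactly the assertion that the MLCP is $\ln n$ hard.
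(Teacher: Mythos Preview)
You have correctly identified both the reduction (closed neighbourhoods) and the central obstacle: the MLCP optimum on the reduced instance equals the \emph{fractional} domatic number, not the integral one, and the hardness in \cite{feige2002approximating} is stated for the integral version. But you do not overcome this obstacle. The phrases ``I must therefore show that the $\ln n$ gap survives the passage to the fractional relaxation'' and ``Granting this bound, the reduction is complete'' are exactly where the work lies, and you have left that work undone. Nothing in \cite{feige2002approximating} directly asserts that the \emph{no} instances have small fractional domatic number, so you cannot simply invoke their gap.

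The paper devotes essentially all of Sections~\ref{domsection}--\ref{endc} to filling this gap, and its route is quite different from the one you sketch. Rather than re-entering the set-cover reduction inside \cite{feige2002approximating}, the paper works axiomatically from the statement that the $c$-domatic problem is $\ln n$ hard for every $c$ (Lemma~\ref{cdomhard}). It then proves (Propositions~\ref{DomaticProposition} and~\ref{ceasier}) a monotonicity property: if the $c$-domatic problem on a fixed graph becomes easy at some $c_0$, it stays easy for all $c\ge c_0$. This forces the existence of a class of graphs hard for \emph{all} $c$ simultaneously (Lemma~\ref{hardclass}), and on those graphs the quantity $\max_c(\text{$c$-dom})/c$---which is exactly the fractional domatic number, hence the MLCP optimum---is $\ln n$ hard (Lemma~\ref{domaticlemma}). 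A final \texttt{PrecisionControl} argument (Section~\ref{endc}) is needed because converting an MLCP solution to a $c$-domatic solution requires $c$ to be the common denominator of the $t_j$'s, which could be exponential; the paper truncates the $t_j$'s to polynomial precision to keep the reduction polynomial-time. None of this machinery appears in your proposal, and your suggested alternative---arguing that a good fractional schedule would contradict the set-cover hardness underlying Feige et al.'s construction---would require reproving their result in the fractional setting, which you do not attempt.
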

The proof of Theorem \ref{lnnharddone} spans Sections \ref{domsection}, \ref{domlifeMLCP} and \ref{endc}. We present an overview of the proof here.

We first introduce a variant of the domatic partition problem called the domatic multi-partition problem. We then extend results from \cite{feige2002approximating} on the domatic multi-partition problem to show that another specially-defined \emph{domatic lifetime} problem is $\ln n$ hard. We then reduce the domatic lifetime problem to the MLCP, and thereby show that the MLCP is $\ln n$ hard. We will use other definitions as we go along to illustrate the reductions, and explain the more intricate details in the following sections.

It will be useful to note the following Lemma, which we will use in the sections that follow. Using the formulation in Problem \ref{def_MLCP}, we define the following:
\begin{definition}\label{utilized}
A set cover $C_k$ is \emph{utilized} in a solution to the MLCP if the time for which it is \textbf{on}, i.e. $t_k>0$.
\end{definition}

\begin{lemma}\label{Scovers}
Any feasible solution to the MLCP in which a polynomial (in $|\mathcal{S}|$ and $n$) number of set covers is utilized and for which the objective function (network lifetime) is $T$ (say), can be used to generate another feasible solution to the MLCP with at most $|\mathcal{S}|$ utilized set covers and objective function $T'\geq T$. 
\end{lemma}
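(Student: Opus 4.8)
The plan is to treat Problem \ref{def_MLCP} purely as a linear program and invoke the classical fact that any feasible point can be driven to a basic feasible solution (a vertex) without ever decreasing the objective, while such a vertex has no more nonzero coordinates than there are nontrivial constraints. Writing the battery constraints as $C t \le \mathbf 1$, where $C = [C_{ij}]$ is the $|\mathcal S| \times m$ sensor--set-cover incidence matrix and $t = (t_1,\dots,t_m) \ge 0$, the given solution is a point $t$ with support $\mathcal J = \{\, j : t_j > 0 \,\}$ of polynomial size and objective $\sum_{j} t_j = T$. I would reduce $|\mathcal J|$ one index at a time until the columns of $C$ indexed by $\mathcal J$ become linearly independent.

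For the iterative step, suppose the columns $\{\, C_{\cdot j} \,\}_{j \in \mathcal J}$ are linearly dependent. Then there is a nonzero vector $d$, supported on $\mathcal J$, with $C d = 0$. The crucial structural observation is that $C$ is a $0/1$ matrix with no zero column (every set cover is a nonempty collection of sensors), so $\sum_j C_{ij} d_j = 0$ for every $i$ forces $d$ to have coordinates of both signs: it cannot be componentwise nonnegative or nonpositive without vanishing. Now move $t \mapsto t + \lambda d$. Because $C d = 0$, we have $C(t + \lambda d) = C t \le \mathbf 1$, so every battery constraint stays satisfied for all $\lambda$, and the objective changes linearly by $\lambda \sum_j d_j$. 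I would pick the sign of $\lambda$ so that this change is nonnegative (either sign works when $\sum_j d_j = 0$); since $d$ has a coordinate of the sign opposite to the chosen direction, increasing $|\lambda|$ eventually drives some $t_j + \lambda d_j$ to $0$. Stopping at the first such $\lambda$ yields a new feasible solution with objective $\ge T$ and with $|\mathcal J|$ reduced by at least one.

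Each iteration strictly shrinks the support while preserving feasibility and keeping the objective nondecreasing, so after at most $|\mathcal J|$ (polynomially many) steps the surviving columns of $C$ are linearly independent. As they are vectors in $\mathbb R^{|\mathcal S|}$, there are at most $|\mathcal S|$ of them; the final solution therefore utilizes at most $|\mathcal S|$ set covers and has objective $T' \ge T$, as claimed.

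I expect the only delicate point to be guaranteeing that, in the direction chosen to protect the objective, some coordinate actually reaches zero. This is exactly where the sign structure of $C$ enters: nonnegativity together with the absence of zero columns forces the null-space direction $d$ to be sign-indefinite, so a coordinate must decrease whichever way we step. Were this to fail, one could improve the objective without bound while preserving feasibility, contradicting the standing upper bound $F_{min}$ on the MLCP; I would mention this as a sanity check rather than lean on it as the main argument.
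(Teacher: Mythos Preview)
Your proposal is correct and is essentially the same argument as the paper's. The paper restricts the LP of Problem~\ref{def_MLCP} to the polynomially many utilized set covers, solves this small LP to optimality, and then cites ``elementary LP theory'' to say a basic optimal solution has at most $|\mathcal S|$ nonzeros; you carry out that purification to a basic feasible solution by hand (your sign-indefiniteness observation about the null direction is exactly what guarantees the step terminates), which is precisely the content of the fact the paper invokes.
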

\begin{proof}
Let us consider a feasible solution $X$ to the MLCP in which a polynomial number of set covers (say $\ell_1$) is utilized. Let these set covers be represented by the set $\mathcal{C}_{on}=\{C_1, C_2,\ldots, C_{\ell_1}\}$ (without loss of generality). Let the objective function be $T$. We can now generate the matrix $C$ as in Problem \ref{def_MLCP} in polynomial time using $\mathcal{C}_{on}$ (as $\mathcal{C}$ in Problem \ref{def_MLCP}) and $\mathcal{S}$ and formulate an LP $L_X$ in polynomial time. It is now possible to solve $L_X$ optimally (in polynomial time), to produce an objective function $T'$. Note that $T'\geq T$, by definition. Also note that the LP $L_X$ has $|\mathcal{S}|$ constraints, and so by elementary LP theory, there are a maximum of $|\mathcal{S}|$ variables $t$ that can be non-zero. Let these non-zero variables in the solution of $L_X$ be some $t_{L_1},t_{L_2},\ldots, t_{L_{\ell_2}}$, where $\ell_2\leq |\mathcal{S}|$. This implies that utilizing $C_{on}^*\subseteq C_{on}$ ($C_{on}^*=\{C_{L_1},\ldots C_{L_{\ell_2}}\}$) is sufficient to produce the objective function $T'$.
  \end{proof}

We now present the details of the proof of Theorem \ref{lnnharddone}, starting with the definition of domatic multi-partitioning.

\subsection{Domatic partitioning and multi-partitioning, and Domatic Lifetime} \label{domsection}
In this section, we revisit the domatic partitioning problem (Definition \ref{dompartition}). An extension of the domatic partitioning problem is the domatic multi-partitioning problem, which has been defined in \cite{feige2002approximating} and \cite{fujita2000study} as follows.
\begin{definition}[c-Domatic Multi-partitioning]
Given a graph $G(V,E)$ and a parameter $c$, find as many dominating sets (Definition \ref{domset}) as possible such that each vertex $v\in V$ belongs to at least one but not more that $c$ dominating sets. \label{cdomdef}
\end{definition}

Note that the domatic partitioning problem (Definition \ref{dompartition}) is a special case of c-domatic multi-partitioning in which $c=1$, which requires the dominating sets so obtained to be disjoint. In the domatic multi-partitioning problem, however, disjointness is not a required criterion. 

One can solve the domatic multi-partitioning problem on a graph $G$ by solving the domatic partitioning problem on another graph $G'$. We construct $G'$ from $G$ by the \texttt{DomCopy} procedure, as follows: \textbf{(i)} Make $c$ copies of graph $G(V,E)$. Let the $c$ copies of any vertex $v\in V$ be represented by the set $S_v$. \textbf{(ii)} For every vertex $v\in V$, form a clique between all vertices in $S_v$. \textbf{(iii)} For an edge between vertices $u,v\in G$, form a complete bipartite graph between the vertices in $S_u$ and $S_v$ in $G'$.

The construction of the \texttt{DomCopy} procedure is shown in Fig. \ref{cdomfig} for an original graph $G$ with $V=\{A_1, B_1, C_1\}$ and $c=3$. The perforated lines represent the edges formed after duplication - dots for the cliques (step \textbf{(ii)} of the construction) and dashes for the bipartite graph (step \textbf{(iii)}).
\begin{figure}[h!]
	\begin{center}
	\psfrag{A1}[][][.6]{$A_1$}
	\psfrag{A2}[][][.6]{$A_2$}
	\psfrag{A3}[][][.6]{$A_3$}
	\psfrag{B1}[][][.6]{$B_1$}
	\psfrag{B2}[][][.6]{$B_2$}
	\psfrag{B3}[][][.6]{$B_3$}
	\psfrag{C1}[][][.6]{$C_1$}
	\psfrag{C2}[][][.6]{$C_2$}
	\psfrag{C3}[][][.6]{$C_3$}
	\includegraphics[scale=.24]{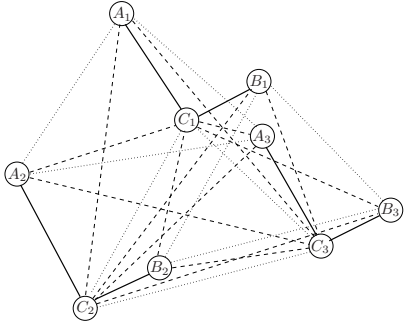} 
	\caption{$G'$ for the 3-domatic problem of graph $A_1, B_1, C_1$, It will be shown in Section \ref{domlifeMLCP} that this corresponds to solving the DSCP on 3 copies each of the subsets $\{A_1, C_1\}, \{B_1, C_1\}$ and $\{A_1, B_1, C_1\}$} \label{cdomfig}
	\end{center}
\end{figure}

\begin{lemma}
Finding the domatic partition of $G'$ is equivalent to finding the c-domatic multi-partition of $G$. \label{domcdom}
\end{lemma}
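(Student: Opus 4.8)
The plan is to set up a projection from $G'$ back to $G$ and show that domination is preserved in both directions, so that disjoint dominating sets of $G'$ correspond to multiply-used dominating sets of $G$ and vice versa, with the number of sets preserved throughout. Concretely, for a subset $D'\subseteq V(G')$ I would define its \emph{projection} $\pi(D')=\{v\in V : S_v\cap D'\neq\emptyset\}$, i.e.\ the set of original vertices having at least one copy in $D'$.

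The crux is the following claim, which I would prove first: $D'$ is a dominating set of $G'$ if and only if $\pi(D')$ is a dominating set of $G$. For the ``if'' direction I would take any copy $w_k\in S_w$; since $\pi(D')$ dominates $G$, either $w\in\pi(D')$, in which case some copy of $w$ lies in $D'$ and the clique of step \textbf{(ii)} (or $w_k$ itself) dominates $w_k$, or $w$ has a neighbour $v\in\pi(D')$, in which case the complete bipartite graph of step \textbf{(iii)} makes $w_k$ adjacent to a copy of $v$ in $D'$. For the ``only if'' direction I would fix an original vertex $w$, look at one copy $w_k$, and trace which kind of edge dominates it in $G'$: a clique edge forces $w\in\pi(D')$, while a bipartite edge forces a neighbour of $w$ into $\pi(D')$. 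I expect this claim to be the main obstacle, since it is where the two edge types of \texttt{DomCopy} must be handled carefully --- in particular the clique edges of step \textbf{(ii)} are exactly what guarantee that several copies of $w$ landing in the same partition still contribute a dominating projection.

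Given the claim, the forward direction is a counting argument. Starting from a domatic partition $\{D'_1,\dots,D'_k\}$ of $G'$, I would set $D_i=\pi(D'_i)$; each $D_i$ dominates $G$ by the claim. Since the $D'_i$ partition $V(G')$ and $|S_v|=c$, the $c$ copies of each $v$ are distributed among the $D'_i$, so $v$ appears in at least one and at most $c$ of the projected sets --- exactly the c-domatic multi-partition condition of Definition \ref{cdomdef}, with the same number $k$ of sets.

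For the reverse direction I would start from a c-domatic multi-partition $\{D_1,\dots,D_k\}$ of $G$ and distribute copies. If $v$ lies in $n_v$ of the $D_i$ with $1\le n_v\le c$, I would place one copy of $v$ into each of those $n_v$ partitions and assign the remaining $c-n_v$ copies to any one of them; this sends every copy to exactly one part, so the resulting $\{D'_1,\dots,D'_k\}$ is a genuine partition of $V(G')$ with $\pi(D'_i)\supseteq D_i$. Since a superset of a dominating set is dominating, the claim yields that each $D'_i$ dominates $G'$, giving a domatic partition of $G'$ with the same number of sets. Both maps preserve the number of sets and are computable in polynomial time, so the maximum domatic number of $G'$ equals the maximum c-domatic number of $G$, which is the asserted equivalence.
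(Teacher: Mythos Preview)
Your proof is correct and follows essentially the same projection-based approach as the paper, which likewise defines the projection of a set in $G'$ onto $G$ and argues that dominating sets correspond in both directions, with the $c$ copies enforcing the at-most-$c$ multiplicity constraint. Your version is considerably more detailed than the paper's brief sketch---you prove the domination-preservation claim carefully for both edge types of \texttt{DomCopy} and spell out the copy-distribution for the reverse direction---but the underlying idea is identical.
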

\begin{proof}
Any dominating set in $G'$ is also a dominating set in $G$ when projected onto it (define projection as considering original copies of those vertices that were copied in the \texttt{DomCopy} procedure). Also, any dominating set in $G$ is a dominating set in $G'$, by the \texttt{DomCopy} construction. Any vertex in $G$ is allowed to be part of $c$ dominating sets in $G'$, by virtue of having $c$ copies in $G'$. We can therefore obtain a c-domatic multi-partition of $G$ by solving the domatic partition problem on $G'$.
     \end{proof}

We now state the following result from \cite{feige2002approximating}.

\begin{result}[\cite{feige2002approximating}] 
The domatic multi-partitioning problem is $\ln n$ hard for any $c$. Here, $n$ represents the number of vertices in the input graph.\label{cdomhard}
\end{result}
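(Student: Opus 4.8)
The plan is to prove Result \ref{cdomhard} by a gap-preserving reduction from the domatic partition problem, which is already known to be $\ln n$ hard (as stated above, following \cite{feige2002approximating}). The reduction will route through Lemma \ref{domcdom}, which ties the two problems together via the \texttt{DomCopy} construction, so that an efficient approximation for $c$-domatic multi-partition would yield an equally good approximation for ordinary domatic partition and hence collapse the complexity assumption $NP\subseteq DTIME(n^{\BigO{\ln\ln n}})$.

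First I would dispose of the base case $c=1$. By Definition \ref{cdomdef}, demanding that each vertex lie in at least one and at most $1$ dominating set forces it to lie in exactly one, i.e. the dominating sets must be pairwise disjoint; this is precisely the domatic partition problem of Definition \ref{dompartition}. Thus the $1$-domatic multi-partition problem is literally the domatic partition problem and inherits its $\ln n$ hardness verbatim, with $n$ the number of vertices.

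For a general fixed $c$, I would use Lemma \ref{domcdom} to transfer the hardness. The lemma gives an exact identity between optima: if $G'=\texttt{DomCopy}(G)$, then the domatic number of $G'$ equals the $c$-domatic number of $G$, and $G'$ has exactly $cn$ vertices. Reading this as a reduction, I would take a hard domatic partition instance that is realized as $G'=\texttt{DomCopy}(G)$, run the hypothetical $(1-\epsilon)\ln n$ algorithm for $c$-domatic multi-partition on the (smaller) graph $G$, and read off, via the projection in Lemma \ref{domcdom}, a domatic partition of $G'$ of the same cardinality. Because the optima coincide exactly, any multiplicative gap for partition on $G'$ is inherited with no loss by multi-partition on $G$. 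The only bookkeeping is the vertex count: $G$ has $n$ vertices while $G'$ has $cn$, and for constant $c$ one has $\ln(cn)=\ln n+\ln c=(1+o(1))\ln n$, so a ratio $(1-\epsilon)\ln n$ on $G$ corresponds to $(1-\epsilon')\ln(cn)$ on $G'$; the $\ln$ scale of the hardness is untouched and the small slack is absorbed into $\epsilon$.

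The hard part will be securing the structural hypothesis that the hard partition instances carry the $c$-blow-up (\texttt{DomCopy}) form needed to apply Lemma \ref{domcdom}. This routing is not optional: the naive alternative of treating a domatic partition instance $G$ itself as a $c$-multi-partition instance fails on the soundness side, because the $c$-domatic number can strictly exceed $c$ times the domatic number — the $5$-cycle already exhibits $D_2=5>2D=4$, the quantity it really tracks being the fractional domatic number rather than the integral one. Passing instead through the exact correspondence of Lemma \ref{domcdom} sidesteps this, but at the price of needing the hard instances to be genuine $c$-blow-ups. I would therefore verify that the set-cover-based hard instances of \cite{feige2002approximating} either already possess this module structure or can be padded into it without perturbing the gap; establishing this compatibility, and thereby pinning the NO-case value of the $c$-domatic optimum, is the crux on which the whole argument rests.
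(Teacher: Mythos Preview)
The paper does not prove this statement at all: Lemma~\ref{cdomhard} is simply quoted as a result of \cite{feige2002approximating}, with no argument given. So there is no ``paper's own proof'' to compare against; the relevant comparison is to what Feige et al.\ actually do, which is a direct gap construction (essentially re-running the set-cover/domatic hardness machinery and analyzing the $c$-domatic optimum in both the YES and NO cases), not a black-box reduction from the $c=1$ problem.

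Your proposal, by contrast, attempts exactly such a black-box reduction, and you yourself identify its weak point without closing it. The issue is that Lemma~\ref{domcdom} gives $\text{dom}(G')=c\text{-dom}(G)$ only when $G'=\texttt{DomCopy}(G)$; to exploit this for hardness you need the \emph{hard} domatic-partition instances to already be $c$-fold blow-ups of some smaller graph. There is no reason to expect this of the Feige et al.\ constructions, and ``padding'' a generic graph into \texttt{DomCopy} form is not an innocent operation: taking $c$ copies and adding the bipartite gadgets changes both the domatic number and the vertex count in ways that can destroy the gap. You also correctly note that the naive alternative---running the $c$-domatic algorithm on $G$ itself---fails in the NO case because $c\text{-dom}(G)$ tracks the fractional domatic number, not $c\cdot\text{dom}(G)$. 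So as written, your plan is a reduction in search of a structural lemma that you have not supplied and that is, in fact, the entire content of the result.

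If you want to complete an argument along these lines, the realistic route is not to massage the instances into blow-up form but to go back into the Feige et al.\ reduction and bound the $c$-domatic number directly in the NO case---which is essentially what \cite{feige2002approximating} does and why this paper is content to cite it.
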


For the sake of compactness, we shall now refer to the domatic multi-partitioning problem in which each vertex can be part of $c$ dominating sets as the \emph{c-domatic} problem, and call its optimal solution \emph{c-dom}.

We now prove a result that differs slightly from Lemma \ref{cdomhard}. The distinction will be made clear after the proof.

\begin{lemma}
There must exist a class of c-domatic problems which are $\ln n$ hard, $\forall$ $c$. \label{hardclass}
\end{lemma}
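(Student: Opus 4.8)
The plan is to obtain Lemma~\ref{hardclass} by repackaging the per-value hardness of Lemma~\ref{cdomhard} into a single class that is hard simultaneously for every multiplicity $c$, using the equivalence of Lemma~\ref{domcdom} as the organizing tool. First I would unpack Lemma~\ref{cdomhard}: for each fixed $c$ it asserts that no polynomial-time algorithm approximates the $c$-domatic problem within $(1-\epsilon)\ln n$ unless $NP \subseteq DTIME\big(n^{\BigO{\ln\ln n}}\big)$, and by the definition of hardness of approximation this is witnessed by an explicit family $\mathcal{F}_c$ of input graphs on which the $c$-domatic optimum lies in a gap of ratio $\ln n$. The quickest route to the statement is then to let the desired class be the indexed union $\{(G,c): G\in\mathcal{F}_c,\ c\geq 1\}$; restricting this class to any fixed $c$ returns the hard family $\mathcal{F}_c$, so the class is $\ln n$ hard for all $c$.

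What makes the lemma worth isolating---and where I would spend the real effort---is producing the class uniformly, i.e. from a single base family rather than a different $\mathcal{F}_c$ for each $c$. Here Lemma~\ref{domcdom} is the right lens: a $c$-domatic instance on a base graph $G$ is the same as a domatic partition instance on $G' = \texttt{DomCopy}(G,c)$, so a $c$-domatic problem is specified by its base graph with $c$ recorded separately, and I would aim for one base family $\{G_n\}_n$ such that $\{(G_n,c)\}_n$ is $\ln n$ hard for every $c$. Concretely I would take $\{G_n\}$ to be the base instances of the reduction underlying Lemma~\ref{cdomhard} and track, through steps (ii)--(iii) of the \texttt{DomCopy} construction, how the domatic optimum of $G_n'$ depends on $c$: a domatic partition of the base graph lifts, one copy per layer, to $c$ times as many disjoint dominating sets, so the ``yes'' value is multiplied by $c$, and I would then verify the matching ``no'' bound to confirm that the gap ratio is carried over unchanged.

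The step I expect to be the main obstacle is precisely this last verification, because the $c$-domatic optimum behaves, as $c$ grows, like $c$ times a \emph{fractional} domatic quantity that can strictly exceed $c$ times the integral domatic number; consequently a family that has an integral domatic gap at $c=1$ need not retain a $c$-domatic gap at larger $c$. I would resolve this by staying inside the multipartition reduction of \cite{feige2002approximating} rather than treating Lemma~\ref{cdomhard} as a black box, using that its ``no'' instances have a small fractional domatic number, so that the $c$-domatic optimum of the lifted graph $G_n'$ stays below the $\ln n$ threshold for every $c$ at once. This uniformity is exactly the ``slight difference'' promised after the statement: Lemma~\ref{cdomhard} gives a separate hardness claim for each $c$, whereas Lemma~\ref{hardclass} gives one class whose hardness holds across all $c$ simultaneously, which is the form needed to feed the domatic lifetime reduction of Section~\ref{domlifeMLCP}.
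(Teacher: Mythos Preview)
Your proposal identifies the right target---a single family of base graphs hard for every $c$ simultaneously---but the route you take is not the paper's, and the step you flag as ``the main obstacle'' is one you leave unresolved. You propose to open up the reduction in \cite{feige2002approximating} and argue that its ``no'' instances already have small \emph{fractional} domatic number, so that the $c$-domatic optimum stays below threshold for every $c$. That may well be true, but you do not verify it, and it requires inspecting an external construction; as written, the argument is a plan rather than a proof.

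The paper avoids this entirely with a monotonicity argument that treats Lemma~\ref{cdomhard} as a black box. It first establishes (Propositions~\ref{DomaticProposition} and~\ref{ceasier}) that for a fixed base graph $G$, easiness is upward closed in $c$: if the $c_0$-domatic problem on $G$ admits a better-than-$\ln n$ approximation, then so does the $c$-domatic problem for every $c>c_0$, simply by replicating the $c_0$-solution across copies in the \texttt{DomCopy} graph. Hence the set of $c$ for which $G$ is $\ln n$ hard is an initial segment. The proof then concludes by contradiction: if no graph stayed hard for all $c$, every graph would become easy past some finite $c_G$, and taking $c$ beyond all such thresholds would make the $c$-domatic problem easy on every input, contradicting Lemma~\ref{cdomhard}. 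This yields existence of the uniformly hard class without ever touching the internals of the Feige reduction or the fractional domatic number. Your approach, if completed, would be constructive and more informative; the paper's is indirect but self-contained and short.
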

\begin{proof}
This proof requires 2 propositions that follow.
\begin{proposition}\label{DomaticProposition}
Any c-domatic problem for a graph $G$ is at least as easy as finding the domatic number of $G$, $\forall$ $c\geq 2$. 
\end{proposition}
\begin{proof}
By Lemma \ref{lnndomatic}, we have an algorithm which can find $K= (\delta + 1)/ \ln n$ domatic partitions of graph $G$. Recall that $\delta$ was the minimum degree of $G$, and $\delta+1$ was an upper bound on the domatic number of $G$. By the proof of Lemma \ref{domcdom}, a feasible domatic partition of $G'$ involves domatically partitioning each copy of $G$ in $G'$ independently, and so we have an algorithm which finds $c\times{K}$ domatic partitions of $G'$. If we denote the minimum degree of $G'$ by $\delta'$,
\begin{equation}
c\times{K} = \frac{c(\delta + 1)}{\ln n} = \frac{1 + (c\delta+c-1)}{\ln n} = \frac{1 + \delta'}{\ln n},
\end{equation}
since $\delta'=c\delta+c-1$, by the construction of $G'$ from $G$. By definition, the domatic number of $G'$ must be less than $1+\delta'$. We therefore have a $\ln n$ approximation to the c-domatic problem of graph $G$, $\forall$ $c\geq2$. So, any instance of the c-domatic problem is at least as easy as the corresponding domatic partitioning problem (or finding the domatic number).
     \end{proof}
\begin{proposition}
If a c-domatic problem for a graph $G$ becomes easier than $\ln n$ for a particular $c=c_{0}$, it stays easier than $\ln n$ for $G$, $\forall$ $c>c_{0}$ \label{ceasier}
\end{proposition}
\begin{proof}
We will use a logic similar to the one used to prove Proposition \ref{DomaticProposition}.
Let the c-domatic problem become easier than $\ln n$ for a particular $c=c_{0}$. Let us denote the graph constructed for $c=c_{0}$ (by making $c_0$ copies of graph $G$ by \texttt{DomCopy}) by $G_{0}$. We will divide the proof into two cases:\\
\emph{Case 1 :} $c=kc_{0}$, $k\in \mathbb{N}$ : This effectively translates to making $kc_0$ copies of $G$ by the \texttt{DomCopy} procedure, or equivalently, $k$ copies of $G_0$. By Proposition \ref{DomaticProposition}, the problem must be easier than solving $G_0$, which we know is easier than $\ln n$.\\
\emph{Case 2 :} $c\in \{kc_{0}+1, kc_{0}+2 ,\ldots, (k+1)c_{0}-1\}, k\in \mathbb{N}$ : Let us say we are interested in solving the c-domatic problem for some $c=kc_{0}+a$, $a\in\{1,\ldots, c_{0}-1\}$. Let us call the graph generated by making $kc_{0}+a$ copies of $G$ by the \texttt{DomCopy} procedure $G_{req}$. Note that $G_{req}$ is a combination of $k$ copies of $G_0$ and $a$ copies of $G$. If we take $k$ copies of the existing domatic partition of $G_{0}$ (we call this problem 1) and $a$ copies of our domatic partition of graph $G$ (we call this problem 2), we will have a valid domatic partition of $G_{req}$. We know that we had a better than $\ln n$ approximation to problem 1 and a $\ln n$ approximation to problem 2, and so the domatic number of $G_{req}$ has been approximated to a factor better than $\ln n$.
     \end{proof}
\begin{figure}[H]
	\begin{center}
	\psfrag{C=1}[][l][.8]{\footnotesize{$c=1$}}
	\psfrag{C=2}[][l][.8]{\footnotesize{$c=2$}}
	\psfrag{C=3}[][l][.8]{\footnotesize{$c=3$}}
	\psfrag{C=i}[][l][.8]{\footnotesize{$c=i$}}
	\includegraphics[scale=.16]{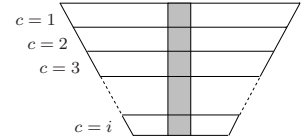}
	\caption{$\ln n$ hard problems for the c-domatic problem. The shaded region represents the set of problems which stay $\ln n$ hard, $\forall$ $c\geq1$} \label{domaticlifetime}
	\end{center}
\end{figure}

Fig. \ref{domaticlifetime} diagrammatically represents Propositions \ref{DomaticProposition} and \ref{ceasier}. The trapezoid for each $c$ represents the class of problems which are $\ln n$ hard for that $c$. Note that such problems might become fewer as $c$ increases (Proposition \ref{DomaticProposition}) and that a $\ln n$ hard problem for a particular $c=c_0$ is also $\ln n$ hard for $c<c_0$ (Proposition \ref{ceasier}). 

We are now in a position to prove Lemma \ref{hardclass}, by contradiction. Let us assume that no graph exists for which the c-domatic problem is $\ln n$ hard for all $c$. This implies that every c-domatic problem $\mathcal{P}$ must be easy for some $c=c_P$. By Proposition \ref{ceasier}, $\mathcal{P}$ must be easy, $\forall$ $c>c_P$. Let $c_{max}=\max_{P}c_P$. If $c_{max}$ is not finite, we have proved Lemma \ref{hardclass}. Otherwise, $\forall$ $c>c_{max}$, there does not exist any graph for which solving the c-domatic problem is $\ln n$ hard. But we know from Lemma \ref{cdomhard} that the c-domatic problem is $\ln n$ hard, $\forall$ $c$. Hence, we have a contradiction.

Therefore, there must exist a class of c-domatic problems which are $\ln n$ hard, $\forall$ $c$. This is represented by the shaded region of Fig. \ref{domaticlifetime}, and Lemma \ref{hardclass} is therefore proved.
  \end{proof}

\begin{remark}
Note the difference between Lemma \ref{cdomhard} and Lemma \ref{hardclass}. Lemma \ref{cdomhard} simply says that there exists some graph $G_c$ for which the c-domatic problem is $\ln n$ hard for any $c$. This graph $G_c$ is c-dependent, and need not be the same for all $c$. Lemma \ref{hardclass} is stronger - it says that there are indeed graphs for which all c-domatic problems (for all $c$) are $\ln n$ hard. 
\end{remark}

We are now ready to define the \emph{domatic lifetime problem}.

\begin{definition}[Domatic Lifetime]
Given a graph $G(V,E)$, maximize the quantity $\emph{(c-dom)}/c$ over all $c$, where $\emph{c-dom}$ is defined as the optimal solution to the c-domatic problem. \label{DomLife}
\end{definition}
Note here that solving the domatic lifetime problem \emph{need not necessarily} involve iterating through all values of $c$ to solve the c-domatic problem. Given Definition \ref{DomLife}, we first observe the following:
\begin{lemma}
It is hard to approximate the domatic lifetime problem within $\ln n$. \label{domaticlemma}
\end{lemma}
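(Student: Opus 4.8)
The plan is to prove Lemma~\ref{domaticlemma} by reducing the $c$-domatic problem on the hard class of graphs guaranteed by Lemma~\ref{hardclass} to the domatic lifetime problem, arguing by contradiction. Fix a graph $G$ on $n$ vertices drawn from this class, so that the $c$-domatic problem on $G$ is $\ln n$ hard for \emph{every} $c$. Suppose, for contradiction, that some polynomial-time algorithm $\mathcal{A}$ approximates the domatic lifetime to within a factor $(1-\epsilon)\ln n$ for some $\epsilon>0$. I will show that simply running $\mathcal{A}$ on $G$ already yields a $(1-\epsilon)\ln n$ approximation to the $c$-domatic problem on $G$ for whatever value of $c$ the algorithm happens to use, which Lemma~\ref{hardclass} forbids.

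The key observation is that, by Definition~\ref{DomLife}, the optimal domatic lifetime $L^\ast=\max_c (c\text{-dom})/c$ (which is finite, being bounded by $\delta+1$) is an upper bound on $(c\text{-dom})/c$ for \emph{every} $c$; that is, $(c\text{-dom})\le c\,L^\ast$ for all $c$. Now a feasible solution to the domatic lifetime problem is nothing but a $\hat c$-domatic multi-partition for some particular $\hat c$, so the approximation guarantee of $\mathcal{A}$ means that the multi-partition it returns on input $G$ uses some $\hat c$ and produces a number of dominating sets $\hat D$ with $\hat D/\hat c \ge L^\ast/\big((1-\epsilon)\ln n\big)$. Since $(\hat c\text{-dom})\le \hat c\,L^\ast$, this gives $\hat D \ge \hat c\,L^\ast/\big((1-\epsilon)\ln n\big) \ge (\hat c\text{-dom})/\big((1-\epsilon)\ln n\big)$, so the output of $\mathcal{A}$ is a $(1-\epsilon)\ln n$ approximation to the $\hat c$-domatic problem on $G$. (As $\mathcal{A}$ runs in polynomial time its output has polynomial size, so $\hat c$ and $\hat D$ are polynomially bounded and the dominating sets can be read off directly.) Note that this step needs only the definitional bound $(c\text{-dom})\le c\,L^\ast$ and no super-additivity or limiting argument.

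The crux of the argument, and the reason Lemma~\ref{hardclass} is needed rather than the weaker Lemma~\ref{cdomhard}, is that we have no control over which value $\hat c$ the algorithm $\mathcal{A}$ selects. If only Lemma~\ref{cdomhard} were available, the hard graph could depend on $c$, and $\mathcal{A}$ might conveniently emit a $\hat c$ for which $G$'s $\hat c$-domatic problem is easy. Choosing $G$ from the class of Lemma~\ref{hardclass}, however, makes the $\hat c$-domatic problem on $G$ $\ln n$ hard no matter which $\hat c$ emerges, so the $(1-\epsilon)\ln n$ approximation derived above would imply $NP \subseteq DTIME\big(n^{\BigO{\ln\ln n}}\big)$, the desired contradiction. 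I expect the main obstacle to be exactly this interchange of the quantifier over $c$: the optimization over $c$ baked into the domatic lifetime objective must be neutralized by all-$c$ hardness, which is precisely the role that Lemma~\ref{hardclass} was constructed to play.
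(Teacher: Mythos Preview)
Your proof is correct and follows essentially the same approach as the paper: both pick a graph $G$ from the hard class of Lemma~\ref{hardclass} and argue by contradiction that a better-than-$\ln n$ approximation to the domatic lifetime would yield a better-than-$\ln n$ approximation to some $c$-domatic problem on $G$, which Lemma~\ref{hardclass} forbids regardless of which $c$ is involved. The only cosmetic difference is that the paper invokes an oracle revealing the optimizing $c^\ast$ and then appeals to the hardness of $c^\ast$-dom, whereas you work directly with whatever $\hat c$ the algorithm's output uses and the bound $(\hat c\text{-dom})\le \hat c\,L^\ast$; both framings rest on exactly the same all-$c$ hardness supplied by Lemma~\ref{hardclass}.
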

\begin{proof}
Let us take a graph $G$ from the class defined in Lemma \ref{hardclass}. Let us say we had access to an oracle, which tells us the value of $c$ for which $\emph{c-dom}/c$ was maximized. Even with this information, note that we cannot do better than a $\ln n$ approximation to the domatic lifetime problem, since that would allow us to approximate $\emph{c-dom}$ within a factor of $\ln n$. Even with access to the oracle, the domatic lifetime problem is $\ln n$ hard. So the domatic lifetime problem for all such graphs $G$ in the hard class must be $\ln n$ hard. 
  \end{proof}

A reduction from the domatic lifetime problem to the MLCP would therefore prove Theorem \ref{lnnharddone}, by Lemma \ref{domaticlemma}. We attempt to do that in the next 2 sections.

\subsection{From Domatic Lifetime to the MLCP} \label{domlifeMLCP}
We first define another related problem - the MLCP-T - to bridge the gap between the domatic lifetime problem and the MLCP.
\begin{definition}
An MLCP-T problem is an MLCP in which each set cover utilized is \textbf{on} for exactly the same time $T$. Set covers in the solution may be repeated. \label{Tdef}
\end{definition}
\begin{example} \label{Texample}
Consider $\mathcal{U}=\{1,2,3\}$ and the set of subsets $\mathcal{S}=\{\{1,2\}, \{2,3\}, \{3,1\}, \{1,2,3\}\}$. By inspection, it ought to be clear that using the set covers $C^1_1=\{\{1,2\}, \{2,3\}\}$, $C^1_2=\{\{2,3\}, \{3,1\}\}$, $C^1_3=\{\{1,2\}, \{3,1\}\}$ and $C^1_4=\{\{1,2,3\}\}$ with $t^1_1=t^1_2=t^1_3=0.5$ and $t^1_4=1$ is a solution to the MLCP. The solution to the MLCP-T will be $C^2_1=\{\{1,2\}, \{2,3\}\}$, $C^2_2=\{\{2,3\}, \{3,1\}\}$, $C^2_3=\{\{1,2\}, \{3,1\}\}$, $C^2_4=\{\{1,2,3\}\}$, and $C^2_5=\{\{1,2,3\}\}$ with $t^2_1=t^2_2=t^2_3=t^2_4=t^2_5=T=0.5$. Notice how we split $C^1_4$ into $C^2_4$ and $C^2_5$ in order to ensure that all set covers were operated for the same amount of time $T$.
\end{example}
The connection between the solutions of the MLCP and MLCP-T is not a coincidence. We will explore this connection more rigourously in Section \ref{endc}.
We are now in a position to define a suitable reduction from the domatic lifetime problem. Our objective in this section is summarized as follows.
\begin{lemma}
The domatic lifetime problem can be reduced to the MLCP-T. \label{reducibility} 
\end{lemma}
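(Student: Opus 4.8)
The plan is to exhibit a polynomial-time, value-preserving transformation from an arbitrary graph $G(V,E)$ (an instance of the domatic lifetime problem) into a set system on which the MLCP-T is posed. Given $G$, I would build the instance by taking the universe to be the vertex set, $\mathcal{U}=V$, and introducing for every vertex $v\in V$ a single subset $S_v$ equal to the closed neighbourhood of $v$ (that is, $v$ together with all vertices adjacent to it); this takes time polynomial in $|V|$ and $|E|$. The single fact that drives the whole reduction is that, under this correspondence, a collection $C\subseteq\{S_v\}$ is a set cover of $\mathcal{U}$ if and only if the corresponding vertices form a dominating set of $G$: covering every element of $\mathcal{U}=V$ means every vertex lies in some chosen closed neighbourhood, which is exactly the dominating condition of Definition \ref{domset}. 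Thus dominating sets of $G$ and set covers of the constructed instance are in bijection, and running the MLCP-T (Definition \ref{Tdef}) on this instance amounts to scheduling dominating sets.

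Next I would match the two objective values. For the direction domatic lifetime $\le$ MLCP-T, take the value of $c$ at which $(\text{c-dom})/c$ is maximised together with an optimal c-domatic multi-partition, i.e. $\text{c-dom}$ dominating sets with every vertex lying in at most $c$ of them (Definition \ref{cdomdef}). Operating each of these dominating sets, viewed as a set cover, for the common time $T=1/c$ yields a feasible MLCP-T solution: every subset $S_v$ is switched on for total time at most $c\cdot(1/c)=1$, respecting the unit battery of Remark \ref{battery1}, and all utilised set covers run for the same time $T$. Its network lifetime is $(\text{c-dom})\cdot(1/c)=(\text{c-dom})/c$, so the MLCP-T optimum is at least the domatic lifetime.

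For the reverse direction I would start from any feasible MLCP-T solution with common on-time $T$ using $m$ set covers (counted with repetition) and set $c=\lfloor 1/T\rfloor$. Since every subset has unit battery and each utilised set cover runs for time $T$, every vertex can belong to at most $\lfloor 1/T\rfloor=c$ of the chosen dominating sets; these $m$ dominating sets therefore form a feasible c-domatic multi-partition, whence $m\le\text{c-dom}$. Because $T\le 1/c$, the attained lifetime satisfies $mT\le m/c\le(\text{c-dom})/c$, which is at most the domatic lifetime. Combining the two directions shows the MLCP-T optimum equals the domatic lifetime exactly; moreover both maps are constructive, so a $\rho$-approximate MLCP-T schedule pushes back to a c-domatic multi-partition certifying a domatic-lifetime value of at least $(\text{domatic lifetime})/\rho$. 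Hence the reduction preserves approximation ratios, and together with Lemma \ref{domaticlemma} it transfers the $\ln n$ hardness to the MLCP-T.

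The step I expect to demand the most care is the reverse inequality, specifically translating the continuous common time $T$ of the MLCP-T into the discrete membership bound $c=\lfloor 1/T\rfloor$ of the c-domatic problem. One has to verify that the battery constraint genuinely forces the at-most-$c$ multiplicity once set-cover repetitions are accounted for, and that taking the floor does not lose value: the inequalities $T\le 1/c$ and $m\le\text{c-dom}$ must line up so that no MLCP-T schedule can beat $(\text{c-dom})/c$ for the induced $c$, and hence cannot beat the domatic lifetime. Getting this bookkeeping exactly right is what guarantees the value-preserving (and therefore ratio-preserving) nature of the reduction, rather than merely a one-sided bound.
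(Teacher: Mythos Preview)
Your proposal is correct and follows essentially the same approach as the paper: both use the closed-neighbourhood construction (the paper's \texttt{GraphToSet} procedure) to identify dominating sets of $G$ with set covers of the derived instance, and both match the c-domatic parameter $c$ with the MLCP-T common on-time $T=1/c$ to show the two optima coincide. The paper routes the argument through an explicitly named intermediate problem (the c-DSCP, via Propositions \ref{domDSCP}--\ref{MLCPDSCP}) whereas you argue the two inequalities directly; your floor-based bookkeeping $c=\lfloor 1/T\rfloor$ for the reverse direction is sound and in fact slightly more explicit than the paper's treatment, which tacitly assumes $1/T$ is integral.
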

The proof of Lemma \ref{reducibility} will involve reductions spanning Propositions \ref{domDSCP}, \ref{multiDSCP} and \ref{MLCPDSCP}. A brief roadmap of the reduction is schematically shown in Figure \ref{relationship}. It will be useful to refer to Fig. \ref{relationship} while working through the reduction; some problems mentioned in the figure will be defined as we go along.

\begin{figure}[H]
    \begin{center}
    \psfrag{DCSP}[][][.7]{DSCP}
    \psfrag{cDSCP}[][][.7]{c-DSCP}
    \psfrag{Domatic}[][][.7]{Domatic}
    \psfrag{cDomatic}[][][.7]{c-Domatic}
    \psfrag{MLCP}[][][.7]{MLCP}
    \psfrag{MLCPT}[][][.7]{MLCP-T}
    \psfrag{DomaticLifetime}[][][.7]{Domatic Lifetime}
    \psfrag{some}[][][.6]{for some $c$}
    \psfrag{co}[][r][.6]{    *solutions coincide}
    \hspace*{-.05in}
    \includegraphics[scale=.65]{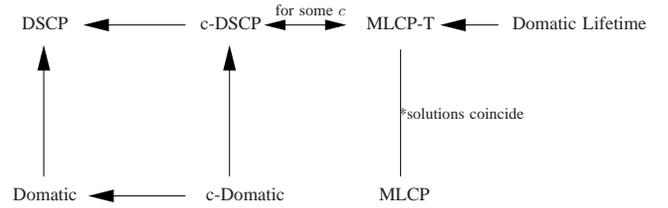}
    \caption{Relationship between the problems in Section \ref{MLCPishard}. $A\rightarrow B$ signifies that problem $A$ is reducible to problem $B$, and $A\longleftrightarrow B$ signifies that problems $A$ and $B$ are equivalent. The relationship between the MLCP and MLCP-T will be handled in Section \ref{endc}} \label{relationship}
    \end{center}
\end{figure}
Let us start by relating problems on graphs (the domatic problems) to our inputs of $\mathcal{U}$ and $\mathcal{S}$.
\begin{proposition}
Every domatic partitioning problem (refer Definition \ref{dompartition}) can be reduced to a DSCP. \label{domDSCP}
\end{proposition}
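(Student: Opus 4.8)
The statement to prove is Proposition \ref{domDSCP}: every domatic partitioning problem can be reduced to a DSCP.

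The plan is to construct a direct mapping from a graph $G=(V,E)$ to a universe $\mathcal{U}$ and a collection of subsets $\mathcal{S}$, so that dominating sets of $G$ correspond exactly to set covers of $\mathcal{U}$, and hence a domatic partition of $G$ corresponds to a collection of pairwise-disjoint set covers. First I would identify the elements to be covered with the vertices of $G$, i.e. set $\mathcal{U}=V$, so that $n=|V|$. Next, for each vertex $v\in V$ I would create one subset $S_v$ consisting of $v$ together with all of its neighbours, i.e. the closed neighbourhood $N[v]=\{v\}\cup\{u : (u,v)\in E\}$. The multiset $\mathcal{S}=\{S_v : v\in V\}$ then has exactly $|V|$ subsets, so the construction is clearly polynomial in the size of $G$.

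The key step is to verify the correspondence between dominating sets and set covers. A subcollection $\{S_v : v\in V'\}$ covers all of $\mathcal{U}$ precisely when every vertex $u\in V$ lies in some $S_v$ with $v\in V'$; by the definition of $S_v$ this means every $u$ is either in $V'$ or adjacent to a vertex of $V'$, which is exactly Definition \ref{domset} of a dominating set. Thus set covers of the constructed instance are in bijection with dominating sets $V'\subseteq V$ of $G$, where the set cover uses subset $S_v$ iff $v\in V'$. Because each subset $S_v$ is indexed by a distinct vertex $v$, requiring the set covers to be pairwise disjoint in the DSCP is identical to requiring the associated vertex subsets $V'$ to be pairwise disjoint, which is exactly the disjointness condition in the domatic partition (Definition \ref{dompartition}). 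Consequently a partition of $V$ into $k$ dominating sets yields $k$ pairwise-disjoint set covers and vice versa, so the maximum number of disjoint set covers in the DSCP equals the domatic number of $G$.

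I do not expect a serious obstacle here, as the reduction is essentially the standard identification of closed neighbourhoods with covering subsets; the only point requiring care is bookkeeping the multiset structure of $\mathcal{S}$ so that the one-subset-per-vertex indexing makes the disjointness of subsets and the disjointness of vertex-sets coincide. I would close by noting that since the construction runs in polynomial time and preserves both feasibility (dominating set $\leftrightarrow$ set cover) and the disjointness constraint (domatic partition $\leftrightarrow$ disjoint set covers), any instance of the domatic partitioning problem is reduced to a DSCP instance with the same optimal value, establishing Proposition \ref{domDSCP}.
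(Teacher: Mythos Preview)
Your proposal is correct and follows exactly the same construction as the paper's proof (which the paper calls the \texttt{GraphToSet} procedure): set $\mathcal{U}=V$ and take as subsets the closed neighbourhoods $N^{+}(v_i)$ of each vertex. If anything, you supply more detail than the paper does in verifying the set-cover/dominating-set correspondence and the preservation of disjointness.
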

\begin{proof} 
We prove this by what we call the \texttt{GraphToSet} procedure. Consider the graph $G(V,E)$. Let $V=\{v_{1}, \ldots, v_{n}\}$. Let the \emph{neighbourhood} of any vertex $v_{i}\in V$, represented by $N(v_{i})$, be the set of vertices connected to $v_{i}$ by an edge. Let $N^{+}(v_{i})$ be the union of $N(v_{i})$ and $v_{i}$, i.e., the \emph{inclusive neighbourhood} of $v_{i}$.

We wish to formulate an equivalent DSCP with a universe $\mathcal{U}$ and set of subsets $\mathcal{S}$. Let $\mathcal{U}=V$ and \linebreak \hbox{$\mathcal{S}=\{N^{+}(v_{1}), N^{+}(v_{2}),\ldots, N^{+}(v_{n})\}$}. We can see that the DSCP on $\mathcal{U}$ and $\mathcal{S}$ is equivalent to the domatic partition of the graph $G$. An example is shown in Fig. \ref{cdomfig}.
     \end{proof}
Similarly, the c-domatic problem is also expected to be reducible to some problem involving a suitably defined $\mathcal{U}$ and $\mathcal{S}$. We now show that that problem is the \emph{c-DSCP}.
\begin{definition}[c-DSCP]
The c-DSCP given a universe $\mathcal{U}$ and a collection of subsets $\mathcal{S}$ is equivalent to the DSCP on $\mathcal{U}$ and set of subsets $\mathcal{S'}$, where $\mathcal{S'}$ contains each of the subsets $S_i\in \mathcal{S}$ repeated $c$ times.
\end{definition}
\begin{proposition}
Every c-domatic problem can be reduced to the c-DSCP. \label{multiDSCP}
\end{proposition}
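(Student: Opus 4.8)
The plan is to build the reduction by composing the two reductions already established and then checking that their composite is literally a c-DSCP. Given a c-domatic instance $(G,c)$, I would apply \texttt{GraphToSet} to $G$ to obtain the DSCP instance $(\mathcal{U},\mathcal{S})$ with $\mathcal{U}=V(G)$ and $\mathcal{S}=\{N^+_G(v_1),\ldots,N^+_G(v_n)\}$, and take as the target the c-DSCP on this instance. To prove this c-DSCP solves the original problem, I would route through the blown-up graph $G'$: by Lemma \ref{domcdom} the c-domatic problem on $G$ is equivalent to ordinary domatic partitioning on $G'$, and by Proposition \ref{domDSCP} the latter reduces, via \texttt{GraphToSet}, to the DSCP on $\mathcal{U}'=V(G')$ with subsets $\{\,N^+_{G'}(v):v\in V(G')\,\}$. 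Every step here runs in time polynomial in $|V(G)|$ and $c$, so the overall construction is polynomial.

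The heart of the argument is a direct computation of the inclusive neighbourhoods in $G'$. Writing $v_j^{(k)}$ for the $k$-th copy of $v_j\in V(G)$, step (ii) of \texttt{DomCopy} makes every copy of $v_j$ adjacent to every other copy of $v_j$, while step (iii) makes every copy of $v_j$ adjacent to every copy of each neighbour $v_i\in N_G(v_j)$. Hence $N^+_{G'}(v_j^{(k)})=\{\,v_i^{(l)}:v_i\in N^+_G(v_j),\ 1\le l\le c\,\}$, which is independent of the copy index $k$. Two consequences follow. First, the $c$ subsets $N^+_{G'}(v_j^{(1)}),\ldots,N^+_{G'}(v_j^{(c)})$ coincide, so the subset family on $V(G')$ is exactly each $N^+_G(v_j)$ repeated $c$ times. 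Second, since each such subset contains either all $c$ copies of an element or none of them, a family of subsets covers $\mathcal{U}'$ if and only if it covers one representative copy per original vertex. Collapsing the $c$ copies of each element to a single element therefore converts the DSCP on $(\mathcal{U}',\cdot)$ into the DSCP on $\big(V(G),\{N^+_G(v_j)\ \text{each repeated}\ c\ \text{times}\}\big)$ without altering the number of pairwise-disjoint set covers; by definition this is precisely the c-DSCP on \texttt{GraphToSet}$(G)$, which finishes the chain.

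The main subtlety to nail down is this collapsing step: I must verify that covering the $c|V(G)|$-element universe $\mathcal{U}'$ is genuinely equivalent to covering the original $|V(G)|$-element universe, which is exactly where the all-or-nothing structure of the copied neighbourhoods — and hence steps (ii) and (iii) of \texttt{DomCopy} — is indispensable. A second point, inherited from Proposition \ref{domDSCP} and from the c-domatic definition, is the mismatch between the ``at least one'' requirement on dominating sets and the fact that a DSCP never forces every subset to be used; I would record that any vertex (or copy) left unused by a family of disjoint dominating sets can be absorbed into one of them, since adding a vertex to a dominating set preserves domination, so the optimal counts agree. With these two observations in place, the equality of optimal values, and therefore the reduction, follows directly from Lemma \ref{domcdom} and Proposition \ref{domDSCP}.
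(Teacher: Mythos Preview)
Your proposal is correct and follows essentially the same route as the paper: reduce the c-domatic problem on $G$ to the domatic partition on $G'$ via Lemma~\ref{domcdom}, then to the DSCP on $G'$ via Proposition~\ref{domDSCP}, and finally identify that DSCP with the c-DSCP on \texttt{GraphToSet}$(G)$. The paper compresses your neighbourhood computation and collapsing step into the single remark that ``$\mathcal{S'}$ is effectively $c$ copies of each $N^+(v_i)$''; your explicit verification that $N^+_{G'}(v_j^{(k)})$ is independent of $k$ and contains all-or-no copies of each original vertex is exactly the content behind that remark, and your handling of the ``at least one'' clause is an extra point of rigour the paper omits.
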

\begin{proof}
The proof is similar to that of Proposition \ref{domDSCP}. We again use graph $G$ and $G'$ constructed as the c-domatic version of $G$ (refer to the \texttt{DomCopy} procedure in Section \ref{domsection}). Let us denote the set of vertices in $G$ by $V$ and the vertices in $G'$ by $V'$. We consider the inclusive neighbourhoods $\mathcal{S'}=\{N^{+}(v_{1}), N^{+}(v_{2}),\ldots\}$ of the vertices $v_i\in V'$. By the construction of $G'$, $\mathcal{S'}$ is effectively $c$ copies of each of $N^+(v_i)$, $\forall$ $v_i\in V$. Solving the DSCP on $\mathcal{S'}$ gives us the domatic partition of $G'$ by Lemma \ref{domDSCP}. We know from Lemma \ref{domcdom} that the domatic partition of $G'$ is equivalent to the c-domatic multi-partition of $G$. Therefore, the c-DSCP on $\mathcal{S}=\{N^{+}(v_{1}), N^{+}(v_{2}),\ldots, N^{+}(v_{n})\}$, $v_i\in V$ is equivalent to the c-domatic problem on $G$. Fig. \ref{cdomfig} illustrates this.
     \end{proof}

Now we look at how the MLCP-T problem can be solved on a given universe $\mathcal{U}$ and set of subsets $\mathcal{S}$.
\begin{proposition}
Solving the MLCP-T is equivalent to solving the c-DSCP for some $c$. \label{MLCPDSCP}
\end{proposition}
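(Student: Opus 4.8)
The plan is to exhibit a two-way, value-preserving correspondence between MLCP-T solutions and c-DSCP solutions, with the bridge being a normalization of the common on-time $T$. The crucial observation, using unit battery capacity (Remark \ref{battery1}), is that a sensor appearing in $m$ utilized set covers draws total battery $mT$, so feasibility forces $mT \le 1$, i.e. $m \le \lfloor 1/T \rfloor$. First I would argue that without loss of generality $T = 1/c$ for some positive integer $c$: given any feasible MLCP-T solution with common time $T$ and utilized set covers $C_1,\dots,C_N$, put $c = \lfloor 1/T\rfloor$. Every subset then lies in at most $c$ of the $C_j$, so stretching the common on-time up to $1/c \ge T$ violates no battery constraint while changing the objective from $NT$ to $N/c \ge NT$. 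Hence an optimal MLCP-T solution may be assumed to have $T = 1/c$, and its lifetime is exactly $N/c$, where $N$ is the number of utilized set covers.

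Next I would establish the equivalence for a fixed $c$. \textbf{(Forward.)} In an MLCP-T solution with $T = 1/c$, the battery constraints say precisely that each subset $S_i \in \mathcal{S}$ occurs in at most $c$ of the utilized set covers (repetitions allowed). Recalling that the c-DSCP is the DSCP on the multiset $\mathcal{S}'$ obtained by taking $c$ copies of every $S_i$, I would assign the at-most-$c$ occurrences of each $S_i$ to distinct copies in $\mathcal{S}'$; this injection turns $C_1,\dots,C_N$ into $N$ \emph{pairwise disjoint} set covers over $\mathcal{S}'$, i.e. a feasible c-DSCP solution of value $N$ (note that a repeated set cover in the MLCP-T solution simply uses a different copy of each subset, so the images remain disjoint). \textbf{(Backward.)} Conversely, $N$ pairwise disjoint set covers over $\mathcal{S}'$ use each copy at most once, hence each original $S_i$ at most $c$ times; running all $N$ of them for time $T = 1/c$ keeps every sensor's usage at most $c\cdot(1/c)=1$, giving a feasible MLCP-T solution of lifetime $N/c$. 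Thus, for each fixed $c$, MLCP-T solutions with $T=1/c$ and c-DSCP solutions correspond with MLCP-T lifetime equal to (number of set covers)$/c$.

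Finally I would combine the two steps to pin down the ``for some $c$''. Maximizing the MLCP-T objective amounts to maximizing $N/c$ over all feasible configurations, which by the correspondence equals $\max_c\, (\text{c-DSCP})/c$; an optimal MLCP-T is therefore realized by choosing the maximizing $c = c^\star$ and solving the corresponding c-DSCP. This is exactly the assertion that solving MLCP-T is equivalent to solving the c-DSCP for some $c$, and it is the set-cover analogue of the domatic lifetime objective $\max_c(\text{c-dom})/c$ of Definition \ref{DomLife}, linked through Proposition \ref{multiDSCP}. The step I expect to be the main obstacle is not either single direction but the $T = 1/c$ normalization itself, since it is what collapses the continuous optimization over $T$ into the discrete family of c-DSCP problems; to make the resulting reduction clean one must also check that only finitely many values of $c$ are relevant, which follows because $F_{min}$ upper-bounds any lifetime and hence $N/c \le F_{min}$ restricts the useful range of $c$.
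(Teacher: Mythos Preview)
Your proof is correct and follows essentially the same route as the paper: both establish the correspondence by noting that the unit battery constraint forces each sensor to appear in at most $1/T$ of the utilized covers, so an MLCP-T solution with common time $T$ is a feasible $c$-DSCP solution with $c=1/T$, and conversely any $c$-DSCP solution run for time $1/c$ is MLCP-T feasible. Your explicit $T\to 1/\lfloor 1/T\rfloor$ normalization in fact tightens the paper's argument, which simply writes $c_0=1/T_0$ without justifying that this is an integer.
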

\begin{proof}
%
Any feasible MLCP-T solution on $\mathcal{U}$ and $\mathcal{S}$ will consist of some set covers $\mathcal{C}_{T_0}$, each operated for time $T_0$. The total network lifetime is therefore $|\mathcal{C}_{T_0}|T_0$. Note that any sensor in $\mathcal{S}$ can appear a maximum of $c_0=1/T_0$ times in the set covers in $\mathcal{C}_{T_0}$, because of the unit battery constraint. If we now use the set of subsets $\mathcal{S}'=\mathcal{S}$ repeated $c_0$ times, we see that $\mathcal{C}_{T_0}$ would be a feasible solution to the DSCP solved on $\mathcal{U}$ and $\mathcal{S}'$, or in other words, the $c_0$-DSCP on $\mathcal{U}$ and $\mathcal{S}$.

Let the optimal solution of the MLCP-T consist of the collection of set covers $\mathcal{C}_{T_1}$, where each set cover is used for time $T_1$. This can similarly be thought of as solving the c-DSCP for some $c_1=1/T_1$. Note that the optimal network lifetime of the MLCP-T is therefore $|\mathcal{C}_{T_1}|T_1$. If the optimal solution of the \hbox{$c_1$-DSCP} was some collection of set covers $\mathcal{C}_{c_1}$ such that $|\mathcal{C}_{c_1}|>|\mathcal{C}_{T_1}|$, then we could use those $|\mathcal{C}_{c_1}|$ set covers each for time $T_1$ (since there are still moat most $1/T_1$ set covers containing any sensor) and obtain a network lifetime higher than $|\mathcal{C}_{T_1}|T_1$ for the MLCP-T. This contradicts the fact that the optimal network lifetime is $|\mathcal{C}_{T_1}|T_1$.
     \end{proof}

All that is left to prove Lemma \ref{reducibility} is to tie up Propositions \ref{multiDSCP} and \ref{MLCPDSCP} to the domatic lifetime problem. 

We know from the proof of Lemma \ref{MLCPDSCP} that solving the MLCP-T is equivalent to solving the c-DSCP for some $c=c_1$. The optimal solution to the MLCP-T thus obtained will be equal to $|\mathcal{C}_{c_1}|T_1= |\mathcal{C}_{c_1}|/c_1$, where $\mathcal{C}_{c_1}$ is the collection of set covers obtained as the solution of the $c_1$-DSCP, and $T_1=1/c_1$ is the time for which each set cover is utilized in the optimal MLCP-T solution.

As proved in Lemma \ref{multiDSCP}, the \emph{c-domatic} problem is reducible to c-DSCP. When we allow $c$ to vary such that the $\emph{(c-dom)}/c$ is maximized (the domatic lifetime problem), we are effectively finding the $\max_c |\mathcal{C}_{c}|/c$ for the c-DSCP reduced from the c-domatic problem. This is the optimal solution to the MLCP-T, by the proof of Proposition \ref{MLCPDSCP}.

So the domatic lifetime problem has been reduced to the MLCP-T, and this proves Lemma \ref{reducibility}.

Lemmas \ref{reducibility} and \ref{domaticlemma} together lead to the following Lemma.
\begin{lemma}
The MLCP-T problem is $\ln n$ hard. \label{Thard}
\end{lemma}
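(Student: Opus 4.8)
The plan is to derive Lemma \ref{Thard} directly from the two ingredients already assembled, namely the reduction of Lemma \ref{reducibility} and the hardness of Lemma \ref{domaticlemma}, invoking the hardness-transfer principle stated at the start of Section \ref{MLCPishard} (that reducibility of an $f$-hard problem to another problem renders the latter $f$-hard to approximate). First I would set up the argument by contradiction: assume that the MLCP-T admits a polynomial-time approximation algorithm $\mathcal{A}$ with ratio $(1-\epsilon)\ln n$ for some $\epsilon>0$, where $n=|\mathcal{U}|$ is the number of elements.

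Next I would take an arbitrary instance of the domatic lifetime problem on a graph $G$ drawn from the hard class of Lemma \ref{hardclass}, with $|V|=n$ vertices, and feed it through the reduction of Lemma \ref{reducibility}. Here the key bookkeeping is that the \texttt{GraphToSet} step (Proposition \ref{domDSCP}) sets $\mathcal{U}=V$, so the number of targets in the resulting MLCP-T instance equals the number of vertices $n$ exactly; thus the parameter controlling the $\ln n$ factor is preserved identically across the reduction, and no logarithmic slack is introduced. Running $\mathcal{A}$ on this instance and pulling the returned solution back through Propositions \ref{MLCPDSCP} and \ref{multiDSCP} (a feasible MLCP-T solution is a feasible c-DSCP solution for $c=1/T$, which in turn projects onto a feasible c-domatic multi-partition of $G$) yields a feasible domatic lifetime solution in polynomial time.

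The crucial point, and the step I expect to require the most care, is that the reduction of Lemma \ref{reducibility} is not merely a mapping of instances but is \emph{value-preserving}: by the closing argument of Section \ref{domlifeMLCP}, the optimum of the MLCP-T equals $\max_c (\text{c-dom})/c$, which is precisely the optimum of the domatic lifetime problem. Consequently the approximation ratio achieved by $\mathcal{A}$ carries over unchanged, so the pulled-back solution approximates the domatic lifetime optimum within $(1-\epsilon)\ln n$. This contradicts Lemma \ref{domaticlemma}. Hence no such $\mathcal{A}$ can exist, and the MLCP-T is $\ln n$ hard.
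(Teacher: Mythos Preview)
Your proposal is correct and follows essentially the same approach as the paper, which simply states that Lemmas \ref{reducibility} and \ref{domaticlemma} together yield Lemma \ref{Thard}. You have unpacked this one-line argument with appropriate care, in particular noting that the \texttt{GraphToSet} step preserves the parameter $n$ exactly and that the reduction is value-preserving, so the approximation ratio transfers without loss.
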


We are now left with the task of tying up the results of Section \ref{domlifeMLCP} with the MLCP. We do that in the next section.
\subsection{Connecting the MLCP-T to the MLCP} \label{endc}
The connection between the MLCP and MLCP-T was briefly touched upon in Example \ref{Texample}. We will explore it more closely now, using Definition \ref{coincide}.
\begin{definition}[Coincidence of Solutions]\label{coincide}
The solutions of two problems are said to coincide if a feasible solution to one problem can be converted into a feasible solution to the other problem without changing the objective function (not necessarily in polynomial time).
\end{definition}
\begin{lemma}
The solutions of the MLCP and MLCP-T problems coincide. \label{Teq}
\end{lemma}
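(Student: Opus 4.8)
The plan is to show that the solutions of the MLCP and MLCP-T coincide in the sense of Definition \ref{coincide}, which requires two directions of conversion preserving the objective function (network lifetime), not necessarily in polynomial time. One direction is trivial: any feasible MLCP-T solution is, by Definition \ref{Tdef}, already a feasible MLCP solution (all set covers simply happen to run for equal time $T$), so it carries over with the same objective value. The substance lies in the reverse direction, converting an arbitrary MLCP solution into an MLCP-T solution of the same lifetime.

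First I would take a feasible MLCP solution consisting of utilized set covers $C_1,\ldots,C_\ell$ with times $t_1,\ldots,t_\ell$ and total lifetime $T_{\text{tot}}=\sum_j t_j$. By Lemma \ref{Scovers}, I may assume without loss of generality that $\ell\le|\mathcal{S}|$, so the solution is finitely described, and all $t_j$ are rational (or can be taken rational without decreasing the objective, since the LP has rational data). The key idea, foreshadowed in Example \ref{Texample}, is to pick a common sub-interval length $T$ that divides every $t_j$ evenly, and then replace each $C_j$ by $t_j/T$ identical copies of $C_j$, each run for time $T$. Concretely, writing each $t_j$ as a fraction, I would let $T=1/q$ where $q$ is a common denominator of the $t_j$'s (equivalently, the reciprocal of a common multiple argument), so that every $t_j=k_j T$ for some nonnegative integer $k_j$. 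The resulting schedule uses the multiset of set covers obtained by repeating each $C_j$ exactly $k_j$ times, each operated for the uniform duration $T$, which is precisely an MLCP-T solution.

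Next I would verify feasibility and objective preservation. The total lifetime is $\sum_j k_j T=\sum_j t_j=T_{\text{tot}}$, so the objective is unchanged. The battery constraints are inherited directly: for each sensor $S_i$, the total on-time in the new schedule is $\sum_{j:\,S_i\in C_j} k_j T=\sum_{j:\,S_i\in C_j} t_j\le 1$, which is exactly the original constraint $\sum_j C_{ij}t_j\le b_i=1$ from Problem \ref{def_MLCP}. Thus every sensor respects its unit battery, and all targets remain covered at every instant because each individual set cover is genuinely a set cover. This establishes that an MLCP solution converts into an MLCP-T solution with identical objective.

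The main obstacle to anticipate is the rationality/commensurability issue: the splitting trick needs a single $T$ that simultaneously divides all the $t_j$'s, which only works when the $t_j$ are rational. I would handle this by invoking Lemma \ref{Scovers} to reduce to a bounded-size LP with rational coefficients, whose vertex (basic feasible) optimal solution has rational coordinates, so the optimal $t_j$ may be taken rational; for a non-optimal feasible solution one argues the same way after re-solving the induced LP. Since Definition \ref{coincide} explicitly permits the conversion to be non-polynomial, there is no efficiency burden here, and the choice of a possibly enormous common denominator $q$ (hence very small $T$ and many repeated set covers) is perfectly admissible. With both directions established, the solutions of the MLCP and MLCP-T coincide, proving Lemma \ref{Teq}.
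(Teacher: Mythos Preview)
Your proposal is correct and follows essentially the same approach as the paper: one direction is trivial, and for the other you take rational $t_j$'s, pick $T$ as the reciprocal of a common denominator, and replace each $C_j$ by $t_j/T$ copies each run for time $T$. Your write-up is in fact slightly more careful than the paper's own proof, since you explicitly verify the battery constraints and invoke Lemma~\ref{Scovers} to justify the rationality assumption, whereas the paper simply appeals to the LP's corner-point solutions being rational.
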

\begin{proof}
A feasible solution to the MLCP-T problem is also a feasible solution to the MLCP, by definition. To show the converse, let us take a solution of the MLCP that we wish to convert into a solution to the MLCP-T. Without loss of generality, let all $t_i$s be rational, since the optimal solution of the MLCP will be a corner point solution of the LP in Problem \ref{def_MLCP} and thus have rational $t_i$s. In order to convert this to a feasible MLCP-T problem, take the least common denominator (LCM) of all $t_i>0$, and call this $c$. Let $T=1/c$. Note that all $t_i$s are now multiples of $T$. Now create a solution to the MLCP in which $t_i/T$ copies of set cover $C_i$ appear. Each set cover in the new solution is therefore utilized for time $T$, and this is a feasible MLCP-T solution with the same objective function.
     \end{proof}
While Lemma \ref{Teq} may seem trivial to the reader, we provide a warning here concerning conclusions that must be drawn from it.
\begin{remark}
The coincidence of solutions of the MLCP and MLCP-T should not be interpreted as equivalence (two sided reduction implies equivalence). Recall that a reduction is valid only if it can be carried out in polynomial time. Note here, however, that the parameter $T$ (or equivalently, $c$) poses problems. Depending on the problem instance, the parameter $c=1/T$ may be exponentially large in $\mathcal{S}$, in which case making copies of the set covers is not a polynomial time operation. \label{crem}
\end{remark}

Remark \ref{crem} basically tells us that Lemma \ref{Thard} is insufficient to prove that the MLCP is $\ln n$ hard, since the reduction from the MLCP-T to the MLCP cannot be accomplished in polynomial time.

It is therefore necessary to carefully handle the parameter $c$ in the reduction from MLCP-T to MLCP. Note here that $c$ depends on the problem instance, but does not directly depend on the input parameters $n$ or $|\mathcal{S}|$. In this section, we illustrate a method by which the difficulty posed by $c$ can be circumvented. This technique may be useful in analyzing the hardness of other suitably formulated LPs for NP-hard problems.

Our method will provide a proof of Theorem \ref{lnnharddone} by contradiction, using the Lemmas and Propositions in Sections \ref{domsection} and \ref{domlifeMLCP}. Let us assume that Theorem \ref{lnnharddone} is false, i.e., let there exist some algorithm \texttt{DEFY}, which runs in polynomial time and produces a solution $X>OPT/\ln n$ for all input instances of the MLCP ($OPT$ here refers to the optimal solution of the MLCP). Since \texttt{DEFY} runs in polynomial time, it must output a polynomial number of utilized set covers (refer Definition \ref{utilized}). By Lemma \ref{Scovers}, we know that we can consider the number of utilized set covers to be at most $|\mathcal{S}|$. Consider these $|\mathcal{S}|$ set covers to be the solution of \texttt{DEFY}, represented by the set $\mathcal{C}_{on}$, where the time for which set cover $C_i\in \mathcal{C}_{on}$ is on is $t_i$. Recall that $|\mathcal{C}_{on}|\leq|\mathcal{S}|$.

To handle the parameter $c$, we now propose a \texttt{PrecisionControl} procedure. We first make a few observations: 

1. All $t_i$s corresponding to the set covers $C_i\in\mathcal{C}_{on}$ are rational numbers. This is because the $t_i$s were obtained as a corner point solution of an LP with $|\mathcal{S}|$ constraints, each of the form $A\bf{t}\leq 1$, where $A$ is some $0-1$ row matrix. All $t_i$s will therefore be of the form $n/d$, where $d\in\{1,2,\ldots, |\mathcal{S}|\}$ and $n\leq d$. 

2. The least common denominator of all $t_i$s is in the worst case $LCM(1,2,\ldots, |\mathcal{S}|)$, which grows exponentially in $|\mathcal{S}|$. The parameter $c$ in the reduction from MLCP-T to MLCP can therefore be exponential in the input size. 

To work around this, let us fix a minimum precision (maximum resolution) $\epsilon/|\mathcal{S}|^2$ and truncate all $t_i$s to that precision. Note that we are effectively enforcing that the LCM $c$ in the proof of Lemma \ref{Teq} to be $\epsilon/|\mathcal{S}|^2$, which is polynomial in $|\mathcal{S}|$. The most we lose in terms of the solution is $\epsilon/|\mathcal{S}|$, since there are $|\mathcal{S}|$ $t_i$s and each is reduced by at most $\epsilon/|\mathcal{S}|^2$. So the solution outputted by \texttt{DEFY} is now $X'=X-\epsilon/|\mathcal{S}|>OPT/\ln n$. The approximation ratio is still greater than $\ln n$ even after the \texttt{PrecisionControl} procedure.\footnote{Note than $\epsilon/|\mathcal{S}|^2$ could have been replaced by $\epsilon/|\mathcal{S}|^\alpha$ for any $\alpha>1$ to ensure that the approximation ratio of \texttt{DEFY} stayed greater than $\ln n$.} The change now is that a conversion to the MLCP-T solution will involve $T=\frac{1}{c}=\frac{1}{\epsilon/|\mathcal{S}|^2}$. 

We can use the \texttt{DEFY} algorithm to construct a better-than-$\ln n$ approximation to the MLCP-T problem in polynomial time. To show this contradiction to Lemma \ref{Thard}, let us start by taking a graph $G$ in the grey region of Fig. \ref{domaticlifetime}, for which, by Lemma \ref{domaticlemma}, the domatic lifetime problem is $\ln n$ hard. By the \texttt{GraphToSet} procedure, construct a universe $\mathcal{U}_G$ and a set of subsets $\mathcal{S}_G$. By Lemma \ref{Thard}, the MLCP-T problem on $\mathcal{U}_G$ and $\mathcal{S}_G$ must be $\ln n$ hard. Let the optimal solution of the MLCP-T (and MLCP, by Lemma \ref{Teq}) be represented by $OPT(G)$. Use the \texttt{DEFY} algorithm with the \texttt{PrecisionControl} procedure to solve the MLCP on $\mathcal{U}_G$ and $\mathcal{S}_G$ and let the solution returned be $X(G)>OPT(G)/\ln n$. Note that this solution occurs at $T=\epsilon/|\mathcal{S}_G|^2$ in the context of the MLCP-T. A solution to the MLCP-T can therefore be outputted by making a maximum of $|\mathcal{S}_G|^2/\epsilon$ copies of any set cover in the MLCP solution. A better than $\ln n$ approximation to the MLCP-T can therefore be generated in polynomial time. To sum it up, with control over $T$, we can generate a better than $\ln n$ approximation to the MLCP-T in polynomial time using the better than $\ln n$ approximation to the MLCP. This is a contradiction to Lemma \ref{Thard}.

Theorem \ref{lnnharddone}, with which we started Section \ref{MLCPishard}, is therefore proved.

We have therefore proved that the $1+\ln n$ approximation algorithm derived in \cite{berman2004power} cannot be improved upon in polynomial time unless \hbox{$NP \subseteq DTIME(n^{\BigO{\ln\ln n}})$}. In the next section, we present a few novel algorithms which use the DSCP to approximate the MLCP optimally. We then present some advantages of our algorithms over existing optimal algorithms.

\section{Algorithms for the MLCP} \label{MLCPalgos}
In this section, we first provide algorithms for approximating the DSCP. We then show that these algorithms are also equivalent approximations to the MLCP. In prior work, this approach has only led to heuristic algorithms \cite{cardei2005improving, ahn2011new, slijepcevic2001power, lai2007effective}, but we provide the optimal $\ln n$ approximation using this method.

We recall to the reader Remark \ref{battery1}, by which we assumed that battery capacities $b_i$ of all sensors $S_i$ are equal to unity (or more generally, equal), and so justified approaching the MLCP through the DSCP. The natural question that arises is what if $b_i\neq b_j$ for some $i\neq j$. We will briefly address that in the following remark.
\begin{remark}
If $b_i\neq b_j$ for some $i\neq j$, take the greatest common divisor (GCD) of all $\{b_i\}$ and call this $B_{com}$. Create a new set of sensors $\mathcal{S}'$ containing $b_i/B_{com}$ copies of each sensor $S_i\in\mathcal{S}$. Each sensor in $\mathcal{S}'$ can now be thought of as having equal battery capacity $B_{com}$, a case addressed by Remark \ref{battery1}. Note that this will only be a polynomial time operation if $\max_i b_i/B_{com}$ is of polynomial size in $n$ and $|\mathcal{S}|$. For the general case of $b_i\neq b_j$ for some $i\neq j$, the results of Section \ref{MLCPalgos} apply only if $\max_i b_i/B_{com}=poly(n,|\mathcal{S}|)$. \label{LCMbattery}
\end{remark}

With that minor question addressed, we now return to the DSCP assuming Remark \ref{battery1}. There are a few observations to be made about the DSCP (refer Problem \ref{def_DSCP} for formal definition) with respect to the terminology we defined in Section \ref{terminology}. The optimal solution to the DSCP cannot be more than $F_{min}$, since the element with frequency $F_{min}$ can only be present in a maximum of $F_{min}$ disjoint set covers. We therefore have a trivial upper bound of $F_{min}$ on the optimal solution to the DSCP.

The DSCP is NP-complete \cite{cardei2005improving}, but we show that when $F_{max}=F_{min}=2$, a polynomial time solution to the DSCP exists. This algorithm is presented in Appendix \ref{fmaxfminappendix}.

We now present algorithms for the DSCP when $\mathcal{U}$ and $\mathcal{S}$ can be arbitrary. The first algorithm uses an elegant result from the domatic number problem (Definition \ref{domnumber}).
\subsection{$\ln \big(|\mathcal{S}|+n\big)$ approximation to the DSCP}
We use the domatic partitioning (domatic number) problem (refer Defs. \ref{dompartition} and \ref{domnumber}) in this algorithm. Recall that the domatic partitioning problem involved partitioning a graph into dominating sets, each of which we called a \emph{domatic set}.

We will now show that the DSCP is closely related, but not equivalent, to the domatic partitioning problem. Our purpose in this section is to try and solve the DSCP as a domatic partitioning problem on a suitably defined graph $\mathcal{R}$ as follows.

\begin{definition}[$\mathcal{R}(V,E)$] 
Each subset $S_i$ is represented by a vertex $v_i\in V$. Two vertices $v_i$ and $v_j$ are connected by an edge $e\in E$ if $S_i$ and $S_j$ share at least one element, i.e. $S_i\bigcap S_j\neq \phi$. \label{simplegraph}
\end{definition}

\begin{remark}
Every set cover obtained as a solution to the DSCP is a domatic set in graph $\mathcal{R}$.
\end{remark}
\begin{proof}
By definition, a set cover contains all the elements in the universe. Therefore, by the definition of $\mathcal{R}$, the set of vertices $V_i$ that represent a set cover $C_i$ must therefore be connected to all other vertices in the graph $V\backslash V_i$ because $C_i$ shares elements with all other subsets $\mathcal{S}\backslash C_i$.
     \end{proof}
\begin{remark}
Every domatic partition of $\mathcal{R}$ does not correspond to a solution of the DSCP. 
\end{remark}
\begin{proof}
We show this through an example. Consider subsets $\mathcal{S}=\{\{2\},\{1,2\},\{1,2,3\}\}$ and the universe $\mathcal{U}=\{1,2,3\}$. Graph $\mathcal{R}$ therefore has 3 vertices which form a clique, that has three domatic sets. $\mathcal{S}$, however, has only one disjoint set cover. The element $2$ acts as a \emph{proxy} for the other two elements.
     \end{proof}
So we have shown that every DSCP solution is a domatic partition of graph $\mathcal{R}$, but every domatic partition of $\mathcal{R}$ is not a DSCP solution. So, solving the domatic partitioning problem on $\mathcal{R}$ need not necessarily give us a solution to the DSCP. We now propose a polynomial time algorithm for the DSCP - the \emph{GraphPartition} algorithm - that uses domatic partitioning.
\begin{lemma}
The GraphPartition algorithm has an approximation ratio of $2\cdot\ln(\mathcal{|S|}+n)$. \label{Sapprox}
\end{lemma}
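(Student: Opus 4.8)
The plan is to realise the \emph{GraphPartition} algorithm as a domatic partitioning on an \emph{augmented} graph (so that the number of vertices becomes $|\mathcal{S}|+n$, explaining the logarithmic factor) and then to repair the ``proxy'' gap between dominating sets and set covers at the cost of a factor of $2$. Rather than partition the subset-only graph $\mathcal{R}$ of Definition \ref{simplegraph} — whose domatic sets need not be set covers, as the preceding remarks show — I would run Feige's algorithm on the graph $\mathcal{G}$ with vertex set $\mathcal{S}\cup\mathcal{U}$ obtained by \textbf{(i)} making the $|\mathcal{S}|$ subset-vertices into a clique, \textbf{(ii)} joining subset-vertex $S_i$ to element-vertex $e$ whenever $e\in S_i$, and \textbf{(iii)} leaving the $n$ element-vertices mutually non-adjacent. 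Here $|V(\mathcal{G})|=|\mathcal{S}|+n$.

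Next I would lower-bound the minimum degree $\delta$ of $\mathcal{G}$ and invoke Lemma \ref{lnndomatic}. Each subset-vertex $S_i$ has degree $(|\mathcal{S}|-1)+|S_i|\ge|\mathcal{S}|$ (assuming non-empty subsets), while each element-vertex $e$ has degree exactly $F_e\ge F_{min}$; since $F_{min}\le|\mathcal{S}|$ always, we obtain $\delta\ge F_{min}$. Lemma \ref{lnndomatic} then returns at least $(\delta+1)/\ln(|\mathcal{S}|+n)\ge F_{min}/\ln(|\mathcal{S}|+n)$ dominating sets $D_1,\dots,D_k$ that partition $V(\mathcal{G})$. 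Since the optimal DSCP value is at most $F_{min}$, this already gives $k\ge F_{min}/\ln(|\mathcal{S}|+n)$ and hence $k$ is at least the optimum divided by $\ln(|\mathcal{S}|+n)$.

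The crux is converting these dominating sets back into genuine \emph{disjoint set covers}, and this is where the factor of $2$ enters. For a part $D_j$ write $C_j=D_j\cap\mathcal{S}$. The key observation is that if an element-vertex $e\notin D_j$, then $D_j$ can dominate $e$ only through a subset containing $e$ (element-vertices are adjacent only to the subsets containing them), so some $S_i\in C_j$ has $e\in S_i$; thus $C_j$ covers every element whose element-vertex is \emph{not} placed in part $j$. Because the $D_j$ partition $V(\mathcal{G})$, each element-vertex lies in exactly one part, so pairing the dominating sets and taking $C_j\cup C_{j'}$ (with $j\neq j'$) yields a valid set cover: an element could be left uncovered only if its element-vertex lay in both $D_j$ and $D_{j'}$, which is impossible. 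These $\lfloor k/2\rfloor$ covers are pairwise disjoint since the $D_j$ are, giving at least $\tfrac12\,F_{min}/\ln(|\mathcal{S}|+n)$ disjoint set covers, hence an approximation ratio of $2\ln(|\mathcal{S}|+n)$.

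I expect the conversion step to be the main obstacle: the remarks preceding the lemma already exhibit instances where a domatic partition of $\mathcal{R}$ grossly overcounts set covers through proxy elements, so the whole difficulty is to certify that the loss is bounded by a constant. The pairing argument is the device that achieves this, and its correctness rests squarely on two features of the construction of $\mathcal{G}$ — that the $D_j$ form a genuine partition (each element-vertex appears once) and that element-vertices are adjacent only to the subsets containing them. The remaining points (the minimum-degree computation $\delta\ge F_{min}$, the bound $F_{min}$ on the DSCP optimum, and polynomial running time, which is immediate since both Feige's algorithm and the pairing are polynomial) are routine once the construction is fixed.
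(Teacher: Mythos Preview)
Your proposal is correct and follows essentially the same route as the paper: augment with $n$ element-like vertices (the paper does this by adjoining singleton subsets $\{1\},\dots,\{n\}$ to $\mathcal{S}$ and building the intersection graph $\mathcal{R}'$, rather than adding explicit element-vertices and a subset-clique), apply Feige et~al.'s domatic-partition algorithm to obtain $\ge (F_{min}+1)/\ln(|\mathcal{S}|+n)$ parts, and then pair parts to discard the extraneous element/singleton vertices at the cost of a factor of~$2$. The only difference is cosmetic---the paper's $\mathcal{R}'$ joins two subset-vertices only when they share an element rather than making them a clique---but the minimum-degree bound $\delta\ge F_{min}$ and the pairing argument are identical.
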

\begin{proof}
The GraphPartition algorithm works as follows. By modifying graph $\mathcal{R}$, we make the DSCP and domatic partition problem equivalent, so that existing algorithms for domatic partitioning can be applied to the DSCP.
The first step is to remove the possibility of a \emph{proxy}. We add a collection of \emph{singleton} subsets $\mathcal{T} = \{\:\{1\},\{2\} ,\ldots, \{n\}\:\}$ to $\mathcal{S}$. Let $\mathcal{S'}=\mathcal{T}\bigcup \mathcal{S}$. We now construct graph $\mathcal{R'}$ from $\mathcal{S'}$ and $\mathcal{U}$ as defined in Definition \ref{simplegraph}. We now prove the assertion of Proposition \ref{graphtoset} to advance the proof of Lemma \ref{Sapprox}.
\begin{proposition}
In a domatic partition of $\mathcal{R'}$, all the \hbox{domatic} sets are set covers. \label{graphtoset}
\end{proposition}
\begin{proof}
Each domatic set $d_{i}$ in $\mathcal{R'}$ must have edges connecting it to all the vertices corresponding to the singleton elements $\{ j\}\in \mathcal{T}$ that it does not contain. Hence, the subsets corresponding to the vertices in $d_{i}$ must have all the elements in the universe, and therefore form a set cover. The domatic partition on $\mathcal{R'}$ is therefore the DSCP solved on $\mathcal{S'}$ and $\mathcal{U}$.
     \end{proof}

We see that the minimum degree of $\mathcal{R'}$ is $\delta = F_{min}$, since every singleton set has a degree $F_{min}$ in $\mathcal{R'}$. The number of vertices $m$ in $\mathcal{R'}$ is $|\,\mathcal{S'}|=|\,\mathcal{S}+\mathcal{T}|=|\mathcal{S}|+n$. So, using Lemma \ref{lnndomatic}, the number of disjoint set covers $\{ C_1' , C_2' , C_3' ... \}$ obtained are at least $\frac{F_{min} + 1}{\ln\big(|\mathcal{S}| + n\big)}$.

The set covers $C_i'$ obtained from $\mathcal{R'}$ still have singleton subsets $\{t\}\in \mathcal{T}$ contained in them, which were not part of the original problem. Consider a pairwise combination of two such set covers $C_{k}'' = C_p'\bigcup C_q'$. We see that each element $i\in \mathcal{U}$ is covered at least twice in $C_k''$, since $C_p'$ and $C_q'$ were set covers themselves. Thus, after removing all singletons present in $C_{k}''$, we will be left with another set, say $C_{k}^*$, in which each element is covered at least once. $C_{k}^*$ is therefore a set cover which uses subsets only from $\mathcal{S}$. We can therefore combine all set covers $C_p'$ pairwise and discard the singletons. The number of disjoint set covers obtained is therefore $\frac{F_{min} + 1}{2\ln\big(|\mathcal{S}| + n\big)}$.

An upper bound on the DSCP is $F_{min}$, and so we have an algorithm with approximation ratio $2\cdot\ln(\mathcal{|S|}+n)$, proving Lemma \ref{Sapprox}.
     \end{proof}

Next, we present a hypergraph representation of the problem, which will be useful in the description of later algorithms.
\subsection{Hypergraph Equivalence}
The representation of the universe $\mathcal{U}$ and a set of subsets $\mathcal{S}$ is also possible in hypergraphs. We will use these hypergraphs in two of the algorithms that follow. There are two hypergraph constructions possible - the primal and the dual.
\begin{definition}[Primal Hypergraph $\mathcal{P}(V,E)$]
Each vertex $v\in V$ represents an element $i\in \mathcal{U}$ and each hyperedge $e\in E$ represents a subset $S_j\in\mathcal{S}$. A hyperedge contains a vertex if the corresponding subset $S_j$ contains the element represented by that vertex.
\end{definition}
\begin{definition}[Dual Hypergraph $\mathcal{H}(V,E)$]
Each vertex $v\in V$ represents a subset $S_j\in \mathcal{S}$ and each hyperedge $e\in E$ represents an element $i\in\mathcal{U}$. A hyperedge contains a vertex if the corresponding element was present in the subset $S_j$ represented by that vertex. \label{Dual}
\end{definition}
Refer to Fig. \ref{PrimalDual} for examples of the primal and dual hypergraphs.

Note that a few of the terms defined in Section \ref{terminology} become easy to see in hypergraphs $\mathcal{P}$ and $\mathcal{H}$. $n$ is the number of vertices in $\mathcal{P}$ and the number of hyperedges in $\mathcal{H}$. $|\mathcal{S}|$ is the number of hyperedges in $\mathcal{P}$ and the number of vertices in $\mathcal{H}$. $R$ is the size of the largest hyperedge in $\mathcal{P}$, where the \emph{size} of a hyperedge is the number of vertices it contains. $F_i$ represents the size of hyperedge $i$ in $\mathcal{H}$. $F_{min}$ is represented by the size of the smallest hyperedge in $\mathcal{H}$. $F_{max}$ is the size of the largest hyperedge in $\mathcal{H}$. $\tau_i$ is the hyperedge-degree of hyperedge $i$ in $\mathcal{H}$, where the hyperedge-degree of a hyperedge is defined as the number of other hyperedges with which it shares vertices. $\Delta_{\tau}$ is the maximum hyperedge-degree of $\mathcal{H}$.

In Fig. \ref{PrimalDual}, $\tau_{1}=3$, $\tau_{2}=3$, $\tau_{3}=2$ and $\tau_{4}=2$. $\Delta_{\tau}=3$.\\

\begin{figure}[H]
\centering
  \begin{tabular}{@{}cc@{}}
    \psfrag{1}[][][0.8]{$1$}
    \psfrag{2}[][][0.8]{$2$}
    \psfrag{3}[][][0.8]{$3$}
    \psfrag{4}[][][0.8]{$4$}
    \psfrag{a}[][l][0.9]{(A)}
    \psfrag{\{1,2,4\}}[][l][.6]{$\{1,2,4\}$}
    \psfrag{\{1,3\}}[][r][.6]{$\{1,3\}$}
    \psfrag{\{2,3\}}[][][.6]{$\{2,3\}$}
    \includegraphics[width=.20\textwidth]{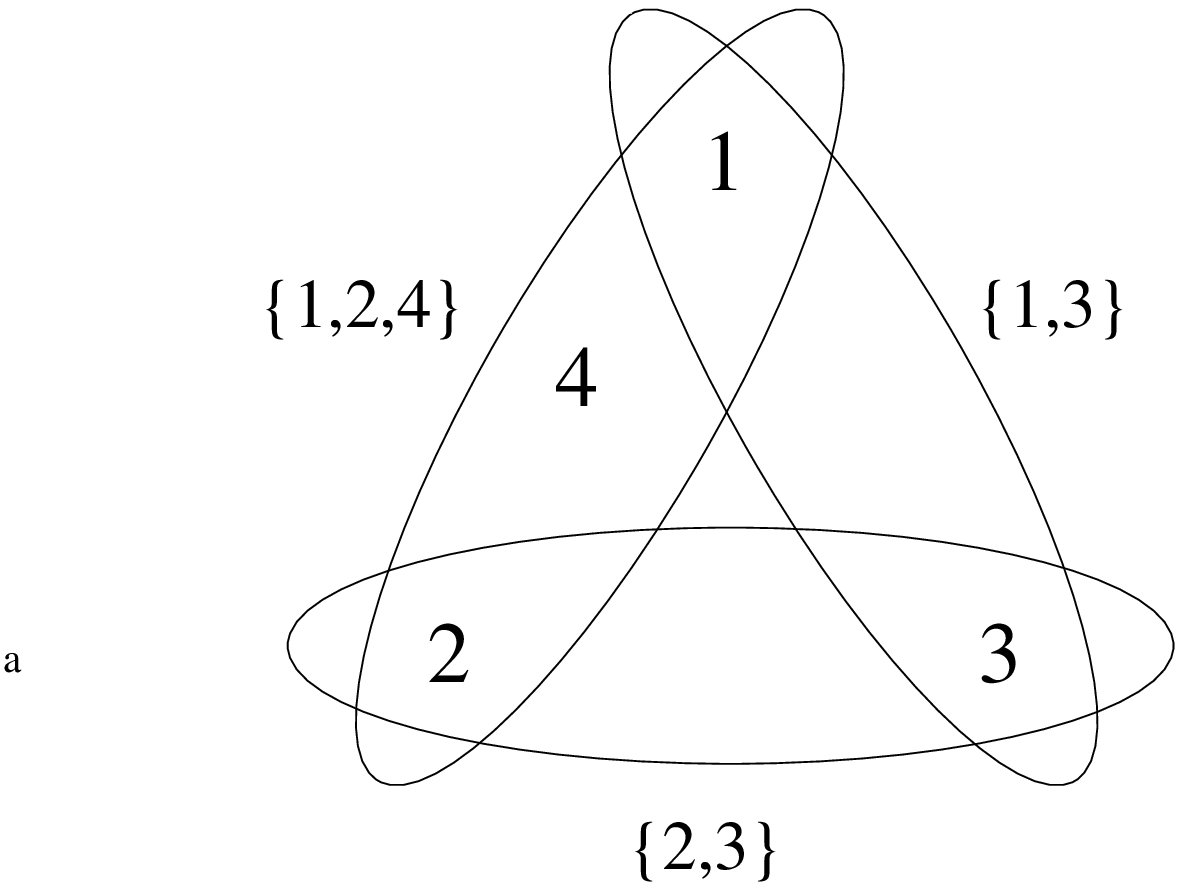}  &
    \hspace{4mm}
    \psfrag{1}[][l][0.8]{$1$}
    \psfrag{2}[][r][0.8]{$2$}
    \psfrag{3}[][][0.8]{$3$}
    \psfrag{4}[][b][0.8]{$4$}
    \psfrag{b}[][b][0.9]{(B)}
    \psfrag{\{1,2,4\}}[][][.48]{$\{1,2,4\}$}
    \psfrag{\{1,3\}}[][][.6]{$\{1,3\}$}
    \psfrag{\{2,3\}}[][][.6]{$\{2,3\}$}
    \includegraphics[width=.20\textwidth]{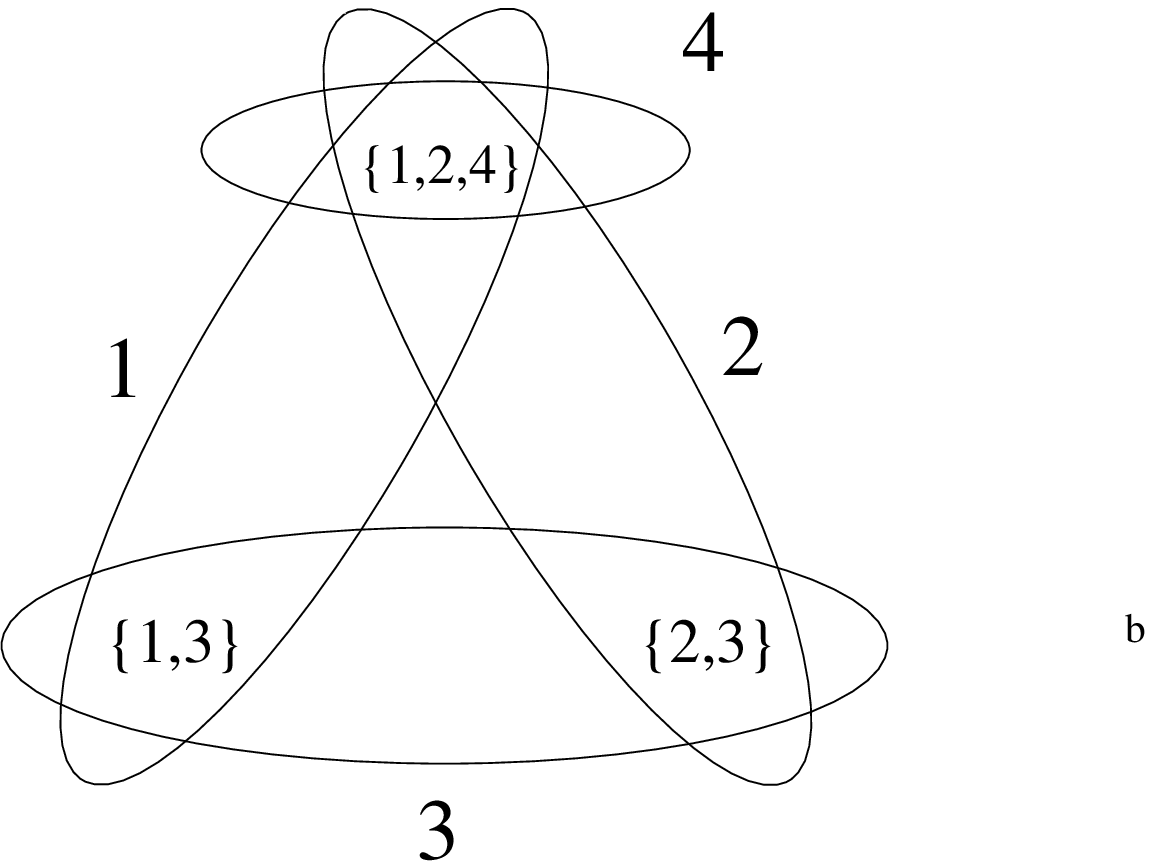} \\
  \end{tabular}
  \caption{Primal (A) and Dual (B) hypergraphs for $\mathcal{U} = \{ 1,2,3,4\} $ and $ \mathcal{S} = \{ \{ 1,2,4\}, \{2,3\}, \{1,3\} \}$} \label{PrimalDual}
\end{figure} 

We now state a definition that we will use to show how the DSCP can be solved on the hypergraph.

\begin{definition}[Polychromatic colouring \cite{bollobas2013cover}]
Poly-chromatic colouring of a hypergraph is defined as the colouring of its vertices with as many colours as possible (say $\ell$) such that each hyperedge contains vertices of all $\ell$ colours.
\end{definition}
\begin{theorem}
DSCP is equivalent to the polychromatic colouring of the dual hypergraph $\mathcal{H}$ (refer Definition \ref{Dual}), in which the subsets corresponding to the vertices of any one colour form a set cover. \label{DSCPpoly}
\end{theorem}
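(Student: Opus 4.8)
The plan is to establish the equivalence by exhibiting a correspondence, in both directions and in polynomial time, between a polychromatic colouring of $\mathcal{H}$ with $\ell$ colours and a collection of $\ell$ pairwise-disjoint set covers, so that the two optimal values coincide and the colour classes are exactly the set covers.

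First I would treat the direction from a colouring to disjoint set covers. Suppose $\mathcal{H}$ (Definition \ref{Dual}) admits a polychromatic colouring with $\ell$ colours. Fix any colour $k$ and let $C_k$ be the collection of subsets whose vertices are assigned colour $k$. The key observation is that in $\mathcal{H}$ the hyperedge $e_i$ corresponds to an element $i\in\mathcal{U}$ and contains exactly those vertices (subsets) that include $i$. Since the colouring is polychromatic, $e_i$ contains a vertex of colour $k$, i.e.\ some $S_j\in C_k$ with $i\in S_j$. Letting $i$ range over all of $\mathcal{U}$ shows that $C_k$ covers every element, so $C_k$ is a set cover. Because each vertex receives exactly one colour, the classes $C_1,\ldots,C_\ell$ are pairwise disjoint, yielding $\ell$ disjoint set covers as required by the DSCP (Problem \ref{def_DSCP}).

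For the converse I would take $\ell$ pairwise-disjoint set covers $C_1,\ldots,C_\ell$ and colour the vertices of $\mathcal{H}$ by assigning colour $k$ to every subset in $C_k$ (the disjointness of the $C_k$ ensures this is well defined), while any subset appearing in no $C_k$ is given an arbitrary colour. To verify this is polychromatic, fix a hyperedge $e_i$ and a colour $k$: since $C_k$ is a set cover, some $S_j\in C_k$ contains $i$, so the vertex $S_j$ lies in $e_i$ and carries colour $k$; hence $e_i$ meets all $\ell$ colours.

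The two maps preserve $\ell$ and are plainly computable in polynomial time, so the maximum number of disjoint set covers equals the polychromatic colouring number of $\mathcal{H}$, and the colour classes of an optimal colouring are precisely the set covers of an optimal DSCP solution. The only point needing care — and the mildest obstacle — is the treatment, in the converse direction, of subsets that participate in no set cover: assigning them arbitrary colours is harmless exactly because the polychromatic condition demands only that each hyperedge contain \emph{at least one} vertex of each colour, so the extra vertices can only reinforce colours already present.
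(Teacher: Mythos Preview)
Your proof is correct and follows essentially the same approach as the paper: identifying each colour class with a set cover via the fact that hyperedges of $\mathcal{H}$ correspond to elements of $\mathcal{U}$, and using single-colouring of vertices to ensure disjointness. In fact your argument is more complete than the paper's, which only spells out the colouring-to-covers direction; you also supply the converse and correctly handle subsets lying in no $C_k$, a detail the paper omits.
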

\begin{proof}
Let the dual hypergraph $\mathcal{H}$ be polychromatically coloured with $\ell$ colours. By definition, every colour is present in each hyperedge of $\mathcal{H}$. Let the vertices coloured using a particular colour $a$ be represented by the set $V_a$. Let the corresponding subsets be represented by the set $\mathcal{S}_a$. Since colour $a$ is present in all hyperedges and each hyperedge represents an element $i\in \mathcal{U}$, $\mathcal{S}_a$ must be a set cover. Therefore, every colour corresponds to a set cover. Since each vertex of the hypergraph $\mathcal{H}$, $v\in V$, can only be coloured with one colour, polychromatic colouring ensures that the set covers are disjoint. There are therefore $\ell$ disjoint set covers.
     \end{proof}
It is therefore sufficient to polychromatically colour the hypergraph $\mathcal{H}$ in order to solve the DSCP. We do this by a randomized algorithm which is later derandomized using the method of conditional probabilities.

\subsection{$\ln n$ approximation to the DSCP}
\label{lnnapprox}

Before stating the algorithm itself, we define some notation involved in the analysis.

An \emph{incomplete colouring} of the hypergraph $\mathcal{H}$ is defined as any colouring of all its vertices with $\ell'$ colours in which any hyperedge does not contain all the $\ell'$ colours. Given an incomplete colouring, we define the Boolean variable $A_{e,c}$, which is \emph{true} if edge $e$ does not contain a vertex with the colour $c$ and \emph{false} otherwise. Note that each true $A_{e,c}$ is a \emph{``bad event"}; any colour which is not contained in all edges cannot correspond to a set cover. Conversely, if every hyperedge $e$ contains all $\ell'$ colours, then we have a solution to the DSCP, where the subsets of the same colour represent a set cover. We want to avoid these bad events and achieve a polychromatic colouring with as many colours as possible. We also define $\mathcal{P}[A_{e,c}]$ as the probability that $A_{e,c}$ is true. Let $V(e)$ denote the set of vertices present in hyperedge $e$. Let $[\ell]$ denote the set of colours.  An \emph{invalid} colour is one which is not present in all the hyperedges. Let $L$ be a random variable which denotes the number of invalid colours. Let $\mathbf{E}[X]$ denote the expectation of a random variable $X$.
\begin{theorem}
Any hypergraph $\mathcal{H}$ can be coloured polychromatically with $F_{min}\big(1 - o(1)\big)/\ln n$ colours in polynomial time using a randomized algorithm. \label{Theorem1}
\end{theorem}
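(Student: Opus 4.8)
The plan is to colour each vertex of $\mathcal{H}$ independently and uniformly at random with one of $\ell$ colours, and then show that for a suitable choice of $\ell = F_{min}(1-o(1))/\ln n$, the expected number of invalid colours is strictly less than one, so that a fully polychromatic colouring with $\ell$ colours exists; finally I would derandomize via the method of conditional probabilities to obtain a polynomial-time algorithm. First I would fix the bad event $A_{e,c}$: for a fixed hyperedge $e$ and colour $c$, since the $|V(e)| \geq F_{min}$ vertices of $e$ are coloured independently, the probability that none of them receives colour $c$ is $\mathcal{P}[A_{e,c}] = (1 - 1/\ell)^{|V(e)|} \leq (1 - 1/\ell)^{F_{min}}$. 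A colour $c$ is invalid precisely when $A_{e,c}$ holds for at least one $e$, so by a union bound over the $n$ hyperedges, $\mathcal{P}[c \text{ invalid}] \leq n\,(1-1/\ell)^{F_{min}} \leq n\,e^{-F_{min}/\ell}$.

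Next I would bound the expected number of invalid colours. By linearity of expectation over the $\ell$ colours, $\mathbf{E}[L] \leq \ell\, n\, e^{-F_{min}/\ell}$. The goal is to choose $\ell$ as large as possible while keeping $\mathbf{E}[L] < 1$, because if the expected number of invalid colours is below one, some outcome of the random colouring has $L = 0$, i.e.\ every colour appears in every hyperedge, which by Theorem \ref{DSCPpoly} is exactly a polychromatic colouring and hence a valid DSCP solution with $\ell$ colours. Setting $\ell = F_{min}/((1+\eta)\ln n)$ for a small slack $\eta = \eta(n) \to 0$ makes $e^{-F_{min}/\ell} = n^{-(1+\eta)}$, so $\mathbf{E}[L] \leq \ell\, n \cdot n^{-(1+\eta)} = \ell\, n^{-\eta}$; since $\ell$ is polynomially bounded in $n$ and $\mathcal{S}$, choosing $\eta$ so that $n^{\eta}$ dominates $\ell$ (for instance $\eta$ a constant times $\ln\ln n / \ln n$) drives $\mathbf{E}[L] < 1$. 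This yields $\ell = F_{min}(1 - o(1))/\ln n$ colours, matching the statement.

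For the algorithmic (derandomized) part, I would process the vertices one at a time, at each step assigning the colour that minimizes the conditional expectation of $L$ given the colours fixed so far. Because $L$ is a sum of indicator variables whose conditional expectations are products of elementary per-vertex probabilities, each conditional expectation can be computed in polynomial time, and the standard method-of-conditional-probabilities guarantee ensures the final deterministic colouring has $L$ no larger than $\mathbf{E}[L] < 1$, hence $L = 0$.

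The main obstacle I anticipate is the delicate choice of the $o(1)$ slack: the factor $\ell$ in front of $n\,e^{-F_{min}/\ell}$ must be absorbed, and this requires $F_{min}$ to be large enough (growing with $n$) for the bound $\mathbf{E}[L]<1$ to hold, which is why the statement carries the $(1-o(1))$ factor rather than a clean constant. Care is also needed to verify that the per-step conditional expectations used in the derandomization are genuinely polynomial-time computable and that the chaining of conditional choices never increases the expectation — a routine but essential check.
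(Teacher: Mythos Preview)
Your overall strategy (random colouring, bound $\mathbf{E}[L]$, derandomize by conditional probabilities) matches the paper, but there is a genuine gap in how you handle the expected number of invalid colours. You insist on driving $\mathbf{E}[L]<1$ so that \emph{all} $\ell$ colours are valid. This forces $n^{\eta}>\ell$, and since $\ell=F_{min}/((1+\eta)\ln n)$, whenever $F_{min}$ is polynomial in $n$ (say $F_{min}=n^{\alpha}$) you need $\eta\gtrsim\alpha$, a constant. Your slack $\eta$ is then no longer $o(1)$, and the number of colours you obtain is only $\Theta(F_{min}/\ln n)$, not $F_{min}(1-o(1))/\ln n$ as the theorem claims. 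Your suggested choice $\eta\sim\ln\ln n/\ln n$ gives $n^{\eta}=(\ln n)^{O(1)}$, which cannot dominate $\ell$ once $F_{min}$ grows faster than polylogarithmically.

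The paper avoids this difficulty by \emph{not} requiring $\mathbf{E}[L]<1$. It takes $\ell=F_{min}/\ln(n\ln n)$, computes $\mathbf{E}[L]\le \ell\cdot n\cdot \frac{1}{n\ln n}=\ell/\ln n$, and then simply \emph{discards} the invalid colours: the surviving polychromatic palette has at least $\ell-\ell/\ln n=\frac{F_{min}}{\ln n}\bigl(1-\frac{\ln\ln n+1}{\ln(n\ln n)}\bigr)$ colours, which is $F_{min}(1-o(1))/\ln n$ uniformly in $F_{min}$. The derandomization then keeps the conditional expectation of $L$ (not of the indicator that $L\ge 1$) from increasing, so the final deterministic $L$ is at most $\ell/\ln n$. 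The fix to your argument is therefore a single conceptual change: bound $\mathbf{E}[L]$ by $o(\ell)$ rather than by $1$, and subtract it from $\ell$ at the end.
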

\begin{proof}
Let $\ell = F_{min}/\big(\ln(n\ln n)\big)$. Independently colour each vertex with one of $\ell$ colours picked uniformly randomly. The probability of a hyperedge $e$ not containing colour $c$ is given by:
\begin{equation*}
\mathcal{P}[A_{e,c}]=\big(1-1/\ell\big)^{|V(e)|}<e^{-|V(e)|/\ell}\leq e^{-\ln(n\ln n)}=\frac{1}{n\ln n}.
\end{equation*} 
Thus, the expected number of colours which are not present in all hyperedges is
\begin{equation}
\mathbf{E}[L]= \sum_{c\,\in\,[\ell]}\,\sum_{e\,\in \,E}\mathcal{P}[\,A_{e,c}]=\ell\cdot n \cdot \frac{1}{n\ln n} = \frac{\ell}{\ln n}. \label{expectation}
\end{equation}
Thus, the number of colours which form a polychromatic colouring is:
\begin{align}
\ell-\frac{\ell}{\ln n} = \frac{F_{min}}{\ln n}\bigg(1 - \frac{\ln\ln n+1}{\ln(n\ln n)}\bigg). \label{actualapprox}
\end{align} 
The $\ell/\ln n$ vertices coloured with invalid colours do not correspond to set covers. We have therefore obtained \linebreak \hbox{ $\frac{F_{min}}{\ln n}\Big(1 - o(1)\Big)$} disjoint set covers, where the $o(1)$ term $<1/2$ and goes to zero as $n\rightarrow\infty$.
     \end{proof}

The randomized algorithm described by Theorem \ref{Theorem1} can be derandomized using the method of conditional probabilities, so that we always deterministically end up with a polychromatic colouring with $F_{min}\big(1-o(1)\big)/\ln n$ colours.

\textbf{Deterministic Colouring Algorithm :} We again start with $\ell = F_{min}/\big(\ln(n\ln n)\big)$ colours. We first colour all the vertices uniformly randomly with one of the $\ell$ colours. Now, by (\ref{expectation}), $\mathbf{E}[L]=\ell/\ln n$. The algorithm now works by recolouring these vertices deterministically. First, order the vertices arbitrarily as $v_{1},v_{2},v_{3} ,\ldots, v_{|\mathcal{S}|}$ and recolour them one by one in this order. We denote the recolouring of $v_i$ by $c_i$. Let $V^{u}(e)$ denote the set of vertices in hyperedge $e$ which have not yet been recoloured. Now for all hyperedges $e$, the probability that $e$ does not contain colour $c$ given that the vertices $v_{1}, \ldots, v_{i}$ have been recoloured with colours $c_{1}, \ldots, c_{i}$ is $0$ if there exists a vertex in $e$ which has already been recoloured with colour $c$. Otherwise, the probability is given by $\big(1-1/\ell\big)^{|V^{u}(e)|}$, since vertices $v_{i+1},\ldots,v_{|\mathcal{S}|}$ were each coloured uniformly randomly with one of the $\ell$ colours. So,
\begin{equation*} 
	\mathcal{P}[\,A_{e,c}|c_{1}, c_{2}, \ldots, c_{i}]=
	\begin{cases}
	0 , \text{ if } \exists \text{ } q\leq i \text{ such that } v_{q}\in V(e) \\
	\qquad \qquad \qquad \qquad \quad \text{ and } c_{q}=c\\
	\big(1-1/\ell\big)^{|V^{u}(e)|}  \text{ otherwise.}
	\end{cases}
\end{equation*}

Note that, given that the vertices $v_{1}, \ldots, v_{i}$ have been recoloured with colours $c_{1}, \ldots, c_{i}$, the expected number of invalid colours is
\begin{equation}
\mathbf{E}[\,L|c_{1},c_{2} ,\ldots, c_{i}]=\sum_{e\in E} \sum_{c\in [\ell]}\mathcal{P}\big[\,A_{e,c}|c_{1}, c_{2}, \ldots, c_{i}\big].
\end{equation}
We do not recolour $v_1$. Since $v_1$ was coloured uniformly randomly to start with, $\mathbf{E}[L|c_1]=\ell/\ln n$. Now recolour $v_2$ deterministically such that $\mathbf{E}[L|c_1,c_2]\leq \mathbf{E}[L|c_1]=\ell/\ln n$. Similarly, recolour $v_i$ deterministically such that \linebreak $\mathbf{E}[L|c_{1},c_{2} ,\ldots, c_{i-1}, c_{i}]\leq \mathbf{E}[L|c_{1},c_{2} ,\ldots, c_{i-1}]$. It will be shown in the analysis that such a colouring always exists. We will therefore finish the recolouring with $\mathbf{E}[L|c_{1},c_{2} ,\ldots, c_{|\mathcal{S}|}]$ invalid colours, which is a deterministic quantity (since given $c_1,\ldots, c_{|\mathcal{S}|}$, $L$ is a deterministic quantity) that we have ensured through our algorithm to be less than $\ell/\ln n$. We would therefore have deterministically obtained a polychromatic colouring with \hbox{at least $\ell-\ell/\ln n$} colours, as proposed by Theorem \ref{Theorem1}. 

\textbf{Algorithm Analysis :} We now show that it is always possible to deterministically recolour a vertex while ensuring that the expected number of invalid colours does not increase. Note that before the recolouring, every vertex was assigned a colour with probability $1/\ell$. From the linearity of expectation,
\begin{equation*}
\mathbf{E}[L|c_{1},c_{2} ,\ldots, c_{i-1}]= (1/\ell)\sum_{c_{i}\in [\ell]} \mathbf{E}[L|c_{1},c_{2} ,\ldots, c_{i-1}, c_{i}],
\end{equation*}

which is a convex combination. This implies that for some colour $c\in [\ell]$, there exists a colouring of $v_{i}$ with it such that
\begin{equation}
\mathbf{E}[L|c_{1},c_{2} ,\ldots, c_{i-1}, c_{i}]\leq \mathbf{E}[L|c_{1},c_{2} ,\ldots, c_{i-1}].
\end{equation}
Therefore, we always deterministically colour $v_{i}$ such that the expected number of invalid colours \linebreak $\mathbf{E}[L|c_{1},c_{2} ,\ldots, c_{i-1}, c_{i}]\leq \mathbf{E}[L|c_{1},c_{2} ,\ldots, c_{i-1}]$.
The algorithm has a time complexity of $\BigO{\ell^2|\mathcal{S}|nF_{max}}$ in its trivial implementation.

We now propose another polynomial-time algorithm for the DSCP which can be used in certain applications.
\subsection{$\BigO{\ln \Delta_{\tau}}$ Approximation to the DSCP}
\label{lndapprox}
In this section, we present the algorithm to obtain an $\BigO{\ln \Delta_{\tau}}$ approximation to the DSCP, where $\Delta_{\tau}$ is the maximum expansiveness.
A few definitions with respect to hypergraphs will be useful going forward.
\begin{definition}[Path]
In a hypergraph $G(V,E)$, a path between hyperedges $e_i, e_j\in E$ is defined as an alternating sequence of unique hyperedges and vertices $v\in V$, \hbox{$s={e_iv_i\ldots v_je_j}$} such that each hyperedge $e_k\in s$ contains the vertex immediately preceding it and immediately following it in $s$. Note that $e_i$ (the first hyperedge) is only required to contain $v_i$ and $e_j$ (the last hyperedge) is only required to contain $v_j$. Note that this definition resembles the corresponding definition in a graph.
\end{definition}
\begin{definition}[Path length]
In a hypergraph $G(V,E)$, the length of a path between hyperedges $e_i, e_j\in E$ is defined as the number of vertices in the sequence $s$.
\end{definition}
\begin{definition}[Component]
In a hypergraph $G(V,E)$, a component is defined as a set of hyperedges $E_c$ such that for any $e_i, e_j\in E_c$, there exists a path between $e_i$ and $e_j$.
\end{definition}
The $\BigO{\ln \Delta_{\tau}}$ algorithm will also seek to polychromatically colour the hypergraph $\mathcal{H}$ as we did with the $\ln n$ algorithm in Section \ref{lnnapprox}. Again, the subsets corresponding to the vertices of any one colour form a set cover. This time, however, we colour the hypergraph in phases, considering a few vertices to have been coloured correctly at the end of each phase. This corresponds to assigning the corresponding subsets to some set cover. At the end of the algorithm, $F_{min}/(c\ln \Delta_{\tau})$ colours form a polychromatic colouring, and we have $F_{min}/(c\ln \Delta_{\tau})$ disjoint set covers.

We use an algorithmic version of the Lov{\'a}sz Local Lemma introduced by \cite{beck1991algorithmic}, and the analysis is similar to the one in \cite{feige2002approximating}. Without loss of generality, we can consider $\mathcal{H}$ to be connected (otherwise, we can consider each component as a different problem). Our algorithm is divided into $p$ phases (we will define $p$ in Section \ref{pdef}). After each phase, the hypergraph $\mathcal{H}$ is split into components with expected number of hyperedges $\BigO{\Delta_{\tau}^3\ln n}$, where $n$ the number of hyperedges in $\mathcal{H}$ and $\Delta_{\tau}$ is the maximum \emph{expansiveness} of $\mathcal{H}$. At the end of the last phase, we use Theorem \ref{Theorem1} to colour each component in polynomial time.
\subsubsection{Calculating $p$}
\label{pdef}
$p\in \mathbb{N}$ is a number such that $\ln\ln \ldots_{\text{ p times}}n \geq \Delta_{\tau}$ and $\ln\ln \ldots_{\text{ p+1 times}}n < \Delta_{\tau}$. We can see that the function $\ln\ln \ldots_{\text{ p times}}n$ decreases \textbf{very} fast. If the maximum expansiveness drops exponentially, $p$ will only increase by $1$.
\subsubsection{Algorithm Description} We call this algorithm the \hbox{\emph{EXPCover} algorithm} - coverage through expansiveness.

\textbf{Phase 1:} The algorithm works on the dual hypergraph $\mathcal{H}$, which has $\mathcal{|S|}$ vertices and $n$ hyperedges. Number the vertices arbitrarily as $ \{v_1 ,v_2 ,\ldots, v_{\mathcal{|S|}} \} $. Process the vertices in this order, from $v_1$ to $v_{\mathcal{|S|}}$. When a vertex is encountered, colour it uniformly randomly with one of $\ell = F_{min}/\big(c\ln \Delta_{\tau}\big)$ colours. After each colouring, look at the hyperedges to which that vertex belongs and \emph{freeze} a hyperedge (and thus its vertices) if it satisfies the following two properties: \\
\textbf{(i)} The hyperedge has $F_{min}/p$ vertices coloured, where $p$ is as defined in Section \ref{pdef}, \textbf{and}\\
\textbf{(ii)} Not all the $\ell$ colours are present in the hyperedge.\\
If a frozen vertex is encountered during processing, skip it and go on. The first phase will end after all $\mathcal{|S|}$ vertices have been processed.

\textbf{The remaining phases:} After phase 1, there will be three types of hyperedges :\\
\textbf{(i)} \emph{Good} - Contains all the $\ell$ colours.\\
\textbf{(ii)} \emph{Frozen} - Because of the procedure in phase 1.\\
\textbf{(iii)} \emph{Neutral} - Does not contain all $\ell$ colours but was not frozen.

The next phase of the algorithm runs on the component formed by the \emph{Frozen} and \emph{Neutral} hyperedges, which we call \emph{saved}. Since the good hyperedges contain all $\ell$ colours, their colouring can be considered correct, since they will not hinder the polychromatic colouring of the graph. 

We repeat the colouring algorithm defined in phase 1 on the saved components obtained after phase 1, colouring one component at a time. By Theorem \ref{savededges}, the size of the largest saved component at the end of phase 2 will be \linebreak $\BigO{\Delta_{\tau}^3(3\ln \Delta_{\tau} + \ln\ln n)}$.

Thus, after $p$ phases, by the repeated use of Theorem \ref{savededges} $p$ times, we will obtain components of size \linebreak $N = \BigO{\Delta_{\tau}^3(\ln\ln \ldots_{\text{ p times}} n)}$, which is  $\BigO{\Delta_{\tau}^4}$ \linebreak (since $\ln\ln \ldots_{\text{ p times}}n = \BigO{\Delta_{\tau}}$). Note that $\ln N = \BigO{\ln \Delta_{\tau}}$.

Once we have a hypergraph with component size $N$, we apply the algorithm corresponding to Theorem \ref{Theorem1} to colour each component with $\ell=F_{min}/\ln N$ colours and thus achieve an approximation ratio of $\ln N = \BigO{\ln \Delta_{\tau}}$.\\
\subsubsection{Algorithm Analysis}
Our choice of the number of colours $\ell$ and the approximation ratio of the EXPCover algorithm will be made clear by the following Theorem.
\begin{theorem}
It is possible to polychromatically colour a hypergraph $\mathcal{H}$ with $\ell=F_{min}/(c\ln \Delta_{\tau})$ colours, where $c$ is a suitable constant. The EXPCover algorithm therefore provides an $\BigO{\ln \Delta_\tau}$ approximation to the DSCP. \label{lndcolour}
\end{theorem}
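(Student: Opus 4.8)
The plan is to split the statement into its two assertions and dispatch them in order. The approximation-ratio claim follows immediately once the colouring claim is in hand: by Theorem \ref{DSCPpoly} a polychromatic colouring of $\mathcal{H}$ with $\ell$ colours gives exactly $\ell$ pairwise-disjoint set covers, and since $F_{min}$ is an upper bound on the optimal DSCP solution, producing $\ell = F_{min}/(c\ln\Delta_{\tau})$ colours yields an approximation ratio of $F_{min}/\ell = c\ln\Delta_{\tau} = \BigO{\ln\Delta_{\tau}}$. So the whole burden is to show that the \emph{EXPCover} algorithm actually delivers a valid polychromatic colouring with that many colours. I would prove this by analysing the phased freezing procedure, establishing (i) that the \emph{saved} components shrink very rapidly across phases, and (ii) that once a component is small enough, Theorem \ref{Theorem1} finishes it off cleanly.

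The first ingredient is a per-phase bound on the probability of a bad event $A_{e,c}$, namely that a hyperedge $e$ is missing colour $c$ at the moment $e$ reaches its freezing threshold of $F_{min}/p$ coloured vertices. Since those vertices are coloured uniformly among $\ell = F_{min}/(c\ln\Delta_{\tau})$ colours, one computes $\mathcal{P}[A_{e,c}] = (1-1/\ell)^{F_{min}/p} \le e^{-F_{min}/(p\ell)} = \Delta_{\tau}^{-c/p}$. The structural point is that $A_{e,c}$ depends only on the colours of the vertices of $e$, and each such vertex lies in at most $\Delta_{\tau}$ hyperedges; hence in the hyperedge-dependency graph of $\mathcal{H}$, $A_{e,c}$ is mutually independent of all but $\BigO{\ell\,\Delta_{\tau}}$ other bad events. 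Choosing the constant $c$, and through Section \ref{pdef} the number of phases $p$, appropriately places these quantities in the regime where the Lov\'asz Local Lemma guarantees a bad-event-free colouring \emph{exists} --- but not one we can locate directly, which is precisely what forces the algorithmic, phased treatment of \cite{beck1991algorithmic}.

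The heart of the argument is the component-size bound recorded as Theorem \ref{savededges}, which I would establish by Beck's algorithmic Local Lemma. The \emph{Frozen} and \emph{Neutral} hyperedges surviving a phase form connected subgraphs of the dependency graph, and a surviving subgraph of size $s$ requires $\Omega(s)$ of its hyperedges to be bad in a mutually independent fashion. Counting the witness trees of size $s$ through a fixed hyperedge as $(e\Delta_{\tau})^{\BigO{s}}$ and bounding the survival probability of each by $q^{\Omega(s)}$ with $q$ as above, one makes the resulting series summable by taking $c$ large; a union bound over the $n$ possible starting hyperedges then forces the largest saved component to have size $\BigO{\Delta_{\tau}^{3}\ln n}$ with high probability. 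Because this bound replaces the ambient size $n$ by $\BigO{\Delta_{\tau}^{3}\ln n}$, iterating it over phases drives $\ln n \to \ln\ln n \to \cdots$, and with $p$ chosen so that $\ln\ln\cdots n = \BigO{\Delta_{\tau}}$, the components reach size $N = \BigO{\Delta_{\tau}^{4}}$, whence $\ln N = \BigO{\ln\Delta_{\tau}}$.

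It remains to finish the small components. Applying Theorem \ref{Theorem1} to each component of size $N$ colours it polychromatically with $F_{min}(1-o(1))/\ln N = F_{min}/\BigO{\ln\Delta_{\tau}}$ colours; since distinct components share no vertices and the \emph{Good} hyperedges fixed in earlier phases already carry all $\ell$ colours, combining the local colourings yields a global polychromatic colouring of $\mathcal{H}$ with $\ell = F_{min}/(c\ln\Delta_{\tau})$ colours, which proves the theorem. The main obstacle throughout is Theorem \ref{savededges}: lining up the per-event failure probability, the bounded dependency degree, and the witness-tree count so that a single constant $c$ makes every phase's Local-Lemma computation go through simultaneously, and then arguing that the per-phase ``with high probability'' statements can be made deterministic by the same conditional-expectation derandomisation already used to prove Theorem \ref{Theorem1}.
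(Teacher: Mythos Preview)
Your proposal is correct and follows essentially the same route as the paper: reduce the approximation claim to the polychromatic-colouring claim via Theorem \ref{DSCPpoly} and the $F_{min}$ upper bound, then justify the colouring by (i) the phased component-shrinking bound of Theorem \ref{savededges}, (ii) iterating until $\ln N = \BigO{\ln\Delta_{\tau}}$, and (iii) finishing each small component with Theorem \ref{Theorem1}. The only noteworthy divergence is in the derandomisation: you propose conditional expectations as in Theorem \ref{Theorem1}, whereas the paper derandomises Theorem \ref{savededges} by \emph{repeated random colouring} (re-running a phase until the saved-component size falls below the $\BigO{\Delta_{\tau}^{3}\ln n}$ threshold, which happens with probability $>1/2$); both are valid, and the paper's independence device is the ``3-separated set'' rather than the witness-tree language you use, but the underlying counting is the same.
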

\begin{proof}
We will show this by proving that setting $\ell=F_{min}/\big(c\ln \Delta_{\tau}\big)$ in the EXPCover algorithm allows us to choose a particular $c$ such that the size of the largest saved component is bounded. This bound is given by Theorem \ref{savededges}. The choice of constant $c$ itself will be shown in the appendix, in the proof of Theorem \ref{savededges}. 
     \end{proof}
\begin{theorem}
With probability greater than $1/2$, the size of the largest saved component at the end of the first phase is at most $\BigO{\Delta_{\tau}^3\ln n}$. \label{savededges}
\end{theorem}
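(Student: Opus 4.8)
The plan is to treat the event ``hyperedge $e$ is saved at the end of phase 1'' as a bad event and to control the connected components of bad events by the algorithmic Lov\'asz Local Lemma of \cite{beck1991algorithmic}, exactly as in the domatic analysis of \cite{feige2002approximating}. First I would bound the probability $q$ that a fixed hyperedge $e$ is saved. A hyperedge is saved precisely when, by the time the freezing threshold of $F_{min}/p$ of its vertices has been coloured, it still misses at least one of the $\ell = F_{min}/(c\ln\Delta_\tau)$ colours. Since colours are assigned independently and uniformly, a union bound over colours gives $q \leq \ell\,(1-1/\ell)^{F_{min}/p} \leq \ell\, e^{-(F_{min}/p)/\ell}$; substituting $\ell = F_{min}/(c\ln\Delta_\tau)$ turns the exponent into a multiple of $\ln\Delta_\tau$, so that the constant $c$ can be taken large enough to make $q$ smaller than an inverse polynomial in $\Delta_\tau$. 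This is precisely the point at which the constant $c$ promised in Theorem \ref{lndcolour} is fixed.

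Next I would set up the dependency graph $\mathcal{D}$ on the hyperedges of $\mathcal{H}$: join $e$ and $f$ whenever they share a vertex, i.e.\ whenever the corresponding elements co-occur in some subset. The bad event of $e$ depends only on the colours of the vertices of $e$, so bad events of non-adjacent hyperedges are mutually independent, and the maximum degree of $\mathcal{D}$ is exactly the maximum hyperedge-degree, namely $\Delta_\tau$. Since a saved component is a connected set of frozen and neutral hyperedges, it is in particular a connected set of bad events in $\mathcal{D}$, so bounding saved components reduces to bounding connected components of bad events.

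The core of the argument is the standard Beck ``large component implies large witness'' step. If some saved component has size $m$, then by thinning it to a maximal set of hyperedges that are pairwise non-adjacent in $\mathcal{D}$ and spreading the choice along connecting paths, I extract a subset $T$ that is independent in $\mathcal{D}$ (so its bad events are mutually independent), remains connected in a bounded power $\mathcal{D}^3$ of the dependency graph, and has size at least $m/\mathrm{poly}(\Delta_\tau)$. The probability that all hyperedges of a fixed such $T$ are bad is at most $q^{|T|}$, while the number of candidate sets of size $t$ is at most $N\,(e\Delta_\tau^3)^t$, since $\mathcal{D}^3$ has maximum degree at most $\Delta_\tau^3$ and there are $N$ choices of root. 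A union bound then gives expected number of all-bad witnesses of size $t$ at most $N(e\Delta_\tau^3 q)^t$; choosing $c$ so that $e\Delta_\tau^3 q \le 1/2$ makes this drop below $1/2$ once $t$ exceeds $1+\log_2 N$. Hence, with probability greater than $1/2$, no witness of that size is all-bad, and combining the $\mathrm{poly}(\Delta_\tau)$ loss of the extraction with the $\log N$ threshold forces every saved component to have size \BigO{\Delta_\tau^3\ln N}, where $N$ is the number of hyperedges (here $N=n$).

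The step I expect to be the main obstacle is the witness extraction in the last paragraph: showing cleanly that a connected component of size $m$ really does contain an independent-in-$\mathcal{D}$, connected-in-$\mathcal{D}^3$ witness of size $m/\mathrm{poly}(\Delta_\tau)$, and matching the bookkeeping so that the power of $\Delta_\tau$ produced by the witness counting and by the independent-set extraction combine to exactly the claimed \BigO{\Delta_\tau^3\ln n}. The probabilistic estimate of the first step is routine, but it must be reconciled with the freezing rule, since a hyperedge may be saved because a neighbour was frozen rather than on its own account; care is therefore needed to verify that every saved hyperedge still carries a genuinely small and suitably independent bad event for the witness argument to exploit.
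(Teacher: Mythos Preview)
Your approach is essentially the paper's: Beck's algorithmic LLL following \cite{feige2002approximating}, with a witness-extraction step and a union bound over tree-shaped witnesses. The outline is right, but one point you flagged as ``the main obstacle'' is in fact an actual gap in your set-up, not just a bookkeeping nuisance.

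The issue is the independence claim. You assert that the bad event of $e$ depends only on the colours of the vertices of $e$, hence non-adjacency in $\mathcal{D}$ suffices for independence. That is true for the \emph{frozen} event $X(e)$, but not for the \emph{saved} event $Y(e)$: a hyperedge becomes neutral only because a neighbouring hyperedge froze some of its vertices, so $Y(e)$ is determined by the colours of vertices in the closed neighbourhood $N^+(e)$, not just in $e$. Two hyperedges at distance $2$ share a neighbour, and their saved events are \emph{not} independent. The paper (and \cite{feige2002approximating}) therefore works with $3$-separated witnesses (pairwise distance $\geq 3$), so that neighbourhoods are disjoint, and bounds the saved probability by $P(Y(e)=1)\leq P(X(e)=1)+\sum_{e'\in N(e)}P(X(e')=1)\leq 2\Delta_\tau q$, with $q=\ell(1-1/\ell)^{F_{min}/p}$ the \emph{frozen} probability. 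Your formula for $q$ is actually this frozen probability, not the saved probability you claim it to be.

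Once you make this fix, the rest goes through exactly as in the paper: a $3$-separated set of size $k$ is all-saved with probability at most $(2\Delta_\tau q)^k$; the number of such witnesses that are connected in $\mathcal{D}^3$ is at most $n(4\Delta_\tau^3)^k$; choosing $c$ so that $q\leq \Delta_\tau^{-5}/8$ and $k=\log(2n)/\log\Delta_\tau$ drives the union bound below $1/2$; and since any connected set of $k\Delta_\tau^3$ hyperedges contains a $3$-separated set of size $k$, no saved component exceeds $\BigO{\Delta_\tau^3\ln n}$. So: replace ``independent in $\mathcal{D}$'' by ``$3$-separated'' throughout, use the $2\Delta_\tau q$ bound per witness hyperedge, and the argument is complete and matches the paper.
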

\begin{proof}
See Appendix \ref{appendix}.
     \end{proof}

We now provide some intuition for the EXPcover algorithm itself. At the end of each phase, an unsaved hyperedge contains all $\ell$ colours. In the last phase of the algorithm with $N$ hyperedges, $\ell-\ell/\ln N$ of the colours will form a polychromatic colouring by Theorem \ref{Theorem1}. Since all $\ell$ colours were present in the good hyperedges which were unsaved in all the phases anyway, the colouring of the last phase also corresponds to a polychromatic colouring of $\mathcal{H}$ with $\ell-\ell/\ln N$ colours. $\ell/\ln N$ invalid colours are only present in the unsaved hyperedges and not in the $N$ saved ones. All that is left to do to show the correctness of the EXPcover algorithm is to prove that $\ell=F_{min}/(c\ln \Delta_{\tau})$ is a valid choice, and that the size of the saved components is as much as we proposed. We do so through the following Theorems.

Also note that the number of saved components at any time can only be as large as $n$, since there are $n$ hyperedges in all. Since the colouring of each component can be achieved in polynomial time, the colouring of all components must therefore be achievable in polynomial time.

We now present a method to deterministically ensure during the algorithm that the size of the saved components is limited.

\textbf{Deterministic Algorithm for Theorem \ref{savededges} :} We need to ensure that we colour the vertices such that the largest saved component is as big as we proposed. We do this through the method of \emph{repeated random colouring}. We colour the vertices uniformly randomly in each phase $p_i$ and then look at the size of the largest saved component after $p_i$ is complete. If it is larger than proposed by Theorem \ref{savededges}, we repeat the phase $p_i$, colouring vertices uniformly randomly again. Note that the probability of appearance of a large component in the first colouring was less than half, and drops exponentially with every repetition of phase $p_i$. So we will end up with a colouring proposed by Theorem \ref{savededges} within a few repetitions of $p_i$.

The time complexity of the EXPCover algorithm is $\BigO{p|\mathcal{S}|n}$, since it runs in $p$ phases and does $n$ operations on each of the $|\mathcal{S}|$ vertices.

The EXPCover algorithm has advantages over the algorithm in Section \ref{lnnapprox} in certain cases. Firstly, its time complexity is better. Secondly, in cases where the network has a low expansiveness, the $\BigO{\ln \Delta_\tau}$ approximation ratio could be less than the $\BigO{\ln n}$ ratio.
  
We now present a third approach to approximating the DSCP, using the Lov{\'a}sz Local Lemma (LLL).

\subsection{Algorithms for the DSCP using the LLL}
We saw from Theorem \ref{Theorem1} that a polychromatic colouring of a hypergraph with $F_{min}\big(1 - o(1)\big)/\ln n$ colours exists. We now provide a new bound using the Lov{\'a}sz Local Lemma (LLL).
\begin{lemma} \label{LLLalgo}
Any hypergraph with $F_{min}\Delta_{\tau}<n$ can be polychromatically coloured with $F_{min}/\ln (\sf{e}$$F_{min}\Delta_{\tau})$ colours, where \sf{e} denotes the base of the natural logarithm.
\end{lemma}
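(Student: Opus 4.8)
The plan is to apply the symmetric Lov\'asz Local Lemma to the same random colouring used in Theorem \ref{Theorem1}, but to exploit the local dependency structure of the dual hypergraph $\mathcal{H}$ instead of taking a union bound over all $n$ hyperedges. First I would colour each of the $|\mathcal{S}|$ vertices of $\mathcal{H}$ independently and uniformly at random with one of $\ell = F_{min}/\ln(\mathsf{e}F_{min}\Delta_{\tau})$ colours, and, exactly as in Section \ref{lnnapprox}, introduce the bad events $A_{e,c}$ that hyperedge $e$ misses colour $c$. A polychromatic colouring with $\ell$ colours is precisely a colouring in which no $A_{e,c}$ occurs, so it suffices to show $\mathcal{P}\big[\bigcap_{e,c}\overline{A_{e,c}}\big]>0$.

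The two ingredients the LLL needs are a uniform bound on $\mathcal{P}[A_{e,c}]$ and a bound on the dependency degree. For the first, since every hyperedge of $\mathcal{H}$ has at least $F_{min}$ vertices, $\mathcal{P}[A_{e,c}]=(1-1/\ell)^{|V(e)|}\le(1-1/\ell)^{F_{min}}<\mathsf{e}^{-F_{min}/\ell}$, and the choice of $\ell$ makes the exponent exactly $\ln(\mathsf{e}F_{min}\Delta_{\tau})$, so $p:=\mathsf{e}^{-F_{min}/\ell}=1/(\mathsf{e}F_{min}\Delta_{\tau})$. For the second, $A_{e,c}$ is determined only by the colours of the vertices in $V(e)$, hence it is mutually independent of every $A_{e',c'}$ with $V(e)\cap V(e')=\emptyset$; since $e$ shares a vertex with at most $\Delta_{\tau}$ other hyperedges (by the definition of the maximum expansiveness $\Delta_{\tau}$) and each hyperedge carries $\ell$ colour-events, the dependency degree satisfies $d+1\le(\Delta_{\tau}+1)\ell$.

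With these bounds in hand, I would verify the symmetric LLL condition $\mathsf{e}\,p\,(d+1)\le 1$. Substituting gives $\mathsf{e}\cdot\frac{1}{\mathsf{e}F_{min}\Delta_{\tau}}\cdot(\Delta_{\tau}+1)\ell=\frac{\Delta_{\tau}+1}{\Delta_{\tau}\ln(\mathsf{e}F_{min}\Delta_{\tau})}$, and since $\ln(\mathsf{e}F_{min}\Delta_{\tau})=1+\ln(F_{min}\Delta_{\tau})$ this is at most $1$ as soon as $\Delta_{\tau}\ln(F_{min}\Delta_{\tau})\ge 1$, which holds in every nontrivial instance. The LLL then yields $\mathcal{P}\big[\bigcap_{e,c}\overline{A_{e,c}}\big]>0$, so a polychromatic colouring with $\ell=F_{min}/\ln(\mathsf{e}F_{min}\Delta_{\tau})$ colours exists. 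Finally I would observe that the hypothesis $F_{min}\Delta_{\tau}<n$ ensures $\ln(\mathsf{e}F_{min}\Delta_{\tau})<\ln(\mathsf{e}n)$, so this colouring uses strictly more colours than the $F_{min}/\ln n$ guaranteed by Theorem \ref{Theorem1}, confirming that the bound is a genuine improvement in the regime named in the hypothesis.

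The main obstacle I expect is the bookkeeping for the dependency degree: one must argue that the neighbourhood of $A_{e,c}$ is controlled by the hyperedge-degree $\Delta_{\tau}$ (rather than by $n$ or by vertex degrees) and correctly account for the factor $\ell$ coming from the colours, since getting the constant in $d+1\le(\Delta_{\tau}+1)\ell$ right is exactly what makes the LLL inequality collapse to the stated $\ln(\mathsf{e}F_{min}\Delta_{\tau})$. A secondary subtlety is that the LLL in its basic form only asserts existence; turning it into a polynomial-time colouring would require the algorithmic version of \cite{beck1991algorithmic} already invoked in Section \ref{lndapprox}, which I would mention but which is not needed for the existence statement claimed here.
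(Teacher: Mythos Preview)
Your proposal is correct and follows essentially the same route as the paper: colour vertices of $\mathcal{H}$ uniformly at random with $\ell=F_{min}/\ln(\mathsf{e}F_{min}\Delta_{\tau})$ colours, bound $\mathcal{P}[A_{e,c}]\le(1-1/\ell)^{F_{min}}\le 1/(\mathsf{e}F_{min}\Delta_{\tau})$, bound the dependency degree of each $A_{e,c}$ via $\Delta_{\tau}$ and $\ell$, and apply the symmetric LLL. The only cosmetic difference is that the paper, instead of carrying $(\Delta_{\tau}+1)\ell$ through the inequality as you do, invokes $\ell<F_{min}$ to replace $\ell\Delta_{\tau}$ by $F_{min}\Delta_{\tau}$, so that $\mathsf{e}\,p\,(d{+}1)\le \mathsf{e}\cdot\frac{1}{\mathsf{e}F_{min}\Delta_{\tau}}\cdot F_{min}\Delta_{\tau}=1$ falls out exactly; your direct verification via $\frac{\Delta_{\tau}+1}{\Delta_{\tau}\ln(\mathsf{e}F_{min}\Delta_{\tau})}\le 1$ achieves the same end.
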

\begin{proof}
See Appendix \ref{LLLproof}.
\end{proof}
It is possible to use Moser and Tardos' results on the algorithmic LLL \cite{moser2010constructive} to achieve a tight algorithmic version of Lemma \ref{LLLalgo}. As shown by Bollobas in \cite{bollobas2013cover}, applying the LLL on the primal hypergraph yields another bound on the number of covers as being $F_{min}/\ln(\sf{e}$$RF_{min}^{2})$, which can also be achieved by the algorithmic LLL \cite{moser2010constructive}.
\subsection{Hardness of the DSCP}
It was shown in \cite{feige2002approximating} that the DSCP is $\ln n$ hard, by reducing the max-3-sat-5-colourability problem to it. A $(1-\epsilon)\ln n$ approximation to the DSCP implies that \linebreak $NP \subseteq DTIME\big(n^{\BigO{\ln \ln n}}\big)$. The colouring algorithm of Section \ref{lnnapprox} therefore achieves the best possible approximation to the MLCP, from Theorem \ref{Theorem1}. By the EXPCover algorithm, we also conclude that the class of problems in which $\Delta_\tau<<n$ is easier than the general MLCP, and can be approximated within a ratio less than $\ln n$.

We now show that the algorithms for DSCP proposed in Sections \ref{lnnapprox} and \ref{lndapprox} are also $\ln n$ and $\BigO{\ln \Delta_{\tau}}$ approximations to MLCP. 
\subsection{$\ln n, \BigO{\ln \Delta_{\tau}}$ Approximations to MLCP}
\begin{lemma}
The $\ln n$ and $\BigO{\ln \Delta_{\tau}}$ approximation algorithms to the DSCP provided in Sections \ref{lnnapprox} and \ref{lndapprox} are also $\ln n$ and $\BigO{\ln \Delta_{\tau}}$ approximation algorithms to the MLCP. \label{MLCPapprox}
\end{lemma}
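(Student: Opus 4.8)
The plan is to exploit the fact that the quantity $F_{min}$ plays two roles simultaneously: it is both an upper bound on the optimal MLCP lifetime and the numerator appearing in the guarantees of the DSCP algorithms from Sections \ref{lnnapprox} and \ref{lndapprox}. Recall from the preliminaries that with unit battery capacity, $F_{min}$ upper-bounds the MLCP optimum, since at least one sensor covering a bottleneck target must be active at every instant, and the $F_{min}$ such sensors together supply only $F_{min}$ units of on-time. Hence if $OPT$ denotes the optimal MLCP lifetime, we have $OPT \leq F_{min}$.

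First I would recall the bridge between the two problems, already noted in the preliminaries: any collection of $k$ pairwise-disjoint set covers yields a feasible MLCP solution of value $k$, obtained by switching on each set cover for one unit of time. Disjointness guarantees that no sensor appears in more than one of these set covers, so the unit-battery constraint $\sum_j C_{ij} t_j \leq 1$ is satisfied for every sensor $S_i$. Thus the number of disjoint set covers returned by any DSCP algorithm is a lower bound on the MLCP lifetime achievable by that algorithm.

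The core of the argument is then a short chain of inequalities. By Theorem \ref{Theorem1}, the colouring algorithm of Section \ref{lnnapprox} returns at least $F_{min}(1-o(1))/\ln n$ disjoint set covers; by Theorem \ref{lndcolour}, the \emph{EXPCover} algorithm of Section \ref{lndapprox} returns at least $F_{min}/\BigO{\ln \Delta_{\tau}}$ disjoint set covers. Operating these set covers for unit time each gives MLCP lifetimes of at least $F_{min}(1-o(1))/\ln n$ and $F_{min}/\BigO{\ln \Delta_{\tau}}$ respectively. Combining with the upper bound $OPT \leq F_{min}$ gives, for the first algorithm, an achieved lifetime of at least $OPT(1-o(1))/\ln n$, and similarly $OPT/\BigO{\ln \Delta_{\tau}}$ for the second. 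By the definition of approximation ratio for a maximization problem, these are precisely $\ln n$ and $\BigO{\ln \Delta_{\tau}}$ approximations to the MLCP.

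The step I expect to require the most care is conceptual rather than computational: one must be clear that the DSCP algorithms' guarantees are stated against the \emph{bound} $F_{min}$ and not against the DSCP optimum. This matters because the MLCP optimum can strictly exceed the DSCP optimum (as the earlier $\{1,2\},\{2,3\},\{3,1\}$ example shows), so a bound phrased only in terms of the DSCP optimum would not transfer. Because both algorithms certify $F_{min}/(\text{ratio})$ disjoint set covers directly, and $F_{min}$ dominates the MLCP optimum, the approximation ratio carries over verbatim with no loss, which is exactly the point anticipated in the remark following Example \ref{Texample} that the best polynomial-time lifetimes for the MLCP and DSCP coincide in the worst case.
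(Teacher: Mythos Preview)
Your proposal is correct and follows essentially the same approach as the paper: establish $OPT \leq F_{min}$ via the bottleneck-target argument, convert the disjoint set covers into a feasible MLCP schedule by running each for unit time, and divide the two bounds. Your final paragraph makes explicit a point the paper leaves implicit—that the transfer works precisely because the DSCP guarantees are stated against $F_{min}$ rather than against the DSCP optimum—but the underlying argument is the same.
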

\begin{proof}
We first note that $F_{min}$ is an upper bound on the MLCP. This is because at least one of the subsets containing the element with frequency $F_{min}$ must be present in all covers, and all those subsets can together be used only for time $F_{min}$. Thus, the algorithm in Section \ref{lnnapprox}, which returns at least $F_{min}/\ln n$ disjoint set covers can be used by turning on each set cover for time $1$, and achieve a network lifetime of at least $F_{min}/\ln n$. Thus we get a $\ln n$ approximation to MLCP. The algorithm in Section \ref{lndapprox} returns at least $F_{min}/\BigO{\ln \Delta_{\tau}}$ disjoint set covers, and is effectively an $\BigO{\ln \Delta_{\tau}}$ approximation to the MLCP.
     \end{proof}
\subsection{Advantages of our MLCP algorithms} \label{advantage}
The algorithm for solving the MLCP in \cite{berman2004power} returns a network lifetime of \hbox{OPT/$(1+\ln n)$}, where OPT represents the optimal solution of the MLCP. The algorithm in Section \ref{lnnapprox}, on the other hand, returns a network lifetime of $F_{min}/\ln n$ for large $n$. Since \hbox{$F_{min}\geq$ OPT}, our algorithm performs better than the algorithm in \cite{berman2004power} for large $n$.

In addition, using the DSCP to solve the MLCP has the following advantages. Firstly, with the DSCP approach, any sensor need not be repeatedly turned on and off, in contrast to the non-disjoint approach of \cite{berman2004power}. Repeated turn-on and offs could drain the sensor battery and using the DSCP approach circumvents that problem.

Secondly, the DSCP solution can be trivially extended to the $k$-coverage problem \cite{gao2006sensor}. The $k$-coverage problem is to ensure that the lifetime of the network is maximized while covering all targets by at least $k$ active sensors at all times. It is not difficult to see that an upper bound on the $k$-coverage problem is $F_{min}/k$, by the same logic that was used for the general MLCP. Our DSCP solution to the MLCP problem in Section \ref{lnnapprox} returns $F_{min}/\ln n$ set covers. We partition these set covers into groups of $k$ set covers each, and call each group a $k$-cover. Note that the number of $k$-covers provides a solution to the $k$-coverage problem. The number of $k$-covers obtained is $F_{min}/k\ln n$, and the approximation ratio for the $k$-coverage problem is therefore $\ln n$.
We can therefore see that the DSCP solution affords an easy extension to the $k$-coverage problem, unlike non-disjoint set cover solutions. Another logical extension is that of variable $k$-coverage, in which each target needs to be covered by $k$ sensors at a particular time, but this $k$ can now vary over time. A typical example of this is in sensing forest fires, or oil pipelines, in which a larger $k$ would be needed in the dry, hot seasons, and a lower $k$ in cooler, wetter seasons. Note that our DSCP algorithm extends to this case of variable $k$-coverage as well, with the same approximation ratio of $\ln n$.

We now look at the MLCP in special, geometric settings and present algorithms for these cases.
\subsection{Algorithms for the MLCP in the Geometric Case}
In 1 dimension (1-D), the \emph{geometric case} refers to problems in which each sensor covers contiguous targets on a line. In 2 dimensions (2-D), each sensor is assumed to cover a circular region around itself. We first look at 1-D, for which we derive a polynomial time optimal algorithm for the MLCP. We then revisit a result in \cite{ding2012constant} for 2-D and improve upon its approximation algorithm.
\subsubsection{1-D}
In this case, the targets are assumed to lie on a line, and each sensor can only monitor a contiguous interval on that line. We again denote the $n$ targets by a universe of elements $\mathcal{U}=\{1,2,\ldots,n\}$, ordered from left to right on the line. We denote the sensors by a set of subsets $\mathcal{S}=\{S_1,S_2\ldots\}$. We also denote the left-most and right-most elements covered by subset $S_j$ by $\sf{L}$$(S_j)$ and $\sf{R}$$(S_j)$, respectively. We say that a subset $S_j$ \emph{covers all elements} from $\sf{L}$$(S_j)$ to $\sf{R}$$(S_j)$. Let the subsets containing a particular element $i\in \mathcal{U}$ be represented by $\mathbf{D}_i=\{S_j:i\in S_j\}$. We first define a \emph{directed neighbour} of a subset.
\begin{definition}[Directed Neighbour]
\hspace{-1mm}A subset $S_j$ is said to be a directed neighbour of subset $S_k$ if \textbf{(i)} $\sf{L}$$(S_k)\leq \sf{L}$$(S_j)$ and \textbf{(ii)} $S_k\bigcup S_j$ covers all elements from $\sf{L}$$(S_k)$ to $\sf{R}$$(S_j)$. \label{directedneighbour}
\end{definition}
In other words, a subset's directed neighbour lies on its right and they together cover a set of contiguous targets. 
We formulate the MLCP as a maximum flow problem on a network $N(V,E)$ defined as follows:

\textbf{Construction of $N(V,E)$ :} Let each vertex $v_j\in V$ represent a subset $S_j\in \mathcal{S}$. Add two more vertices $s,t$ to $V$, to represent the source and sink respectively. Connect two vertices $v_a,v_b\in V\backslash\{s,t\}$ with a directed edge from $v_a$ to $v_b$ if $S_b$ is a \emph{directed neighbour} of $S_a$. Each vertex $v\in V\backslash\{s,t\}$ is defined to have a capacity $1$, as a result of the corresponding sensor having a battery capacity $1$. Now, connect $s$ to the vertices corresponding to all subsets $S_j\in \mathbf{D}_1$ with directed edges originating at $s$. Similarly, connect the vertices corresponding to all subsets $S_j\in \mathbf{D}_n$ to $t$ with directed edges terminating at $t$.  
\begin{figure}[H]
    \begin{center}
    \psfrag{1}[][][0.9]{\footnotesize{1}}
    \psfrag{2}[][][0.9]{\footnotesize{2}}
    \psfrag{3}[][][0.9]{\footnotesize{3}}
    \psfrag{4}[][][0.9]{\footnotesize{4}}
    \psfrag{5}[][][0.9]{\footnotesize{5}}
    \psfrag{6}[][][0.9]{\footnotesize{6}}
    \psfrag{7}[][][0.9]{\footnotesize{7}}
    \psfrag{8}[][][0.9]{\footnotesize{8}}
    \psfrag{9}[][][0.9]{\footnotesize{8}}
    \psfrag{10}[][][0.9]{\footnotesize{9}}
    \psfrag{11}[][][0.9]{\footnotesize{10}}
    \psfrag{12}[][][0.9]{\footnotesize{11}}
    \psfrag{13}[][][0.9]{\footnotesize{12}}
    \psfrag{S1}[][][0.9]{$S_1$}
    \psfrag{S2}[][][0.9]{$S_2$}
    \psfrag{S3}[][][0.9]{$S_3$}
    \psfrag{L}[][l][0.9]{$\sf L$$(S_1)$}
    \psfrag{R}[][l][0.9]{$\sf R$$(S_2)$}
    \includegraphics[scale=.4]{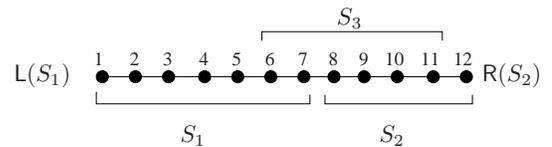}
    \caption{A 1-D example. $S_2$ and $S_3$ are directed neighbours of $S_1$. $S_2$ is a directed neighbour of $S_3$. There will be directed edges $v_1\rightarrow v_2, v_1\rightarrow v_3, v_3\rightarrow v_2$ in the corresponding flow network.} \label{1Dfig}
    \end{center}
\end{figure}
\begin{lemma}
Every path from $s$ to $t$ in $N$ is a set cover of $\mathcal{U}$ from $\mathcal{S}$, and vice versa. \label{pathsetcover}
\end{lemma}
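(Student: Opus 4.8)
The plan is to prove the two inclusions separately, exploiting throughout the fact that in the 1-D setting every subset $S_j$ is a contiguous interval $[\mathsf{L}(S_j),\mathsf{R}(S_j)]$ of elements. First I would show that the union of the subsets lying on any $s$-$t$ path is exactly $\mathcal{U}$, and then conversely that the subsets of any set cover can be reduced and ordered into an $s$-$t$ path.

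For the forward direction, consider a path $s \to v_{i_1} \to v_{i_2} \to \cdots \to v_{i_p} \to t$. I would first record the boundary facts forced by the source and sink edges: since $s \to v_{i_1}$ is an edge, $S_{i_1} \in \mathbf{D}_1$, and because element $1$ is leftmost this forces $\mathsf{L}(S_{i_1}) = 1$; symmetrically $v_{i_p} \to t$ forces $\mathsf{R}(S_{i_p}) = n$. The heart of this direction is the invariant that, writing $M_k = \max_{j \le k}\mathsf{R}(S_{i_j})$, the partial union $S_{i_1}\cup\cdots\cup S_{i_k}$ covers the contiguous block $[1,M_k]$. I would prove this by induction: the base case is the boundary fact above, and the inductive step uses that the edge $v_{i_{k-1}}\to v_{i_k}$ means $S_{i_k}$ is a directed neighbour of $S_{i_{k-1}}$ (Definition \ref{directedneighbour}), so $S_{i_{k-1}}\cup S_{i_k}$ covers $[\mathsf{L}(S_{i_{k-1}}),\mathsf{R}(S_{i_k})]$ with no gap; splicing this interval onto the already-covered $[1,M_{k-1}]$ (which overlaps it, since $\mathsf{L}(S_{i_{k-1}})\le M_{k-1}$) extends the contiguous coverage to $[1,M_k]$. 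Evaluating the invariant at $k=p$ and using $\mathsf{R}(S_{i_p})=n$ gives $M_p=n$, so the path covers $[1,n]=\mathcal{U}$ and is therefore a set cover.

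For the reverse direction, given a set cover I would first pass to a minimal subcover $C=\{T_1,\ldots,T_r\}$ (discarding redundant sensors) and sort its members by left endpoint, $\mathsf{L}(T_1)\le\cdots\le\mathsf{L}(T_r)$. The key structural observations are: (a) minimality makes the right endpoints sorted in the same order, $\mathsf{R}(T_1)<\cdots<\mathsf{R}(T_r)$, since $\mathsf{L}(T_a)\le\mathsf{L}(T_b)$ together with $\mathsf{R}(T_a)\ge\mathsf{R}(T_b)$ for $a<b$ would make $T_b\subseteq T_a$ redundant; (b) since $C$ covers $1$ and $n$, we get $\mathsf{L}(T_1)=1$ and $\mathsf{R}(T_r)=n$, yielding the edges $s\to v_{T_1}$ and $v_{T_r}\to t$; and (c) consecutive members are directed neighbours. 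For (c) I would argue that element $\mathsf{R}(T_i)+1$ must be covered by some $T_j$, which by the sorted right endpoints must have $j\ge i+1$, forcing $\mathsf{L}(T_{i+1})\le\mathsf{L}(T_j)\le\mathsf{R}(T_i)+1$; combined with $\mathsf{L}(T_i)\le\mathsf{L}(T_{i+1})$ and the interval structure this shows $T_i\cup T_{i+1}$ covers $[\mathsf{L}(T_i),\mathsf{R}(T_{i+1})]$ contiguously, so $T_{i+1}$ is a directed neighbour of $T_i$ and $v_{T_i}\to v_{T_{i+1}}$ is an edge. Chaining these edges produces the required path $s\to v_{T_1}\to\cdots\to v_{T_r}\to t$.

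I expect the main obstacle to be the reverse direction, and specifically the realization that the correspondence is clean only after reducing to a minimal cover: an arbitrary set cover may contain nested or redundant intervals that cannot be linearly ordered into a directed-neighbour chain. The delicate point is the no-gap argument (c), which is where minimality is truly used -- it is what guarantees that the subset covering $\mathsf{R}(T_i)+1$ begins no later than $\mathsf{R}(T_i)+1$ and is precisely $T_{i+1}$ in the sorted order, so that adjacent intervals meet the directed-neighbour condition rather than leaving a hole. The forward direction is comparatively routine once the right inductive invariant, namely contiguous coverage up to the running maximum right endpoint, is identified.
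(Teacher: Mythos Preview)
Your proposal is correct and, for the forward direction, takes essentially the same inductive approach as the paper: both argue that the subsets along an $s$--$t$ path cover a contiguous block starting at $1$, using the directed-neighbour definition to glue successive intervals, and then use $S_{i_p}\in\mathbf{D}_n$ to conclude the block reaches $n$. Your invariant with the running maximum $M_k=\max_{j\le k}\mathsf{R}(S_{i_j})$ is in fact slightly more careful than the paper's, which tracks only $\mathsf{R}(S_{i_k})$ and thus silently assumes right endpoints never decrease along the path; your formulation avoids that assumption.

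For the reverse direction the paper simply says the argument ``is also similar, and has been omitted for the sake of brevity,'' so here you supply strictly more than the paper does. Your reduction to a minimal subcover, sorting by left endpoint, and showing that minimality forces strictly increasing right endpoints and the no-gap condition $\mathsf{L}(T_{i+1})\le\mathsf{R}(T_i)+1$ is the standard and correct way to extract a directed-neighbour chain from an arbitrary cover. Your observation that minimality is genuinely needed here (an arbitrary cover may contain nested intervals that cannot be chained) is exactly the right point; strictly speaking this shows that every set cover \emph{contains} an $s$--$t$ path rather than \emph{is} one, but that is all that is required for the subsequent max-flow equivalence.
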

\begin{proof}
To prove Lemma \ref{pathsetcover}, we first prove the following claim: For a path $p_i\in N$ denoted by the ordered set of vertices in $V$ represented by $V_i=\{v_{i_1},v_{i_2},\ldots, v_{i_k}\}$, the set of subsets $Q_i=\{S_{i_1}, S_{i_2},\ldots, S_{i_k}\}$ covers all elements from $\sf{L}$$(S_{i_1})$ to $\sf{R}$$(S_{i_k})$. We will prove this claim by induction. 

\emph{Base Case :} For a path defined by $V_i=\{v_{i_1},v_{i_2}\}$, $S_{i_2}$ must be a directed neighbour of $S_{i_1}$. Therefore, by Definition \ref{directedneighbour}, $Q_i=\{S_{i_1}, S_{i_2}\}$ must cover all elements from $\sf{L}$$(S_{i_1})$ to $\sf{R}$$(S_{i_2})$. So the claim is true for the base case.
\emph{Assume} the claim is true for a path defined by $V_i^k=\{v_{i_1},v_{i_2},\ldots v_{i_k}\}$, i.e. $Q_i^k=\{S_{i_1}, S_{i_2},\ldots, S_{i_k}\}$ covers all elements from $\sf{L}$$(S_{i_1})$ to $\sf{R}$$(S_{i_k})$. 
\emph{Induction Step :} Now, for \linebreak $V_i^{k+1}=\{v_{i_1},\ldots v_{i_k}, v_{i_{k+1}}\}$, $S_{i_{k+1}}$ must be a directed neighbour of $S_{i_k}$, and so $Q_i^*=\{S_{i_k},S_{i_{k+1}}\}$ must cover all elements from $\sf{L}$$(S_{i_k})$ to $\sf{R}$$(S_{i_{k+1}})$. Since $\sf{L}$$(S_{i_k})\leq \sf{R}$$(S_{i_k})$, and elements from $\sf{L}$$(S_{i_1})$ to $\sf{R}$$(S_{i_k})$ have already been covered by $Q_i^*$, $Q_i^{k+1}=\{S_{i_1},\ldots S_{i_{k+1}}\}$ must cover all elements from $\sf{L}$$(S_{i_1})$ to $\sf{R}$$(S_{i_{k+1}})$. The induction is therefore complete.

Let us consider the set of paths $P$ such that any path $p\in P$ is from some vertex $v_a$ such that $S_a\in \mathbf{D}_1$ to some vertex $v_b$ such that $S_b\in \mathbf{D}_n$. Since the above induction has been proven to be true, vertices in $p$ must together cover all elements from $\sf{L}$$(S_a)=1$ to $\sf{R}$$(S_b)=n$. Therefore, $p$ forms a set cover on $\mathcal{U}$ using $\mathcal{S}$. Now, any path from $s$ to $t$ must include a path $p\in P$. Therefore, every path from $s$ to $t$ forms a set cover.

The proof for the reverse direction is also similar, and has been omitted for the sake of brevity.
     \end{proof}

\begin{lemma}
The MLCP on $\mathcal{U}$ and $\mathcal{S}$ is equivalent to the maximum flow through $N$ from $s$ to $t$. \label{MLCPmaxflow}
\end{lemma}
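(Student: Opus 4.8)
The plan is to prove the two directions separately, exhibiting a value-preserving correspondence between feasible MLCP solutions and feasible $s$--$t$ flows in $N$, with Lemma \ref{pathsetcover} serving as the dictionary between $s$--$t$ paths and set covers. The crucial observation is that the unit vertex capacity on each $v_i$ plays exactly the role of the unit battery constraint on sensor $S_i$: the total flow routed through $v_i$ equals the total time for which $S_i$ is switched on across all utilized set covers, namely $\sum_j C_{ij} t_j$. Since the standard maximum-flow problem is phrased with edge capacities, I would first record that vertex capacities are reduced to edge capacities by the usual node-splitting device (replace each $v_i$ by an in-copy and an out-copy joined by an edge of capacity $1$), so that $N$ is a bona fide flow network whose maximum flow is well defined.

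For the inequality that the maximum flow lower-bounds the MLCP optimum, I would take a maximum flow $f$ and invoke the flow decomposition theorem: $f$ decomposes into a family of $s$--$t$ path flows together with cycle flows, where the cycle flows carry no net $s$--$t$ value and may be discarded without altering the value of $f$ (discarding them only decreases the flow through vertices, hence preserves feasibility). Each surviving $s$--$t$ path $p_j$ is a set cover $C_j$ by Lemma \ref{pathsetcover}; setting $t_j$ equal to the flow carried on $p_j$ gives $\sum_j t_j$ equal to the value of $f$, while the flow through each $v_i$, which equals $\sum_j C_{ij} t_j$, is at most the capacity $1 = b_i$. This is a feasible MLCP solution whose objective equals the maximum flow.

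For the reverse inequality, I would start from any feasible MLCP solution with utilized set covers $C_j$ and times $t_j$. Because an arbitrary set cover need not trace out a single path, I would first replace each $C_j$ by a minimal sub-cover $C_j' \subseteq C_j$; since $C_{ij}' \le C_{ij}$ for every sensor, the battery constraints $\sum_j C_{ij}' t_j \le \sum_j C_{ij} t_j \le 1$ persist and the objective $\sum_j t_j$ is unchanged. In one dimension a minimal set cover contains no interval inside another, so ordering its intervals by left endpoint makes consecutive intervals directed neighbours with no gap; by Lemma \ref{pathsetcover} it therefore corresponds to an $s$--$t$ path. Routing $t_j$ units of flow along this path for each $j$ produces an $s$--$t$ flow of value $\sum_j t_j$ whose flow through each $v_i$ is $\sum_j C_{ij}' t_j \le 1$, respecting all vertex capacities. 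Hence the maximum flow is at least the MLCP optimum, and combining the two inequalities yields equality.

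I expect the main obstacle to be the reverse direction, specifically the passage from an arbitrary (possibly non-minimal) set cover to a minimal sub-cover guaranteed to form a single $s$--$t$ path. This is precisely where the one-dimensional contiguity hypothesis is essential: one must verify that in a minimal cover the absence of nested intervals forces the left-endpoint ordering to coincide with the right-endpoint ordering, and then argue that any gap between consecutive intervals would leave some element uncovered, contradicting minimality, thereby confirming the directed-neighbour condition at each step. Everything else reduces to routine flow-decomposition bookkeeping together with the equivalence already established in Lemma \ref{pathsetcover}.
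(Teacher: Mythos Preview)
Your argument is correct and follows the same route as the paper: flow decomposition plus Lemma~\ref{pathsetcover} to pass between $s$--$t$ paths and set covers, with the unit vertex capacities playing the role of the unit battery constraints. If anything, you are more careful than the paper in the reverse direction: the paper simply asserts that the two optimization formulations coincide, relying on the ``vice versa'' clause of Lemma~\ref{pathsetcover}, whereas you correctly note that an arbitrary set cover need not be a single $s$--$t$ path and first pass to a minimal sub-cover before invoking the 1-D interval structure to obtain a path---a step the paper elides.
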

\begin{proof}
The maximum flow through $N$ can be split up into flows such that flow $f_i$ exists in path $p_i$ from $s$ to $t$, each of which is a set cover on $\mathcal{U}$ using $\mathcal{S}$ by Lemma \ref{pathsetcover}. Therefore, each flow $f_i$ corresponds to using set cover $C_i$ for time $t_i=f_i$. Since the maximum flow problem can be formulated as $\max\sum_if_i$ subject to the vertex capacity constraints, this is equivalent to $\max\sum_it_i$ subject to the constraints on battery capacity, which is the solution to the MLCP.
     \end{proof}

The maximum flow problem is a classical network problem which is known to have an optimal polynomial time algorithm \cite{west2001introduction}. The MLCP solution can therefore be arrived at in polynomial time.
We now state a Theorem by West from \cite{west2001introduction}, to claim that the solution to the MLCP is equal to the solution of the DSCP.
\begin{theorem}[Integrality Theorem \cite{west2001introduction}]
If all edge or vertex capacities in a network are integers, there is a maximum flow assigning integer flow to each edge. \label{westtheorem}
\end{theorem}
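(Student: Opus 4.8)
The plan is to prove the Integrality Theorem via the Ford--Fulkerson augmenting-path method, exploiting the fact that integer capacities keep every intermediate flow integral. First I would dispose of the vertex-capacity case by the standard node-splitting reduction: replace each capacitated vertex $v$ by two vertices $v_{in}, v_{out}$ joined by a directed edge $v_{in}\to v_{out}$ whose capacity equals the capacity of $v$, redirecting all edges entering $v$ into $v_{in}$ and all edges leaving $v$ out of $v_{out}$. Since the vertex capacities are integers, the introduced edge capacities are integers as well, and flows in the transformed network correspond bijectively to flows in the original one while preserving integrality on every edge. This reduces the theorem to the purely edge-capacitated case.

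For the edge-capacity case I would argue by induction on the number of augmentations. Begin with the identically-zero flow, which is trivially integral. Given any integral feasible flow $f$, form the residual network, whose residual capacities are $c(e)-f(e)$ on forward edges and $f(e)$ on backward edges; both are integers because $c$ and $f$ are integral. If an augmenting path from $s$ to $t$ exists, its bottleneck --- the minimum residual capacity along the path --- is therefore a \emph{positive integer}, and augmenting $f$ along that path by the bottleneck yields a new feasible flow that again takes integer values on every edge. The value of the flow strictly increases by at least $1$ at each augmentation and is bounded above by the total capacity of the edges leaving $s$, so the process terminates after finitely many steps, producing an integral flow $f^{*}$ that admits no augmenting path.

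Finally, I would invoke the classical max-flow min-cut theorem to conclude that a flow admitting no augmenting path is maximum; hence $f^{*}$ is a maximum flow that is integer-valued on every edge, and unravelling the node-splitting reduction returns an integral maximum flow for the original (possibly vertex-capacitated) network. The main obstacle here is not the integrality bookkeeping, which is routine, but the combined termination-and-optimality step: one must establish both that the augmenting-path process halts --- which is guaranteed precisely because the capacities are integral, since with irrational capacities the procedure may fail to terminate --- and that its output is genuinely optimal, the latter resting on the min-cut characterization of maximum flow. Establishing that characterization cleanly (an augmenting-path-free flow saturates some $s$--$t$ cut, whose value upper-bounds every flow) is the substantive part of the argument; integrality then falls out as a bonus of running the algorithm with integer data.
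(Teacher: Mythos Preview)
Your argument is correct and is exactly the standard textbook proof of the Integrality Theorem via Ford--Fulkerson augmenting paths together with the node-splitting reduction for vertex capacities. However, note that the paper does \emph{not} actually prove this theorem: it is stated as a cited result from West~\cite{west2001introduction} and used as a black box to conclude Lemma~\ref{1Dproof}. There is therefore nothing to compare your proposal against in the paper itself; you have simply supplied the proof that the paper outsources to the reference.
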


It is also shown in \cite{west2001introduction} that one can algorithmically arrive at an integral solution to the maximum flow problem in polynomial time.
\begin{lemma}
In 1-D, the MLCP and DSCP solutions are equal. \label{1Dproof}
\end{lemma}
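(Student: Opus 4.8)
The plan is to prove equality by sandwiching the two optima. One inequality is free: any collection of pairwise disjoint set covers is a feasible MLCP solution (operate each cover for unit time), so the DSCP optimum never exceeds the MLCP optimum. It therefore suffices to show that in the 1-D case the MLCP optimum can be \emph{attained} by a family of disjoint set covers, i.e. that DSCP $\geq$ MLCP. Since Lemma \ref{MLCPmaxflow} identifies the MLCP optimum with the value of a maximum $s$--$t$ flow through the network $N$, the goal reduces to extracting from a maximum flow a family of disjoint set covers whose count equals the flow value.

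First I would invoke the Integrality Theorem (Theorem \ref{westtheorem}): because every internal vertex of $N$ carries unit (integer) capacity, there is a maximum flow in which the flow on every edge is an integer, so the flow through every internal vertex is $0$ or $1$. Next I would apply standard flow decomposition to this integral flow. The network $N$ is acyclic, since every directed edge runs from a subset to a \emph{directed neighbour}, which by Definition \ref{directedneighbour} only moves rightward along the line; hence the decomposition yields paths only, with no cycles, each carrying an integer amount of flow that sum to the maximum-flow value $k$.

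The crux of the argument is the interaction between these paths and the unit vertex capacities. If any path in the decomposition carried two or more units, then every internal vertex on it would carry at least two units, contradicting capacity $1$; hence each decomposed path carries exactly one unit, so the decomposition consists of precisely $k$ paths. By the same reasoning, no internal vertex can lie on two distinct paths, so the $k$ paths are pairwise vertex-disjoint. By Lemma \ref{pathsetcover} each such $s$--$t$ path corresponds to a set cover of $\mathcal{U}$ drawn from $\mathcal{S}$, and vertex-disjointness of the paths is exactly disjointness of the corresponding subset collections. We thus obtain $k$ pairwise disjoint set covers, giving DSCP $\geq k =$ MLCP, which combined with the trivial reverse inequality yields DSCP $=$ MLCP.

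I expect the main obstacle to be the step that converts an ``integral flow of value $k$'' into ``$k$ vertex-disjoint unit paths'': one must be careful to rule out cycles in the decomposition (handled by the acyclicity of $N$) and to argue that the unit vertex capacities simultaneously force each decomposed path to carry exactly one unit and force the paths to be pairwise vertex-disjoint, rather than merely recovering $k$ as an edge count. Everything else is bookkeeping built on the equivalences already established in Lemmas \ref{pathsetcover} and \ref{MLCPmaxflow}.
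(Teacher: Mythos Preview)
Your proposal is correct and takes essentially the same approach as the paper's proof: invoke Lemma \ref{MLCPmaxflow} and the Integrality Theorem (Theorem \ref{westtheorem}) to obtain an integral maximum flow, then use the unit vertex capacities to conclude that the flow decomposes into vertex-disjoint $s$--$t$ paths, which are disjoint set covers by Lemma \ref{pathsetcover}. Your write-up is simply more explicit about the sandwich inequality and the flow decomposition; one small caution is that $N$ need not be strictly acyclic (Definition \ref{directedneighbour} uses $\leq$ on left endpoints, so two intervals with the same left endpoint can yield a $2$-cycle), but this is immaterial since any cycle flows in the decomposition carry no $s$--$t$ value and can be discarded.
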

\begin{proof} 
By Lemma \ref{MLCPmaxflow}, we know that the MLCP can be solved as a maximum flow problem on $N$, which has vertex capacities equal to $1$.  By Theorem \ref{westtheorem}, an optimal solution to the maximum flow problem on $N$ involves integer flows. But an integral flow on $N$ ensures that every vertex is present on at most one path from $s$ to $t$. This is because the vertex capacity constraints force the condition that for every vertex, a maximum of one incoming edge and one outgoing edge can have flow $1$. This is effectively a solution to the DSCP, since each path represents a set cover and each vertex a subset. So the DSCP solution is equal to the MLCP solution.
     \end{proof}

In \cite{berman2004power}, it was conjectured that if sensor coverage areas are convex, the solution of the MLCP cannot exceed the solution of the DSCP by more than 50\%. Lemma \ref{1Dproof} tightens this conjecture for the case of 1-dimensional coverage. Another consequence of Lemmas \ref{1Dproof} and \ref{MLCPmaxflow} is the following Theorem.
\begin{theorem}
In 1-D, the DSCP can be optimally solved in polynomial time.
\end{theorem}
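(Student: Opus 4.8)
The plan is to assemble the results already established into a single constructive, polynomial-time procedure. By Lemma~\ref{1Dproof}, in the 1-D case the optimal DSCP value coincides with the optimal MLCP value, so it suffices to exhibit a polynomial-time algorithm that produces a collection of pairwise disjoint set covers whose cardinality equals the MLCP optimum. By Lemma~\ref{MLCPmaxflow}, that optimum in turn equals the value of the maximum flow through the network $N(V,E)$ from $s$ to $t$, where every internal vertex (those in $V\backslash\{s,t\}$) has capacity exactly $1$.

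First I would invoke the Integrality Theorem (Theorem~\ref{westtheorem}): since every vertex capacity in $N$ is the integer $1$, there exists a maximum flow in which every edge carries integer flow, and by the algorithmic result cited from \cite{west2001introduction}, such an integral maximum flow can be computed in polynomial time. This integral max-flow computation is the computational engine of the whole argument, and the integrality is what lets us pass from a fractional MLCP-style solution to a genuinely disjoint collection of covers.

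Next I would decompose the integral flow into $s$--$t$ paths. Because each internal vertex has capacity exactly $1$, it can carry at most one unit of flow and hence lie on at most one flow-carrying path; thus the flow decomposes into vertex-disjoint paths $p_1,\ldots,p_k$, where $k$ equals the flow value. By Lemma~\ref{pathsetcover}, each $p_j$ yields a set cover $C_j$ of $\mathcal{U}$ drawn from $\mathcal{S}$, and since distinct paths share no internal vertex, the subsets appearing in distinct $C_j$ are disjoint. Hence $\{C_1,\ldots,C_k\}$ is a feasible DSCP solution consisting of $k$ pairwise disjoint set covers, and it is produced in polynomial time. For optimality, I would note that $k$ equals the max-flow value, which by Lemma~\ref{MLCPmaxflow} equals the MLCP optimum, which by Lemma~\ref{1Dproof} equals the DSCP optimum; so the constructed solution is in fact optimal.

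The genuinely new content here is slim, since the theorem is essentially a corollary of Lemmas~\ref{MLCPmaxflow} and \ref{1Dproof} together with the Integrality Theorem. The one step that I expect to require the most care is the path decomposition: one must argue precisely that integrality \emph{plus} unit vertex capacities forces the flow to split into \emph{vertex-disjoint} paths (not merely edge-disjoint ones), and that vertex-disjointness of the paths is exactly what translates into pairwise disjointness of the corresponding set covers in $\mathcal{S}$. Verifying that this decomposition itself runs in polynomial time is routine, but it is the hinge that turns the optimal \emph{value} into an optimal \emph{disjoint collection}.
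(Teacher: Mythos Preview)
Your proposal is correct and follows essentially the same approach as the paper: use the integral max-flow algorithm on the network $N$ (Theorem~\ref{westtheorem}), observe that unit vertex capacities force the resulting integral flow to decompose into vertex-disjoint $s$--$t$ paths, and then invoke Lemmas~\ref{pathsetcover}, \ref{MLCPmaxflow}, and~\ref{1Dproof} to conclude that these paths constitute an optimal collection of disjoint set covers. The paper states this as a one-line corollary, while you spell out the path-decomposition step more carefully, but the argument is the same.
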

By using the algorithm which implements Theorem \ref{westtheorem} on the network $N$, the DSCP can be solved in polynomial time.
\subsubsection{Circular Sensor Coverage - 2-D}
\begin{theorem}
There exists a $1+\epsilon$ approximation to the MLCP when sensor coverage areas are circular.
\end{theorem}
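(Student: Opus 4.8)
The plan is to reduce the geometric MLCP to two well-understood building blocks — the Garg-Koenemann width-reduction framework used by Berman et al. \cite{berman2004power} and the polynomial-time approximation scheme for minimum weight geometric set cover \cite{mustafa2010improved} — and then compose their guarantees. First I would recall the structure of the LP in Problem \ref{def_MLCP}: it is a fractional packing LP with one variable $t_j$ per set cover $C_j$, hence exponentially many columns. Berman et al. observe that this LP can be solved approximately by the Garg-Koenemann multiplicative-weights scheme \cite{garg2007faster}, in which each iteration maintains a length (price) $y_i\geq 0$ on every sensor $S_i$ and calls a \emph{column-generation oracle} that must return a set cover of minimum total price $\sum_{S_i\in C}y_i$. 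The quantitative fact from \cite{berman2004power} I would invoke is: if the oracle is a $\rho$-approximation to the minimum weight set cover problem, then for any $\epsilon_1>0$ the scheme returns a feasible MLCP solution of value at least $\text{OPT}/\big((1+\epsilon_1)\rho\big)$ in time polynomial in the input and $1/\epsilon_1$. With the generic greedy $\ln n$ oracle \cite{vazirani2001approximation} this yields their $1+\ln n$ bound; the point here is that in the circular case a far better oracle is available.

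Next I would make the geometric specialization explicit. When every sensor $S_i$ covers exactly the targets lying inside a disk of given center and radius, the column-generation oracle is precisely an instance of \emph{minimum weight geometric set cover}: cover the point set $\mathcal{U}$ by a minimum-weight subfamily of the disks, where disk $S_i$ carries weight $y_i$. This is exactly the problem for which \cite{mustafa2010improved} gives a PTAS: for any $\delta>0$ there is a polynomial-time $(1+\delta)$-approximation for covering points by weighted disks of arbitrary radii. I would therefore instantiate the Garg-Koenemann oracle with this PTAS, so that $\rho=1+\delta$.

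Composing the two bounds gives an approximation ratio of $(1+\epsilon_1)(1+\delta)$ for the MLCP in the circular case. Given a target $\epsilon>0$, I would choose $\epsilon_1$ and $\delta$ small enough that $(1+\epsilon_1)(1+\delta)\leq 1+\epsilon$ (for instance $\epsilon_1=\delta=\epsilon/3$ suffices when $\epsilon\leq 1$), and since both the width-reduction scheme and the PTAS run in time polynomial in the input for fixed $\epsilon_1,\delta$, the overall algorithm is polynomial. This establishes the claimed $1+\epsilon$ approximation.

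I expect the main obstacle to be verifying the \emph{interface} between the two black boxes rather than either box in isolation. Specifically, one must check that an approximate column-generation oracle degrades the Garg-Koenemann guarantee by exactly a multiplicative $\rho$: this is standard, but it needs the oracle to return a genuine cover whose cost is at most $(1+\delta)$ times the optimum price (an over-estimate in the correct direction), which is precisely what the covering PTAS of \cite{mustafa2010improved} provides. One must also confirm that the PTAS tolerates disks of differing radii and arbitrary non-uniform weights $y_i$, both of which hold for the algorithm of \cite{mustafa2010improved}. A secondary point worth a sentence is that the prices $y_i$ produced by Garg-Koenemann are rationals of bounded bit-length, so feeding them as real weights to the PTAS keeps the composed algorithm polynomial-time in the input size.
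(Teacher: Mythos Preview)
Your proposal is correct and follows essentially the same route as the paper: plug the PTAS of \cite{mustafa2010improved} for minimum-weight geometric set cover into the Garg--Koenemann framework of \cite{berman2004power} as the column-generation oracle, obtaining an overall $(1+\epsilon_1)(1+\delta)\leq 1+\epsilon$ approximation. Your write-up is in fact more careful than the paper's about the oracle interface and the parameter bookkeeping, but the argument is the same.
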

\begin{proof}
It was shown in \cite{berman2004power} that the solution of the MLCP through the Garg-Koenemann algorithm has an approximation ratio equal to $(1+\epsilon)X$ , where $X$ represents the approximation ratio of finding the minimum weight set cover. Without geometric restrictions, $X=\ln n$, which was shown in \cite{feige1998threshold} to be the best possible. In the case where sensor coverage areas are circular, however, \cite{mustafa2010improved} shows that $X=(1+\epsilon)$. It therefore follows that by following the steps illustrated in \cite{berman2004power} using the Garg-Koenemann algorithm, one can obtain an approximation ratio of $(1+\epsilon)$ by using the algorithm in \cite{mustafa2010improved} to find the set covers.
     \end{proof}
This is an improvement on the result in \cite{ding2012constant}, which provides a $(4+\epsilon)$ approximation to the MLCP in 2-D.
\begin{figure}[h!]
    \begin{center}
    \psfrag{0}{\footnotesize{$0$}}
    \psfrag{3}{\footnotesize{$3$}}
    \psfrag{6}{\footnotesize{$6$}}
    \psfrag{9}{\footnotesize{$9$}}
    \psfrag{12}{\footnotesize{$12$}}
    \psfrag{15}{\footnotesize{$15$}}
    \psfrag{40}{\footnotesize{$40$}}
    \psfrag{65}{\footnotesize{$65$}}
    \psfrag{90}{\footnotesize{$90$}}
    \psfrag{115}{\footnotesize{$115$}}
    \psfrag{140}{\footnotesize{$140$}}
    \psfrag{165}{\footnotesize{$165$}}
    \psfrag{190}{\footnotesize{$190$}}
    \psfrag{aaaaaaaa}[][][.65]{$\ln n(1+o(1))$}
    \psfrag{eeeeeeee}[][][.65]{$\ln n$\qquad\qquad}
    \psfrag{cccccccc}[][][.7]{Simulation\:\:}
    \psfrag{xlabel}[b][][0.9]{Number of targets $n$}
    \psfrag{ylabel}[][b][0.9]{Approximation Ratio $\rho$}
    \includegraphics[scale=.58]{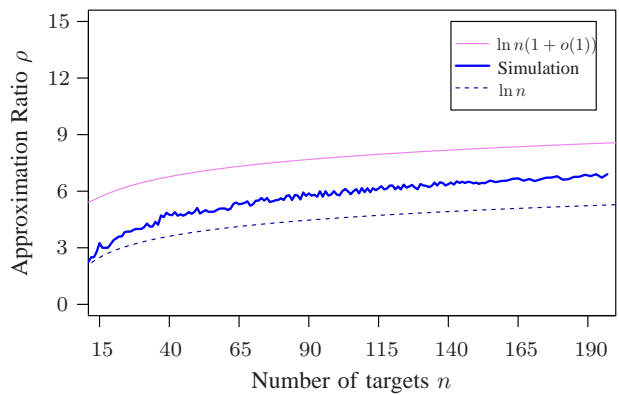}
    \caption{Simulated approximation ratio $\rho$ of the derandomized algorithm for MLCP in Section \ref{lnnapprox}} \label{simulation}
    \end{center}
\end{figure}
\section{Simulations}
We simulated the derandomized algorithm to approximate the MLCP discussed in Section \ref{lnnapprox} on universes $\mathcal{U}$ of different sizes with a large number of subsets of random size picked uniformly randomly from the elements of $\mathcal{U}$. Let the network lifetime yielded by each simulation be $z$. Over multiple simulations, the approximation ratio relative to $F_{min}$, i.e. $\rho=F_{min}/z$ is plotted against $n$ in Fig. \ref{simulation}. It is not possible to plot the approximation ratio relative to the MLCP's optimal solution, since the brute force algorithm takes exponentially longer as $n$ increases. Also plotted in Fig. \ref{simulation} are the functions $\ln n(1+o(1)) = \frac{\ln n}{1-\frac{\ln\ln n +1}{\ln (n\ln n)}}$, which is the actual approximation ratio of the algorithm by (\ref{actualapprox}), and $\ln n$, which is theoretically the minimum possible, by our results in Section \ref{MLCPishard}. Note that our algorithm never does as well as the theoretical bound, since $\rho\rightarrow\ln n$ only as $n \rightarrow \infty$. However, $\rho$ is always bounded between $\ln n$ and $\ln n(1+o(1))$, since $\max \rho=\frac{\ln n}{1-\frac{\ln\ln n +1}{\ln (n\ln n)}}$. The simulated results are therefore concurrent with the theoretical results and reinforce our claim that our algorithm for the MLCP is the best possible polynomial-time algorithm within constant factors.

\section{Conclusions}
We showed that the MLCP is $\ln n$ hard to approximate, thereby showing the optimality of all $\ln n$ approximation algorithms. We then derived a new $\ln n$ approximation algorithm to the MLCP using an approximation to the DSCP. We showed that our algorithm has several advantages over existing approximation algorithms. The simulations of our algorithm support our hardness results. We also showed that the MLCP in the 1-dimensional case could be solved in polynomial time, and that that solution is equal to the DSCP solution. We also improved upon a result in \cite{ding2012constant} to show that a $1+\epsilon$ approximation algorithm for the MLCP exists for any $\epsilon>0$ for the case where sensor coverage areas are circular.
\bibliographystyle{IEEEtran}
\bibliography{research}
\begin{appendices}
\section{DSCP algorithm when $F_{max}=F_{min}=2$} \label{fmaxfminappendix}
If $F_{max}=F_{min}=2$, then $F_i=2$, $\forall$ $i\in \mathcal{U}$. The solution to the DSCP can therefore be either 1 or 2. Note that our algorithm will operate on graph $\mathcal{R}(V,E)$ as defined in Definition \ref{simplegraph}.

We state our result through the following Theorem.
\begin{theorem}
If $F_{max}=F_{min}=2$, the optimal solution to the DSCP can be found in polynomial time through the 2-colouring of $\mathcal{R}(V,E)$.
\end{theorem}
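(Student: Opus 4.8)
The plan is to establish the equivalence ``DSCP $=2$ $\iff$ $\mathcal{R}$ is properly $2$-colourable,'' and then invoke the fact that testing $2$-colourability (bipartiteness) is a classical polynomial-time task. First I would record the two easy bounds that trap the answer. Since $F_i=2$ for every $i$, each element is contained in only two subsets and disjoint covers use disjoint subsets, so at most two disjoint covers can each grab one of them; hence DSCP $\le F_{min}=2$. On the other hand $\mathcal{S}$ itself covers $\mathcal{U}$, so DSCP $\ge 1$. The optimum is therefore $1$ or $2$, and the whole task reduces to deciding which.

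The conceptual bridge is the hypergraph equivalence of Theorem \ref{DSCPpoly}: the DSCP is exactly a polychromatic colouring of the dual hypergraph $\mathcal{H}$ (Definition \ref{Dual}). Because $F_i=2$ for all $i$, every hyperedge of $\mathcal{H}$ has size $2$, so $\mathcal{H}$ is an ordinary (multi)graph whose underlying simple graph is precisely $\mathcal{R}$ (two subsets are adjacent iff they share at least one element). A polychromatic colouring with $2$ colours then forces each size-$2$ hyperedge to receive both colours, i.e.\ any two subsets sharing an element get opposite colours---exactly a proper $2$-colouring of $\mathcal{R}$. Multi-edges are irrelevant to $2$-colourability, so I may work with $\mathcal{R}$ directly, as the theorem is phrased.

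Next I would prove the equivalence in elementary terms. For the forward direction, suppose $\mathcal{R}$ admits a proper $2$-colouring with colour classes $A,B$. Each element $i$ lies in exactly two subsets, which are adjacent in $\mathcal{R}$ and hence land in different classes; so both $A$ and $B$ contain a subset covering $i$. Thus $A$ and $B$ are each a set cover, they are disjoint, and DSCP $=2$. For the reverse direction, suppose two disjoint set covers $C_1,C_2$ exist. Fix an element $i$ with containing subsets $S_a,S_b$; since $C_1$ covers $i$ one of them lies in $C_1$, and since $C_2$ covers $i$ the other must lie in $C_2$ (they cannot coincide, as $C_1\cap C_2=\emptyset$). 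Consequently every non-empty subset is forced into exactly one of $C_1,C_2$, so $\{C_1,C_2\}$ partitions $\mathcal{S}$ and assigns opposite colours to any two subsets sharing an element---a proper $2$-colouring of $\mathcal{R}$.

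Finally, the algorithm runs a standard BFS/DFS bipartiteness test on $\mathcal{R}$ in time linear in its size: if $\mathcal{R}$ is $2$-colourable, output the two colour classes as the two disjoint set covers (DSCP $=2$); otherwise output $\mathcal{S}$ as a single cover (DSCP $=1$). The step I expect to need the most care is the reverse direction, where one must argue that the two disjoint covers genuinely partition \emph{all} the subsets. This relies on first discarding empty subsets and on the $F_i=2$ structure, which leaves each element exactly one ``slot'' per cover; without that observation, the existence of two disjoint covers would not force a proper $2$-colouring of all of $\mathcal{R}$, and the reduction would break.
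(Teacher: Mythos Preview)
Your proposal is correct and follows essentially the same approach as the paper: both establish the equivalence ``$\mathcal{R}$ is $2$-colourable $\iff$ DSCP $=2$'' and then appeal to polynomial-time bipartiteness testing. Your forward direction is identical to the paper's; for the other direction the paper argues the contrapositive (any $2$-colouring of a non-bipartite $\mathcal{R}$ leaves one colour class missing an element), whereas you argue directly that two disjoint covers must in fact partition all non-empty subsets---your version is a little more explicit on this point, and your framing via the dual hypergraph $\mathcal{H}$ (which collapses to $\mathcal{R}$ when every $F_i=2$) is a nice conceptual addition not present in the paper.
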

\begin{proof}
Note that the 2-colouring of any graph can be accomplished in polynomial time. To prove the theorem, we prove that:
\begin{lemma}
Given $F_{max}=F_{min}=2$, graph $\mathcal{R}(V,E)$ is 2-colourable if and only if 2 disjoint set covers exist in $\mathcal{S}$. \label{2colour}
\end{lemma}
\begin{proof}
Let the colours used be red(R) and blue(B). The two colours are said to be \emph{opposite} to each other. We denote the red vertices by set $V_R$ and the blue vertices by set $V_B$. The corresponding sets of subsets are denoted by $\mathcal{S}_R$ and $\mathcal{S}_B$. 

Consider any element $i\in \mathcal{U}$. Recall that $F_i=2$, $\forall$ $i\in \mathcal{U}$. Let $i$ be present in two subsets $S_p$ and $S_q$. The vertices $v_p$ and $v_q$ are therefore connected by an edge. In a valid 2 colouring, $v_p$ and $v_q$ must be coloured using opposite colours. Let us assume without loss of generality that $v_p$ was coloured red and $v_q$ blue. As a consequence, $S_p\in \mathcal{S}_R$ and $S_q\in \mathcal{S}_B$. Now for element $i\in \mathcal{U}$, both $\mathcal{S}_R$ and $\mathcal{S}_B$ must contain subsets  with every $i\in \mathcal{U}$. Both $\mathcal{S}_R$ and $\mathcal{S}_B$ are therefore set covers. Therefore, if $\mathcal{R}$ is 2-colourable graph, there exist 2 disjoint set covers given by $\mathcal{S}_R$ and $\mathcal{S}_B$.

We now show that if $\mathcal{R}$ is not 2-colourable, then only 1 disjoint set cover exists. By virtue of not being 2-colourable, any colouring of $\mathcal{R}$ with red and blue colours must result in some two neighbouring vertices $v_p$ and $v_q$ having the same colour, say red. Now $S_p$ and $S_q$ must share at least one element, say $a$. But both $S_p,S_q\in \mathcal{S}_R$. Since $F_a=2$, no subset in $\mathcal{S}_B$ can therefore contain element $a$. So only one set cover exists, represented by $\mathcal{S}_R$.  
     \end{proof}
In the case where $F_{max}=F_{min}=2$, a successful \hbox{2-colouring} of graph $\mathcal{R}$ returns 2 set covers. If a \hbox{2-colouring} is impossible, only 1 disjoint set cover exists. So we use a 2-colouring algorithm on graph $\mathcal{R}$. 
     \end{proof}

\section{Proof of Lemma \ref{LLLalgo}} \label{LLLproof}
To prove Lemma \ref{LLLalgo}, we first review the LLL itself.
\begin{lemma}[LLL \cite{erdos1975problems}]
Suppose we are given $k$ events $T_{1},T_{2} ,\ldots, T_{k}$ such that the probability of occurrence of any event $\mathcal{P}[T_{i}]<p$, $\forall$ $i$. Let the occurrence of any event be dependent on the occurrence of at most $d$ other events. If $\sf{e}\,$$\times\, p\times(d+1)\leq 1$, then the probability that all those events do not occur $\mathcal{P}(\bigcap \bar{T_{i}})>0$. Essentially, given that condition, there is a non-zero probability that none of the events occur. 
\end{lemma}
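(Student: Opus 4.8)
The plan is to prove the symmetric statement by first establishing the general (asymmetric) form of the Local Lemma and then specializing its free parameters to the uniform value dictated by the hypothesis $\mathsf{e}\,p\,(d+1)\le 1$. For each event $T_i$ let $\Gamma(i)$ denote the set of at most $d$ events on which $T_i$ depends, so that $T_i$ is independent of every collection of events drawn from outside $\Gamma(i)\cup\{i\}$. I would introduce reals $x_i\in[0,1)$ and aim to show that whenever $\mathcal{P}[T_i]\le x_i\prod_{j\in\Gamma(i)}(1-x_j)$ for every $i$, then $\mathcal{P}(\bigcap_i \bar{T_i})\ge \prod_i(1-x_i)>0$.

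The heart of the argument is the inductive claim that for every event $T_i$ and every index set $S\subseteq\{1,\dots,k\}\setminus\{i\}$ one has $\mathcal{P}[T_i\mid \bigcap_{j\in S}\bar{T_j}]\le x_i$, which I would prove by induction on $|S|$. The base case $S=\emptyset$ follows from $\mathcal{P}[T_i]\le x_i$. For the inductive step I would split $S$ into $S_1=S\cap\Gamma(i)$ and $S_2=S\setminus\Gamma(i)$ and write the conditional probability as the ratio $\mathcal{P}[T_i\cap(\bigcap_{j\in S_1}\bar{T_j})\mid\bigcap_{j\in S_2}\bar{T_j}]\big/\mathcal{P}[\bigcap_{j\in S_1}\bar{T_j}\mid\bigcap_{j\in S_2}\bar{T_j}]$. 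For the numerator I would drop the $S_1$ factor and use that $T_i$ is independent of the events indexed by $S_2$, bounding it above by $\mathcal{P}[T_i]\le x_i\prod_{j\in\Gamma(i)}(1-x_j)$. For the denominator I would expand it as a product of conditional factors over $S_1$ and apply the induction hypothesis to each factor (each conditioning on a strictly smaller set), bounding it below by $\prod_{j\in S_1}(1-x_j)$. Cancelling the common factors and using $S_1\subseteq\Gamma(i)$ then leaves at most $x_i$.

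Granting the claim, the chain rule yields $\mathcal{P}(\bigcap_i\bar{T_i})=\prod_{i}\bigl(1-\mathcal{P}[T_i\mid \bar{T_1}\cap\cdots\cap\bar{T_{i-1}}]\bigr)\ge\prod_i(1-x_i)>0$, which is the general lemma. To recover the stated symmetric form I would set $x_i=1/(d+1)$ uniformly. The condition to verify is then $p\le \frac{1}{d+1}\bigl(1-\frac{1}{d+1}\bigr)^{|\Gamma(i)|}$, and since $|\Gamma(i)|\le d$ and $\bigl(1-\frac{1}{d+1}\bigr)^{d}>1/\mathsf{e}$, it suffices that $p\le \frac{1}{\mathsf{e}(d+1)}$, which is exactly the hypothesis $\mathsf{e}\,p\,(d+1)\le 1$. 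Hence $\mathcal{P}(\bigcap_i\bar{T_i})>0$.

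I expect the main obstacle to be the inductive step: correctly partitioning the conditioning set, justifying the independence used to bound the numerator, and ensuring the denominator expansion keeps every conditioning set strictly smaller than $S$ so the induction is well-founded (the $S_1=\emptyset$ case must be handled directly by independence rather than by induction). One must also check that the denominators never vanish, which holds because each conditional factor stays below $1$ by the same claim. The final specialization using $(1-1/(d+1))^d>1/\mathsf{e}$ is routine by comparison with the standard monotone limit defining $\mathsf{e}$.
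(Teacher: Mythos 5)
Your proposal is correct, but note that the paper does not actually prove this lemma: it is stated as a quoted result, cited to Erd\H{o}s and Lov\'asz, and used as a black box in Appendix B, so there is no internal proof to compare against. What you have written is the canonical proof of the general (asymmetric) Local Lemma followed by the standard specialization to the symmetric form, and it is sound throughout: the inductive claim $\mathcal{P}[T_i\mid\bigcap_{j\in S}\bar{T_j}]\le x_i$ is set up properly, the split $S=S_1\cup S_2$ with the numerator bounded via mutual independence of $T_i$ from the events outside $\Gamma(i)$ (you correctly state the hypothesis as independence from every \emph{collection} outside $\Gamma(i)\cup\{i\}$, which is what the numerator step genuinely needs --- pairwise independence would not suffice), the denominator expanded by the chain rule over $S_1$ with each conditioning set strictly smaller than $S$, the $S_1=\emptyset$ case dispatched directly by independence, and the nonvanishing of denominators guaranteed by $x_j<1$. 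The final calibration $x_i=1/(d+1)$ requires $p\le\frac{1}{d+1}\bigl(1-\frac{1}{d+1}\bigr)^{|\Gamma(i)|}$, and since $|\Gamma(i)|\le d$ and $\bigl(1-\frac{1}{d+1}\bigr)^{d}>1/\mathsf{e}$ (equivalently $(1+1/d)^d<\mathsf{e}$), the hypothesis $\mathsf{e}\,p\,(d+1)\le 1$ together with the strict inequality $\mathcal{P}[T_i]<p$ in the statement gives the required bound with room to spare, so the conclusion $\mathcal{P}\bigl(\bigcap_i\bar{T_i}\bigr)\ge\prod_i\bigl(1-\frac{1}{d+1}\bigr)>0$ follows. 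Proving the asymmetric form first is mildly more general than what the paper's application needs, but it is the cleanest route to the stated symmetric version and matches how the result is proved in the literature the paper cites.
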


Independently colour each vertex of $\mathcal{H}$ with one of $\ell=F_{min}/\ln (\sf{e}$$ F_{min}\Delta_{\tau})$ colours. Notice that for edge $e$ and colour $c$,
\begin{equation}
\small
\mathcal{P}[A_{e,c}] \leq (1-1/\ell)^{|V(e)|} \leq (1-1/\ell)^{F_{min}} \leq 1/(\Delta_{\tau}F_{min}\sf{e})
\end{equation}
\normalsize
Now each event $A_{e,c}$ is independent of all but $\ell\times \Delta_{\tau}$ other such events if all edges and all colours are considered, since each hyperedge has a maximum of $\Delta_{\tau}$ other hyperedges as its neighbours.
Now $\ell<F_{min}$, so we can see that the degree of dependence $d$ in the LLL is less than $F_{min}\Delta_{\tau}$. So we set $d=F_{min}\Delta_{\tau}-1$ and $p=1/\sf{e}$$F_{min}\Delta_{\tau}$ in the LLL to prove Lemma \ref{LLLalgo}.

\section{Size of Saved Components}
\label{appendix}
\textbf{Proof of Theorem \ref{savededges} :} In this section, we look at the size of the saved components at the end of each phase of the EXPCover algorithm. Here, \emph{size} of a component refers to the number of hyperedges present in it. To refresh, we first redefine a saved component:
\begin{definition}
A saved component contains Frozen and Neutral hyperedges.
\end{definition}
After each phase of the algorithm, the hyperedges are categorized into 3 sections: Good (correctly coloured), Frozen and Neutral.

We shall calculate the size of the largest saved component formed due to \emph{Frozen} and \emph{Neutral} edges in terms of $n$ and $\Delta_{\tau}$. We divide the proof into two parts:\\

\textbf{Part 1: Frozen vertices :} Let $X(e_{i})$ be an indicator random variable for a hyperedge $e_{i}$ to be frozen. Define \hbox{$q=\ell(1 - 1/\ell)^{F_{min}/p}.$}
\begin{lemma} 
Consider $E = \{e_{1}, e_{2}, \ldots, e_{k}\}$ such that any two edges $e_{i} , e_{j} \in E$, have no common vertices. Then the probability that all hyperedges $e\in E$ are frozen \linebreak \hbox{$P \big( X(e_{1}) = X(e_{2}) = X(e_{3}) =\ldots = X(e_{k}) = 1\big)$} is $q^k$.
\end{lemma}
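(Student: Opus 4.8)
The plan is to prove the bound in two stages: a single-edge estimate $P\big(X(e)=1\big)\le q$ for an arbitrary hyperedge $e$, followed by a conditioning argument that converts the pairwise vertex-disjointness of $e_1,\ldots,e_k$ into (conditional) independence and hence the product bound $q^k$. (Note that since $q=\ell(1-1/\ell)^{F_{min}/p}$ is a specific formula, the content of the claim is the upper bound $q^k$, which is exactly what the subsequent union bound over components in Appendix \ref{appendix} consumes.)

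For the single-edge estimate, I would examine the unique instant in the first phase at which $e$ first reaches $F_{min}/p$ coloured vertices, since this is exactly when the freezing test of conditions (i)--(ii) is applied to $e$; consequently $X(e)=1$ forces at least one of the $\ell$ colours to be absent from these $F_{min}/p$ vertices. Each such vertex carries an independent, uniformly random colour, so a fixed colour $c\in[\ell]$ is absent with probability $\big(1-1/\ell\big)^{F_{min}/p}$. By linearity of expectation the expected number of absent colours equals $\ell\big(1-1/\ell\big)^{F_{min}/p}=q$, and because $X(e)=1$ demands at least one absent colour, Markov's inequality (equivalently the union bound over the $\ell$ colours) yields $P\big(X(e)=1\big)\le q$.

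For the independence stage I would condition on everything external to the blocks $V(e_1),\ldots,V(e_k)$: all colours assigned to vertices outside $\bigcup_i V(e_i)$ together with the freezing events they trigger. Once this external configuration is fixed, the only remaining randomness is the colouring of the in-block vertices, and for each $i$ the value of $X(e_i)$ becomes a deterministic function of the colours on $V(e_i)$ alone---the external configuration determines which vertices of $e_i$ are skipped and the order in which the rest are processed, while whether $e_i$ reaches its threshold missing a colour then depends only on $e_i$'s own independent uniform colours. The per-edge estimate survives this conditioning, since the first $F_{min}/p$ coloured vertices of $e_i$ still carry i.i.d.\ uniform colours, so $P\big(X(e_i)=1\mid\text{external}\big)\le q$. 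As the blocks $V(e_1),\ldots,V(e_k)$ are pairwise disjoint, their colourings are mutually independent, giving conditional independence of the $X(e_i)$; multiplying and then averaging over the external configuration produces
\[
P\big(X(e_1)=\cdots=X(e_k)=1\big)\le q^k .
\]

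The step I expect to be the main obstacle is precisely this independence argument, because the rule that a frozen hyperedge freezes and skips all of its vertices couples hyperedges through shared neighbours: a vertex of $e_i$ may be frozen on account of some hyperedge $f\notin E$ that shares it, so naively the fate of $e_i$ is not a function of $V(e_i)$ alone. Conditioning on the external configuration is what quarantines this coupling---after conditioning, all such cross-effects are absorbed into the conditioning event and the residual randomness splits cleanly across the disjoint blocks. I would take care to verify that conditioning does not distort the uniformity of the in-block colours (it does not, since those colours are never revealed by the external process) and that condition (i) is reached at a well-defined stopping time, so that the first-moment bound applies verbatim inside each block.
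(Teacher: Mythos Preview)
Your two-stage plan---a single-edge estimate followed by an independence argument---is exactly the structure of the paper's own proof, which records $P(X(e_i)=1)=q$ and then simply declares the $X(e_i)$ independent because the $e_i$ share no vertices. You are more careful on both points: you correctly observe that the single-edge statement is really the inequality $P(X(e)=1)\le q$, obtained by a union bound over the $\ell$ colours (the paper writes it as an equality, which it is not), and you explicitly recognise that the freezing rule couples hyperedges through shared neighbours, so that independence requires an argument the paper does not supply.

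That said, your conditioning argument does not actually close the gap you identify. The crucial claim is that once the colours outside $\bigcup_i V(e_i)$ are fixed, ``the external configuration determines which vertices of $e_i$ are skipped,'' making $X(e_i)$ a function of the colours on $V(e_i)$ alone. This fails under mere vertex-disjointness: a third hyperedge $f\notin E$ may contain both a vertex $u\in V(e_i)$ and a vertex $w\in V(e_j)$ for some $j\neq i$; whether $f$ reaches its $F_{min}/p$ threshold missing a colour---and hence whether $u$ is subsequently frozen and skipped---then depends on $c_w$, which lives in block $j$, not in the external configuration. So even after your conditioning, $X(e_i)$ can depend on colours in $V(e_j)$, and the blocks are not conditionally independent. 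A sound repair is a sequential deferred-decisions argument rather than a static conditioning: reveal colours one non-skipped vertex at a time and exploit the fact that the identity of the vertex coloured at step $t$ (in particular which $e_i$ it lies in) is determined by the colours at steps $1,\ldots,t-1$ while $C_t$ itself is a fresh uniform sample; this predictable-labelling structure lets one bound, for each $e_i$ in turn, the conditional probability of freezing by $q$ given the entire history, and then multiply.
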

\begin{proof}
A hyperedge $e_{i}$ will be frozen, by definition, when $F_{min}/p$ of its vertices are coloured and the hyperedge lacks at least one of the $\ell$ colours. The probability of such an event is 
\begin{equation}
P(X(e_{i})) = \ell(1 - 1/\ell)^{F_{min}/p} = q. \label{frozen1}
\end{equation}
By definition of the set $E$, $e_{i},e_{j}\in E$ have no common vertices. Hence, $X(e_{i})$ and $X(e_{j})$ are independent. Thus \hbox{$P (X(e_{i}) = X(e_{j}) = 1) = P(X(e_{i} = 1)) \cdot P(X(e_{j} = 1))$}. Since this independence holds for all the edges in $E$, we obtain the following.
\begin{equation*}
\begin{split}
P \big(X(e_{1})=\ldots=X(e_{k})= 1\big)&=\prod_{i\in \{1,2,\ldots,k\}}P\big(X(e_{i})=1\big)\\
					&= q^k.
\end{split}
\end{equation*}
     \end{proof}

\textbf{Part 2 : Saved edges :}
\begin{definition}
Distance between two hyperedges $e_{i}$ and $e_{j}$ is defined to be the length of the shortest path between them.
\end{definition}
\begin{definition}
A set of hyperedges is said to be a \emph{3-separated} set if the distance between any two hyperedges in the set is at least \emph{3}.
\end{definition}

We denote by $Y(e_i)$, an indicator random variable, the \emph{status} of hyperedge $e_i$. $Y(e_i)=1$ if $e_i$ is saved, and $Y(e_i)=0$ otherwise.
\begin{lemma}
Let $T =  \{e_{1}, e_{2}, \ldots, e_{k}\}$ be a 3-separated set. Then the probability that all hyperedges $e\in T$ are saved $P \big(Y(e_{1}) = Y(e_{2}) = \ldots = Y(e_{k}) = 1\big)$ is at most $(2\Delta_{\tau}q)^k$. \label{lemma3}
\end{lemma}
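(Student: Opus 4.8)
The plan is to prove this by a charging (witness) argument in the style of Beck's algorithmic local lemma, reducing the event that all of $e_1,\ldots,e_k$ are \emph{saved} to an event about a collection of \emph{frozen} hyperedges, to which the preceding lemma (stating $P(X(e_1)=\cdots=X(e_k)=1)=q^k$ for pairwise vertex-disjoint hyperedges) can be applied.

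First I would show that every saved hyperedge carries a frozen \emph{witness} in its closed neighbourhood. If $e_i$ is Frozen, it is its own witness. If $e_i$ is Neutral, then it was never frozen and never contained all $\ell$ colours; since the colours present in a hyperedge only accumulate during a phase, this forces $e_i$ to have finished the phase with strictly fewer than $F_{min}/p$ coloured vertices, for otherwise, at the moment it first reached $F_{min}/p$ coloured vertices it would have been declared Good or Frozen. As $|e_i|\ge F_{min}$, at least one vertex $v\in e_i$ is therefore left uncoloured, and an uncoloured vertex at the end of the phase must have been skipped because it was frozen, i.e.\ it lies in some frozen hyperedge $f_i$. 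Since $v\in e_i\cap f_i$, the witness $f_i$ is a neighbour of $e_i$. In either case $f_i\in\{e_i\}\cup N(e_i)$, where $N(e_i)$ denotes the set of hyperedges sharing a vertex with $e_i$, and $|N(e_i)|\le\Delta_{\tau}$.

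Next I would apply a union bound over all possible witness assignments $(f_1,\ldots,f_k)$. Because each $f_i$ ranges over the at most $1+\Delta_{\tau}\le 2\Delta_{\tau}$ hyperedges in $\{e_i\}\cup N(e_i)$, there are at most $(2\Delta_{\tau})^k$ such assignments, and the event $\{Y(e_1)=\cdots=Y(e_k)=1\}$ is contained in the union over these assignments of the events $\{X(f_1)=\cdots=X(f_k)=1\}$. For each fixed assignment I would invoke the key structural fact: since $T$ is $3$-separated, the witnesses $f_1,\ldots,f_k$ are pairwise vertex-disjoint, so that the preceding lemma gives $P(X(f_1)=\cdots=X(f_k)=1)\le q^k$. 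Summing over the at most $(2\Delta_{\tau})^k$ assignments then yields the claimed bound $(2\Delta_{\tau}q)^k$.

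The main obstacle I anticipate is precisely this disjointness step, i.e.\ converting the $3$-separation of the original hyperedges $e_i$ into pairwise vertex-disjointness of the witnesses $f_i$, which live in the closed neighbourhoods $\{e_i\}\cup N(e_i)$ rather than at the $e_i$ themselves. This requires a careful computation in the hypergraph path metric: each witness sits at distance at most $1$ from its hyperedge, so a vertex shared by $f_i$ and $f_j$ would concatenate into a path certifying that $e_i$ and $e_j$ are closer than the separation permits, contradicting $3$-separation. Making this quantitatively tight is the delicate point, since the constant in the separation must be matched exactly against the distance-$1$ spread of the witnesses in order for the independence lemma to apply; once that is settled, the remainder is only the counting of witness assignments and the already-established product formula for frozen events.
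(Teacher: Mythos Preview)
Your proposal is correct and is the standard Beck-style witness argument; the paper's proof is a compressed variant of the same idea. The paper first uses your closed-neighbourhood observation to bound a single event,
\[
P\big(Y(e)=1\big)\ \le\ P\big(X(e)=1\big)+\sum_{e'\in N(e)}P\big(X(e')=1\big)\ \le\ q+\Delta_\tau q\ <\ 2\Delta_\tau q,
\]
and then, rather than enumerating witness tuples, simply asserts that for a $3$-separated family the events $Y(e_i)=1$ are mutually independent (because $3$-separation forces the closed neighbourhoods $\{e_i\}\cup N(e_i)$ to be pairwise disjoint) and multiplies. Your route instead union-bounds over the at most $(2\Delta_\tau)^k$ witness tuples and invokes the preceding lemma on each tuple of frozen events; this has the virtue of appealing only to the already-proved product formula for the $X$'s rather than to an independence claim for the more intricate $Y$'s.

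The obstacle you isolate---that witnesses sitting at distance $\le 1$ from $3$-separated hyperedges must themselves be pairwise vertex-disjoint---is exactly the crux in both versions, and the paper treats it no more carefully than you do: it dispatches it in one line (``the hyperedges in $T$ do not have common neighbours because they are $3$-separated''). That sentence gives disjointness of the closed neighbourhoods as \emph{sets of hyperedges}, which suffices for the paper's independence phrasing; your phrasing additionally needs that two distinct witnesses $f_i,f_j$ share no vertex, and your proposed path-concatenation argument is precisely what is required. So your anticipated difficulty is real but is at the same level of delicacy in both proofs, and once it is settled the rest is, as you say, just counting.
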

\begin{proof}
Any hyperedge $e$ in the hypergraph $\mathcal{H}$ will be saved if \textbf{(i)} it is frozen or \textbf{(ii)} it is rendered neutral because its vertices are frozen by neighbouring hyperedges. Note that a non-frozen edge will not be a saved edge if none of its neighbours are frozen. Let $N(e)$ be the set of edges which share vertices with $e$, i.e. the neighbours of $e$. We see that
\begin{equation}
P\big(Y(e) = 1 \big) \leq P\big(X(e) = 1 \big) + \sideset{}{} \sum_{e' \in N(e)} P\big(X(e') = 1 \big). \label{saved1}
\end{equation}
Using the fact that $\max|N(e)|=\Delta_{\tau}$ and equation \eqref{frozen1}, we can bound \eqref{saved1}
\begin{equation}
P\big(Y(e) = 1 \big) \leq \: q + \Delta_{\tau}q \: < 2\Delta_{\tau}q. \label{saved2}
\end{equation}
The hyperedges in $T$ do not have common neighbours because they are \emph{3-separated}. Thus, $\forall$ $e_i,e_j\in T$, we observe that $Y\big(e_{i}\big)$ and $Y\big(e_{j}\big)$ are independent. Using equation \eqref{saved2} and the independence of $Y\big(e_i)$ and $Y(e_j)$, $\forall$ $i\neq j$, we conclude that
\begin{equation*}
\begin{split}
P \big(Y(e_{1})=\ldots=Y(e_{k})=1\big)&=\prod_{i\in \{1,2,\ldots,k\}}P\big(Y(e_{i}=1)\big)\\
	       				&\leq (2\Delta_{\tau}q)^k.
\end{split}
\end{equation*}

     \end{proof}

So we have shown that the probability that a 3-separated set of size $k$ is saved is less than $(2\Delta_{\tau}q)^{k}$. We will now look at the number of 3-separated sets of size $k$ which can exist in our graph.
\begin{lemma} 
For the hypergraph $\mathcal{H}$ with $n$ hyperedges, the number of \emph{3-separated} sets of size $k$ is at most $n(4\Delta^3)^k$. \label{numberof3sep}
\end{lemma}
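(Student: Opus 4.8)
The plan is to recast the combinatorial question on the dual hypergraph $\mathcal{H}$ as a counting problem on its \emph{intersection graph} $\mathcal{G}$: put one vertex of $\mathcal{G}$ for each of the $n$ hyperedges of $\mathcal{H}$, and join two such vertices precisely when the corresponding hyperedges share a vertex of $\mathcal{H}$. By the definition of expansiveness every vertex of $\mathcal{G}$ has degree at most $\Delta_\tau$, and the distance between two hyperedges equals their distance in $\mathcal{G}$; a $3$-separated set is then a set of vertices of $\mathcal{G}$ with pairwise distance at least $3$. The first elementary estimate I would record is a bound on balls: starting from a fixed hyperedge $e$ and following a path of length three, there are at most $\Delta_\tau$ choices at each of the three steps, so at most $\Delta_\tau^3$ hyperedges lie at distance exactly $3$ from $e$ (and $O(\Delta_\tau^3)$ lie within distance $3$).

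The core of the argument is an \emph{encoding} of each $3$-separated set $T$ of size $k$ by a rooted tree, so that the standard bound on the number of bounded-degree subtrees can be applied. Assuming, as the EXPCover analysis permits, that $\mathcal{H}$ is connected, I would thread the members of $T$ together through short paths of $\mathcal{G}$ to build a spanning tree $\mathcal{T}$ on $T$ in the auxiliary graph whose edges join hyperedges at distance at most $3$; because $T$ is $3$-separated, each such link corresponds to a genuine length-$3$ excursion in $\mathcal{G}$. Each $3$-separated set is then specified by (i) a choice of root, for which there are at most $n$ possibilities, and (ii) the shape of $\mathcal{T}$ together with, at every step of a depth-first traversal, the identity of the next member among the at most $\Delta_\tau^3$ hyperedges reachable by a length-$3$ step. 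The number of distinct tree shapes on $k$ nodes is bounded by a Catalan-type factor $4^k$, and the per-step choice contributes a factor $\Delta_\tau^3$, so the two together give $(4\Delta_\tau^3)^k$; multiplying by the $n$ choices of root yields the claimed bound $n(4\Delta_\tau^3)^k$.

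The step I expect to be the real obstacle is item (ii): justifying cleanly that every $3$-separated set arising in the analysis can indeed be organized into a connected tree whose edges are length-$3$ links, rather than a set whose members are scattered at arbitrarily large distances. This is where connectivity of the ambient component is essential — one must argue that the $3$-separated sets that actually matter are those extracted from a single connected saved component (a maximal such set covers the component up to distance two), and that these can be threaded together by bounded-length paths. The remaining work is bookkeeping: confirming that the Catalan factor and the $\Delta_\tau^3$ branching count can be absorbed into the single constant $4$ in front of $\Delta_\tau^3$, and checking that any overcounting (each set possibly encoded by several rooted traversals) only strengthens the upper bound. Once established, this estimate feeds directly into a union bound with Lemma \ref{lemma3}: the expected number of saved $3$-separated sets of size $k$ is at most $n(4\Delta_\tau^3)^k(2\Delta_\tau q)^k$, which is precisely what drives the bound on the size of the largest saved component in Theorem \ref{savededges}.
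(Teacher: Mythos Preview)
Your proposal is correct and follows essentially the same route as the paper: encode a $3$-separated set by a rooted tree, bound the number of tree shapes by a Catalan-type factor $4^k$, allow $n$ choices for the root and $\Delta_\tau^3$ choices at each successive step, and multiply. The paper's proof is in fact terser than yours and glosses over exactly the point you flag as the obstacle---namely, that an arbitrary $3$-separated set need not be threadable into a tree with length-$3$ links---so your observation that only the $3$-separated sets extracted from a single connected saved component are relevant (and that these \emph{can} be so threaded) is a genuine clarification of the argument rather than a deviation from it.
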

\begin{proof}
The number of distinct shapes of a tree with $k$ hyperedges is at most $4^{k-1}$. This is obtained by ordering the hyperedges and assigning two flag bits to each hyperedge \textbf{(i)} if it has same parent as the previous one (or not) and \textbf{(ii)} if it has a child (or not). This gives rise to $4^{k-1}$ possibilities.\\
For the hypergraph $\mathcal{H}$, there are $n$ ways of choosing the root and for each successive hyperedge there are at most $\Delta_{\tau}$ choices. For a 3-separated set, this leaves a maximum of $\Delta_{\tau}^3$ choices. Thus, the number of distinct \emph{3-separated} sets of size $k$ is at most $n(4\Delta_{\tau}^3)^k$.

     \end{proof}

Combining Lemma \ref{lemma3} and Lemma \ref{numberof3sep}, we see that the probability that any 3-separated set of size $k$ exists is less than $(2\Delta_{\tau}q)^{k} \times n(4\Delta_{\tau}^3)^k=n(8\Delta_{\tau}^4q)^k$.\\
Recall that $q = \ell(1-1/\ell)^{k}$ and $\ell = F_{min}/c\ln\Delta_{\tau}$. We choose $c$ such that $q = \Delta_{\tau}^{-5}/8$ and choose $k = \frac{\log(2n)}{\log(\Delta_{\tau})}$ and thereby ensure that
\begin{equation}
n(8\Delta_{\tau}^4q)^k < 1/2. \label{final3}
\end{equation}

We complete the proof by showing that connected hypergraphs of a particular size must have a 3-separated set of size k. From (\ref{final3}), the probability that the size of a saved component is at most this size is greater than $1/2$.
\begin{lemma}
Any connected hypergraph with $k\Delta_{\tau}^3$ hyperedges \textbf{must} have a \emph{3-separated set of size $k$}.
\end{lemma}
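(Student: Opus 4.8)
The plan is to build the 3-separated set greedily and control how many hyperedges each choice rules out. First I would fix notation: for a hyperedge $e$, let $B_2(e)$ be the set of hyperedges at distance at most $2$ from $e$, including $e$ itself. The guiding observation is that, to keep a set 3-separated, adding $e$ to it forbids precisely the hyperedges of $B_2(e)$ from ever joining, since any hyperedge at distance $1$ or $2$ from $e$ would violate the requirement that every pairwise distance be at least $3$; removing $B_2(e)$ is therefore exactly the right amount of deletion, neither too weak nor too aggressive.

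The one genuine computation is bounding $|B_2(e)|$. Since every hyperedge has at most $\Delta_{\tau}$ neighbours (this is the definition of $\Delta_{\tau}$ as the maximum hyperedge-degree of $\mathcal{H}$), $e$ has at most $\Delta_{\tau}$ hyperedges at distance $1$, and each of those contributes at most $\Delta_{\tau}$ further hyperedges at distance $2$, giving at most $\Delta_{\tau}^2$ of them. Hence
\[
|B_2(e)| \;\leq\; 1 + \Delta_{\tau} + \Delta_{\tau}^2 \;\leq\; \Delta_{\tau}^3,
\]
where the last inequality holds for $\Delta_{\tau}\geq 2$ (the case $\Delta_{\tau}=1$ being immediate).

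With this bound in hand I would run the greedy procedure: maintain a pool of available hyperedges, initially all $k\Delta_{\tau}^3$ of them; repeatedly pick any available hyperedge $e$, place it in the set $T$, and delete all of $B_2(e)$ from the pool. Each iteration adds one hyperedge to $T$ while removing at most $\Delta_{\tau}^3$ hyperedges from the pool, so the procedure survives at least $k\Delta_{\tau}^3/\Delta_{\tau}^3 = k$ iterations before exhaustion, yielding $|T|\geq k$. The invariant that every remaining pool element sits at distance at least $3$ from all previously chosen hyperedges (guaranteed because each chosen hyperedge's $B_2$-ball was deleted) ensures $T$ is genuinely 3-separated, completing the argument.

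I do not expect a serious obstacle here; the work is essentially bookkeeping, with the only points needing care being the verification that deleting $B_2(e)$ preserves 3-separation and the elementary inequality $1+\Delta_{\tau}+\Delta_{\tau}^2\leq\Delta_{\tau}^3$. It is worth remarking that connectivity of $\mathcal{H}$ is not actually used by the counting; it serves only to guarantee finite distances, and the greedy bound goes through for any hypergraph possessing at least $k\Delta_{\tau}^3$ hyperedges.
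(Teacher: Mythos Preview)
Your proposal is correct and follows essentially the same approach as the paper: a greedy packing argument based on the observation that each hyperedge has at most $\Delta_{\tau}^3$ hyperedges within distance $2$ of it, so repeatedly selecting a hyperedge and deleting its radius-$2$ ball yields at least $k$ selections. Your write-up is in fact more explicit than the paper's very terse proof, and your remark that connectivity is not actually needed for the counting is a valid observation.
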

\begin{proof}
Each hyperedge can have a maximum of $\Delta_{\tau}$ hyperedge neighbours. It can therefore only have $\Delta_{\tau}^3$ hyperedges at a distance $3$ from it. The proof therefore follows. 
     \end{proof}

Note : We can arbitrarily increase the probability of limiting the size of components formed by the \emph{saved} vertices to be at most $k\Delta_{\tau}^3$, where $ k = \BigO{\ln n}$

\end{appendices}

\end{document}